\newcommand{\cev}[1]{\reflectbox{$\vec{\reflectbox{\!$#1$}}$}}
\newcommand{\true}{\mbox{\it true}}
\newcommand{\false}{\mbox{\it false}}
\newtheorem{theorem}{Theorem}[section]
\newtheorem{lemma}{Lemma}[section]
\newtheorem{corollary}{Corollary}[section]
\newtheorem{claim}{Claim}[section]
\newcommand{\qed}{\hfill $\Box$ \bigbreak}
\newenvironment{proof}{\noindent {\bf Proof.}}{\qed}
\newenvironment{proofclaim}{\noindent{\bf Proof of the claim.}}{\hfill$\star$}
\newcommand*{\bigdot}{\accentset{\mbox{\large\bfseries .}}}
\title{{\bf Can Like Attract Like? A Study of Homonymous Gathering in Networks}}
\author{
St\'ephane Devismes\thanks{
MIS Lab., Universit\'{e} de Picardie Jules Verne, France. E-mail: stephane.devismes@u-picardie.fr}
\and
Yoann Dieudonn\'{e}\thanks{
MIS Lab., Universit\'{e} de Picardie Jules Verne, France. E-mail: yoann.dieudonne@u-picardie.fr}
\and
Arnaud Labourel\thanks{
Aix Marseille Univ, CNRS, LIS, Marseille, France. Email: arnaud.labourel@lis-lab.fr}
}
\date{ }
\begin{document}
\sloppy
\maketitle
\begin{abstract}
A team of mobile agents, starting from distinct nodes of a network modeled as an undirected graph, have to meet at the same node and simultaneously declare that they all met. Agents execute the same algorithm, which they start when activated by an adversary or when an agent enter their initial node. While executing their algorithm, agents move from node to node by traversing edges of the network in synchronous rounds. Their perceptions and interactions are always strictly local: they have no visibility beyond their current node and can communicate only with agents occupying the same node. This task, known as \emph{gathering}, is one of the most fundamental problems in distributed mobile systems. Over the past decades, numerous gathering algorithms have been designed, with a particular focus on minimizing their time complexity, i.e., the worst-case number of rounds between the start of the earliest agent and the completion of the task. To solve gathering deterministically, a common widespread assumption is that each agent initially has an integer ID, called \emph{label}, only known to itself and that is distinct from those of all other agents. Labels play a crucial role in breaking possible symmetries, which, when left unresolved, may make gathering impossible. But must all labels be pairwise distinct to guarantee deterministic gathering?

In this paper, we conduct a deep investigation of this question by considering a context in which each agent applies a deterministic algorithm and has a label that may be shared with one or more other agents called \emph{homonyms}. A team $L$ of mobile agents, represented as the multiset of its labels, is said to be \emph{gatherable} if, for every possible initial setting of $L$, there exists an algorithm, even dedicated to that setting, that solves gathering. Our contribution is threefold. First, we give a full characterization of the gatherable teams. Second, we design an algorithm that gathers all of them in poly$(n,\log\lambda)$ time, where $n$ (resp. $\lambda$) is the order of the graph (resp. the smallest label in the team). This algorithm requires the agents to initially share only $O(\log \log \log \mu)$ bits of common knowledge, where $\mu$ is the multiplicity index of the team, i.e., the largest label multiplicity in $L$. Lastly, we show this dependency is almost optimal in the precise sense that no algorithm can gather every gatherable team in poly$(n,\log\lambda)$ time, with initially $o(\log \log \log \mu)$ bits of common knowledge.

As a by-product, we get the first deterministic poly$(n,\log\lambda)$-time algorithm that requires \emph{no} common knowledge to gather \emph{any} team in the classical case where all agent labels are pairwise distinct. While this was known to be achievable for teams of exactly two agents, extending it to teams of arbitrary size\textemdash under the same time and knowledge constraints\textemdash faced a major obstacle inherently absent in the two-agent scenario: that of termination detection. The synchronization techniques that enable us to overcome this obstacle may be of independent interest, as termination detection is a key issue in distributed systems.

\vspace{2ex}
\noindent {\bf Keywords:} gathering, deterministic algorithm, mobile agent.

\end{abstract}

\section{Introduction}

\subsection{Background}
Gathering a group of mobile agents is a fundamental problem that has received significant attention in the field of distributed mobile systems. This interest largely stems from the fact that gathering often serves as a critical building block for more complex collaborative tasks. Hence, understanding the theoretical limits and algorithmic foundations of gathering has direct implications for a broad range of distributed problems. 

Many studies assume that agents are equipped with unique
identifiers\textemdash often referred to as labels in the
literature\textemdash to help break symmetry and coordinate their
actions. However, this assumption may not always be realistic or
desirable. In practice, generating and maintaining globally unique
labels/IDs can be costly, especially in large-scale or dynamically
formed systems. Furthermore, in privacy-preserving scenarios, agents may deliberately blur their identity by revealing it only partially, thereby leading to the possible occurrence of homonyms.
Although probabilistic approaches can, in principle, bypass the need for labels, they typically fail to provide absolute guarantees in terms of complexity and/or accuracy of the resulting solutions, which may be an unacceptable compromise in safety-critical or mission-critical applications. 

This motivates a deep investigation of deterministic gathering in a more general context where some agents may share the same label, which we define as \emph{homonymous gathering}. Naturally, the presence of homonyms introduces new challenges in symmetry breaking and coordination, making gathering significantly harder. Our goal is to precisely characterize when gathering remains feasible in such a context, and whether it can be done efficiently.

\subsection{Model and Definitions}
\label{sec:model}
Given a team of $k\geq 2$ agents, an adversary selects a simple undirected connected graph $G=(V,E)$ with $n\geq k$ nodes and places the agents on $k$ distinct nodes of $G$. As is standard in the literature on gathering (cf. the survey \cite{Pelc19}), we make the following two assumptions regarding the labeling of the nodes and edges of $G$. The first assumption is that edges incident to a node $v$ are locally ordered with a fixed port numbering (chosen by the adversary) ranging from $0$ to ${\tt deg}(v)-1$ where ${\tt deg}(v)$ is the degree of $v$ (and thus, each edge has exactly two port numbers, one for each of its endpoints). The second assumption is that nodes are anonymous, i.e., they do not have any kind of labels or identifiers allowing them to be distinguished from one another. Given a node $v$ of $G$ and a port $i$ at this node, ${\tt succ}(v,i)$ will denote the node that is reached by taking port $i$ from node $v$. Given two adjacent nodes $v$ and $v'$ of $G$, ${\tt port}(v,v')$ will denote the port that must be taken from $v$ to reach $v'$.


Time is discretized into an infinite sequence of rounds $r=1,2,3,\ldots$. At the
beginning, every agent is said to be dormant and while an agent is dormant, it does
nothing. The adversary wakes up some of the agents at possibly different rounds. A dormant agent, not woken up by the adversary, is woken up
by the first agent that visits its starting node, if such an agent exists. Initially, each agent $X$ is provided with three elements that are known to $X$: a deterministic algorithm $\mathcal{A}$, some common knowledge $\mathcal{K}$ (detailed thereafter), and a label $\ell_X$ corresponding to a positive integer. The algorithm (resp. the common knowledge) is the same for all agents. However, for any two agents $X$ and $X'$, labels $\ell_X$ and $\ell_{X'}$ may be identical or not. When $\ell_X=\ell_{X'}$, we say that agents $X$ and $X'$ are \emph{homonyms}. 

In the rest of this paper, the team of mobile agents will be designated by the multiset $L$ of its agent's labels and $IS(L)$ will designate the initial setting that results from the choices of the adversary. Precisely, $IS(L)$ can be viewed as the underlying graph $G$ in which each node $v$ that is initially occupied by an agent $X$ is colored by a couple $(\ell_X,t)$ where $t$ is the round in which the adversary wants to wake up $X$ or $\perp$ if such a round does not exist. As mentioned above, common knowledge $\mathcal{K}$ is a piece of information, called \emph{advice}, that is initially given to all agents. Following the framework of algorithms with advice, see, e.g.,\cite{FraigniaudIP06,GlacetMP17,GorainMP23,MillerP15,NisseS09}, this information will correspond to a single binary string (possibly empty) that is computed by an oracle $\mathcal{O}$ knowing $IS(L)$. More formally, $\mathcal{O}$ is defined as a function that associates a string from $\{0,1\}^*$ to every possible initial setting of any team $L$. The length of the binary string initially provided by the oracle will be called the size of $\mathcal{K}$.

An agent starts executing algorithm $\mathcal{A}$ in the round of its wake-up. When executing algorithm $\mathcal{A}$ at a node $v$ in a round $r$, an agent $X$ first performs some computations depending on $\mathcal{K}$ and its memory in round $r$ that is denoted by $M_X(r)$ and that is formalized later (its memory includes its label $\ell_X$). Then, after the computations, agent $X$ chooses to do exactly one of the following things:

\begin{enumerate}
\item Declaring termination.
\item Waiting at its current node till the end of round $r$.
\item Moving to an adjacent node by a chosen port $p$.
\end{enumerate}

In the first case, the agent definitively stops the execution of $\mathcal{A}$ at node $v$ in round $r$. Otherwise, once the agent has processed the moving or waiting instruction, it is in round $r+1$ (at node $v$ in the second case, or at node ${\tt succ}(v,p)$ in the third case) and it continues the execution of algorithm $\mathcal{A}$. To ensure that the reference to the node occupied by an agent in any given round is unambiguous, we precisely assume that if the agent decides to take a port $p$ when located at node $v$ in round $r$, then it remains at node $v$ till the end of round $r$ and it is at node ${\tt succ}(v,p)$ at the beginning of round $r+1$.

When entering a node, an agent learns the port of entry. When located at a node $v$, a non-dormant agent sees the degree of $v$ and the other agents
located at that node, and it can access all information they currently hold.
(Agents' visions and interactions are thus always strictly local.) Also, if agents cross each other on an edge, traversing it simultaneously in different directions, they do not notice this fact.

An important notion used throughout this paper is the memory $M_X(r)$ of an agent $X$ in a given round $r$. Intuitively, it corresponds to all information it has collected until round $r$, both by navigating in the graph and by exchanging information with other agents (we assume that the agents have no limitations on the amount of memory they can use). We formalize $M_X(r)$ as a triple. Precisely, if, in round $r$, agent $X$ is dormant, or wakes up but is alone at its current node, then $M_X(r)$ is the triple $(\ell_X,\perp,\perp)$. Otherwise, we necessarily have $r>1$ and $M_X(r)$ is the triple $(M_X(r-1), A_{X}(r),B_X(r))$, where the terms $A_{X}(r)$ and $B_X(r)$ are defined as follows. Suppose that agent $X$ is at node $v$ in round $r$ and let $\{X_1,X_2,\ldots,X_h\}$ be the (possibly empty) set of the agents met by $X$ at $v$ in round $r$.

\begin{itemize}
\item If $X$ enters node $v$ in round $r$, then $X$ was at some node $u$ adjacent to $v$ in round $r-1$ and $A_{X}(r)=({\tt deg}(v),{\tt port}(u,v),{\tt port}(v,u))$. Otherwise $A_{X}(r)=({\tt deg}(v),\perp,\perp)$.
\item $B_X(r)$=$\emptyset$ if agent $X$ is alone in round $r$. Otherwise, $B_X(r)$=\\$\{(A_{X_1}(r),M_{X_1}(r-1)),(A_{X_2}(r),M_{X_2}(r-1)),\ldots,(A_{X_h}(r),M_{X_h}(r-1))\}$.
\end{itemize}


The rank of $M_X(r)$, denoted by {\tt rk}$(M_X(r))$, is equal to $0$ if $M_X(r)=(\ell_X,\perp,\perp)$, $1+{\tt rk}(M_X(r-1))$ otherwise.

Under the above-described context, the agents are assigned the task of {\em gathering}: there must exist a round in which the agents are at the same node, simultaneously declare termination and stop.

A team $L$ is said to be {\em gatherable} if for every initial setting $IS(L)$, there exists an algorithm $\mathcal{A}$, even specifically dedicated to $IS(L)$ (and thus regardless of $\mathcal{K}$) that allows the agents to solve gathering.\footnote{If $\mathcal{A}$ is specifically designed for $IS(L)$, then all necessary knowledge about $IS(L)$ can be ``embedded directly'' within the algorithm itself, making the use of $\mathcal{K}$ unnecessary in this case.} An algorithm $\mathcal{A}$ is said to be an {\em $\mathcal{O}$-universal gathering algorithm} if for every initial setting $IS(L)$ of any gatherable team $L$ it solves the task of gathering with $\mathcal{K}=\mathcal{O}(IS(L))$. The {\em time complexity of $\mathcal{A}$ with $\mathcal{O}$} (i.e., when the agents are initially provided with advice from oracle $\mathcal{O}$) is the worst-case number of rounds since the wake-up of the earliest agent until all agents declare termination.

We now finish the presentation of this section by giving some additional conventions. The smallest label in $L$ will be denoted by $\lambda$ and, for any label $\ell$, the length of its binary representation will be denoted by $|\ell|$. The number of occurrences of a given label in $L$ will be referred to as the multiplicity of that label.
Finally, the greatest common divisor of all the multiplicities of the labels in $L$ will be called the {\em symmetry index} of $L$, while the largest multiplicity of any label in $L$  will be called the {\em multiplicity index} of $L$.

\subsection{Related Work}
\label{sec:rela}
Historically, gathering was first investigated in the special case of exactly two agents, a variant known as \emph{rendezvous}. This foundational variant was popularized by Schelling in \emph{The Strategy of Conflict} \cite{Schelling}, a seminal book in game theory and international relations, where he examined how two individuals might coordinate to meet at a common location without prior communication. This led to the introduction of focal points—salient solutions toward which individuals naturally gravitate in the absence of explicit coordination. Since then, rendezvous and its generalization\textemdash gathering\textemdash have been widely studied across very diverse fields such as economics, sociology, and, most relevant to this work, distributed mobile systems. In such systems, gathering has emerged as a fundamental coordination task, especially in environments where agents operate under strict limitations on communication and perception. This is due to the fact that, in such environments, the ability to regroup may become a necessary first step toward enabling more sophisticated forms of collaboration and collective decision-making that would otherwise be infeasible.

Even when focusing exclusively on distributed mobile systems, the gathering problem has given rise to a vast amount of research. One reason is that the problem offers many meaningful parameters to explore: the type of environment in which the agents evolve, whether algorithms are deterministic or randomized, whether the agents move in synchronous rounds or not, etc. Since our work is naturally aligned with research on deterministic gathering in graphs where agents operate in synchronous rounds, we will concentrate essentially on this setting in the remainder of the subsection. However, for the reader interested in gaining broader insight into the problem, we refer to \cite{AgmonP06,CieliebakFPS12,CzyzowiczPL12,IzumiSKIDWY12}, which investigate asynchronous gathering in the plane or in graphs. Regarding randomized rendezvous, a good starting point is to go through \cite{Alpern02,Alpern03,KranakisKR06}.

The literature on deterministic gathering in graphs, as typically studied in the synchronous setting and thoroughly surveyed in~\cite{Pelc19}, spans a wide range of variants shaped by the capabilities granted to the agents. Two particularly influential dimensions concern the ability to leave traces of their passage (e.g., by dropping markers or writing on local node memories) and the extent of their visibility, which may range from a global view of the network to a strictly local view limited to their current node. Perhaps the most compelling case arises when both dimensions are taken in their minimal forms\textemdash namely, no traces and node-local view only. Within this constrained and challenging context, a substantial body of work has focused on minimizing the time (i.e., the number of rounds) to achieve gathering. The core model adopted in most of these contributions is the one we formalize in Section~\ref{sec:model}, restricted to the special case where all agents have distinct labels (to break potential symmetries)\textemdash though sometimes with a variation in how agents are introduced into the network. Specifically, the literature considered two main scenarios in this regard: a \emph{grounded scenario} and a \emph{dropped-in scenario}. In the grounded scenario, all agents are initially present in the network, but start executing the algorithm when activated by an adversary or as soon as an agent enters their initial node (this potentially allows the algorithm to exert some control over the delays between agents' starting times). By contrast, in the dropped-in scenario, agents are not necessarily present at the outset: each of them appears in the network at a time and location entirely determined by an adversary and begins its execution in the round of its arrival. In this latter case, the delays between the starting times cannot be influenced by the algorithm executed by the agents.

Early works primarily focused on the dropped-in scenario, among which the key contributions are~\cite{DessmarkFKP06,KowalskiM08,Ta-ShmaZ14}. In \cite{DessmarkFKP06}, the authors provide a rendezvous algorithm (thus working for two agents only) whose time complexity is polynomial in the order $n$ of the graph, in $|\lambda|$\textemdash where $\lambda$ is the smaller of the two labels\textemdash and in the delay $\tau$ between the agents' starting times, even when time is counted from the start of the later agent. While a dependence on $\tau$ is naturally expected in the dropped-in scenario when measuring time from the start of the earlier agent (as the second agent's arrival is beyond the algorithm's control), it was unclear whether such a dependence remained unavoidable when counting time from the arrival of the later agent. This led the authors of \cite{DessmarkFKP06} to ask whether a polynomial-time solution depending only on $n$ and $\lambda$ could be achieved in that case. In \cite{KowalskiM08}, a positive answer is brought to this question, with an algorithm ensuring rendezvous in time $\mathcal{O}(|\lambda|^3 + \mbox{~~}n^{15}\log^{12} n)$ from the start of the later agent. This was subsequently improved in \cite{Ta-ShmaZ14},
which reduced the complexity to $\tilde{\mathcal{O}}(n^5 |\lambda|)$. On the negative side,~\cite{DessmarkFKP06} proves that any rendezvous algorithm necessarily has time complexity at least polynomial in $n$ and $|\lambda|$ (specifically, at least $\Omega(n |\lambda|)$), even when $\tau = 0$.

To deal with more than two agents, the authors of \cite{KowalskiM08} observe that it is sufficient for agents to execute a rendezvous algorithm with the following simple adaptation called the \emph{sticking-together strategy}: whenever two or more agents meet, the agent with the smallest label continues its execution as if nothing had happened, while the others start following it, mimicking its actions from that point on. This way the agents will indeed eventually end up together. However, detecting all the agents are together in the dropped-in scenario requires the agents to know the exact number of agents that will be introduced by the adversary, regardless of the strategy being employed. This \emph{de facto} rules out the possibility of designing gathering algorithms using no common knowledge, unless one entirely gives up on termination detection\textemdash a significant limitation in distributed systems. This fundamental barrier disappears in the grounded scenario, which has emerged as the prevailing one for studying gathering, as reflected in the large body of work dedicated to it (see, e.g.,~\cite{BouchardDD16,BouchardDL22,BouchardDP23,DieudonnePP12,HiroseNOI22,HiroseNOI24,MollaMM23,Pelc18,SaxenaM24}). Here, gathering \emph{any} number of agents without relying on any common knowledge becomes possible (see, e.g.,~\cite{BouchardDP23}). Moreover, since the delays between agents' starting times can now be algorithmically controlled to some extent, the reference time for rendezvous and gathering is taken from the start of the earliest agent. Interestingly, in this new reference time, the poly$(n,|\lambda|)$-time lower bound established by~\cite{DessmarkFKP06} in the dropped-in scenario remain valid in the grounded context, since it was proven even when  $\tau = 0$. Yet, within this same context, and despite the numerous studies of the literature, no algorithm was known to guarantee gathering in polynomial time in $n$ and $|\lambda|$, without assuming any common knowledge, except in the special case where teams are restricted to exactly two agents (cf. \cite{BouchardDPP18}).\footnote{More precisely, the authors of~\cite{BouchardDPP18} analyze rendezvous under the assumption that agents may traverse edges at different speeds, and design an algorithm running in time polynomial in the order of the graph, the logarithm of the smaller label, and the maximum edge traversal time. In the particular case where every edge traversal time equals~1, this yields a rendezvous algorithm polynomial in the first two parameters.} In fact, the only existing gathering algorithms achieving such a complexity for teams of arbitrary size required the knowledge of a polynomial upper bound on the graph order (see~\cite{BouchardDL22,BouchardDP23}).

Another intriguing aspect of the literature concerns the assumptions
made about the agents' labeling. Specifically, gathering has only been
studied at two opposite ends of the spectrum: either when all agents
have pairwise distinct labels (as in the papers cited in the previous
paragraph), or, much less frequently, when all are completely
anonymous, equivalent to all agents sharing the same label (as
in~\cite{DieudonneP16,PelcY19}). Labels play a crucial role in
breaking potential symmetries, which, if left unresolved, can often
render gathering impossible, particularly in the anonymous
case. However, requiring all labels to be distinct may be
unnecessarily strong. Thus, the question of
what can be achieved for gathering under such an intermediate
assumption, where some degree of homonymy is allowed, remains entirely
open.

To conclude, it is worth noting that the impact of homonymy has already been investigated, yet in the very different context of static message-passing networks, for classical problems such as leader election and sorting \cite{YK89c,FKKLS04j,CGM12j,DP16j}.

\subsection{Our Results}

In this paper, we conduct a thorough study of the gathering problem in a standard model from the literature, which we extend to allow the presence of homonyms. Actually, in light of the discussions in the related work section, we are particularly motivated by the following open questions:

\begin{itemize}
\item Under what conditions is a team of agents\textemdash with possible homonyms\textemdash gatherable?

\item Does there exist an algorithm allowing to gather any gatherable team in poly$(n,|\lambda|)$-time, without any initial common knowledge?

\item And if not, what is the minimum amount of common knowledge required?
\end{itemize}

With these questions in mind, we make the following three contributions. First, we give a full characterization of the gatherable teams. Second, we design an algorithm that gathers all of them in poly$(n,|\lambda|)$ time. This algorithm requires the agents to initially share only $O(\log \log \log \mu)$ bits of common knowledge, where $\mu$ is the multiplicity index of the team. And finally, we show this dependency is almost optimal in the precise sense that no algorithm can gather every gatherable team in poly$(n,|\lambda|)$ time, with initially $o(\log \log \log \mu)$ bits of common knowledge.

As an important by-product, we also obtain the first poly$(n,|\lambda|)$-time algorithm that requires \emph{no} common knowledge to gather \emph{any} team in the classical scenario where all agent labels are pairwise distinct. While this result has been known to be achievable for the special case of exactly two agents, extending it to teams of arbitrary size has remained a fundamental open problem until now (cf. Section~\ref{sec:rela}). The core difficulty of the extension lies in detecting the termination of the task. In a two-agent context, termination detection is trivial, as it inherently occurs at the time of the first meeting. However, when dealing with teams of arbitrary size, this is no longer the case: it becomes a fundamental challenge within the constraints of poly$(n,|\lambda|)$ time and no common knowledge (especially without any upper bound on the team size). Our ability to overcome this challenge is a direct consequence of the synchronization techniques we developed, in the general case with homonyms, to make the agents determine whether the gathering is done or not. These techniques are likely to be of independent interest, as termination detection is a key issue in distributed systems.

\section{Preliminaries}
\label{sec:preli}
In this section, we introduce some additional definitions and basic procedures that will be used throughout this paper.

Let us start by introducing the order $\prec$ on the set $\mathcal{M}$ of all possible memories. In particular, this order is made to have the heredity property stated in Lemma~\ref{lem:prec}. Let $f$ be any injective function from $\mathcal{M}$ to $\mathbb{N}$ (such a function necessarily exists since $\mathcal{M}$ is recursively enumerable). Given two memories $M_1=(M'_1,*,*)$ and $M_2=(M'_2,*,*)$, we have $M_1 \prec M_2$ if: 

\begin{itemize}
\item ${\tt rk}(M_1) < {\tt rk}(M_2)$; 
\item Or ${\tt rk}(M_1) = {\tt rk}(M_2)=0$ and $f(M_1)<f(M_2)$;  
\item Or ${\tt rk}(M_1) = {\tt rk}(M_2)>0$, $M'_1=M'_2$ (resp. $M'_1\ne M'_2$) and $f(M_1)<f(M_2)$ (resp. $M'_1\prec M'_2$).
\end{itemize}

Note that $\prec$ induces a strict total order on $\mathcal{M}$. 

When $M_1 \prec M_2$ (resp. $M_2 \prec M_1$), we will say that $M_1$ is \emph{smaller} (resp. \emph{larger}) than $M_2$. Below is a lemma related to memories.

\begin{lemma}
\label{lem:prec}
Let $A$ and $B$ be two agents that are non dormant in a round $r$. If $M_A(r)\prec M_B(r)$, then $M_A(r')\prec M_B(r')$ in every round $r' \geq r$.
\end{lemma}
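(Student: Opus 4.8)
The plan is to prove this by induction on $r'$, showing that the strict ordering $M_A(r') \prec M_B(r')$ is preserved from one round to the next. The base case $r' = r$ holds by hypothesis, so the crux is the inductive step: assuming $M_A(r') \prec M_B(r')$, establish $M_A(r'+1) \prec M_B(r'+1)$. First I would observe that because both agents are non-dormant at round $r$, they remain non-dormant in all subsequent rounds, so their memories keep accumulating and strictly increase in rank each round: we have ${\tt rk}(M_X(r'+1)) = 1 + {\tt rk}(M_X(r'))$ for $X \in \{A, B\}$. The key structural fact is that for a non-dormant agent, $M_X(r'+1) = (M_X(r'), A_X(r'+1), B_X(r'+1))$, so the memory at round $r'+1$ has the memory at round $r'$ as its first component.

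The main argument splits according to which clause of the definition of $\prec$ certifies $M_A(r') \prec M_B(r')$. The cleanest case is when the ordering comes from a rank difference: if ${\tt rk}(M_A(r')) < {\tt rk}(M_B(r'))$, then since both ranks increase by exactly one, we still have ${\tt rk}(M_A(r'+1)) < {\tt rk}(M_B(r'+1))$, and the first clause of the definition immediately gives $M_A(r'+1) \prec M_B(r'+1)$. The more delicate case is when the ranks are equal, say ${\tt rk}(M_A(r')) = {\tt rk}(M_B(r')) = t$. Then the ranks at round $r'+1$ are both $t+1 > 0$, so we must invoke the third clause, with $M'_1 = M_A(r')$ and $M'_2 = M_B(r')$ being precisely the respective ``prefix'' memories. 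By the inductive hypothesis these prefixes are already ordered by $\prec$, i.e. $M_A(r') \prec M_B(r')$, and in particular they are distinct. Hence we land in the sub-case $M'_1 \ne M'_2$, for which the third clause defines the ordering to be exactly $M'_1 \prec M'_2$; this is the inductive hypothesis itself, so $M_A(r'+1) \prec M_B(r'+1)$ follows.

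I would take care to confirm that ranks indeed stay equal whenever they started equal, and strictly ordered whenever they started strictly ordered, so that the two cases above are exhaustive and the second clause (the rank-zero case with $f$) never needs to be re-entered once agents are non-dormant—since at round $r'+1 \ge r+1$ both ranks are strictly positive, the rank-zero clause is vacuous. This is the reason the order $\prec$ was designed with rank as its primary key and with the recursive prefix comparison: the definition is engineered precisely so that appending new components $(A_X, B_X)$ to both memories cannot disturb an ordering already witnessed by the prefixes.

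The step I expect to be the main obstacle is verifying that the ordering is genuinely governed by the prefix comparison in the equal-rank case, rather than by the tie-breaking $f$-comparison. Since the inductive hypothesis guarantees $M_A(r') \ne M_B(r')$ (they are strictly $\prec$-ordered, and $\prec$ is a strict total order), the third clause unambiguously routes us to the $M'_1 \prec M'_2$ sub-case and avoids any dependence on the values $f(M_A(r'+1))$ and $f(M_B(r'+1))$, which could otherwise point the wrong way. Making this routing explicit, and noting that it is exactly the heredity property the order was built to enjoy, is where the proof must be stated with care.
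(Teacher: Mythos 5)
Your proof is correct and follows essentially the same route as the paper's: induction on $r'$, with the base case immediate, the unequal-rank case handled by noting both ranks advance by one per round, and the equal-rank case resolved by the third clause of $\prec$ via the strict (hence distinct) prefix ordering supplied by the inductive hypothesis. The only cosmetic difference is that you argue the rank case forward from round $r'$ to $r'+1$ while the paper traces it back to round $r$; the substance is identical.
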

\begin{proof} 
  Assume two agents $A$ and $B$  that are non dormant in a round $r$ and $M_A(r)\prec M_B(r)$. We now show by induction on $r'$ that  $M_A(r')\prec M_B(r')$ in every round $r' \geq r$. The base case $r'=r$ is trivial.

  Assume now that $r' > r$. Since $A$ and $B$ are non dormant in a
  round $r$ and $r' > r$, we have both ${\tt rk}(M_A(r')) > 0$ and ${\tt rk}(M_B(r')) > 0$. So, we can let $M_A(r')=(M_A(r'-1),*,*)$ and
  $M_B(r')=(M_B(r'-1),*,*)$. Then, we have two cases:
  \begin{itemize}
  \item ${\tt rk}(M_A(r')) \neq {\tt rk}(M_B(r'))$.

    So, ${\tt rk}(M_A(r)) \neq {\tt rk}(M_B(r))$ too. Now, in this case, $M_A(r)\prec M_B(r)$ implies that ${\tt rk}(M_A(r)) < {\tt rk}(M_B(r))$, which in turn implies ${\tt rk}(M_A(r')) < {\tt rk}(M_B(r'))$ and thus $M_A(r')\prec M_B(r')$.

  \item ${\tt rk}(M_A(r')) = {\tt rk}(M_B(r')) = k$. We already know that
    $k > 0$. Moreover, by induction
    hypothesis, $M_A(r'-1)\prec M_B(r'-1)$, which in particular means that
    $M_A(r'-1) \neq M_B(r'-1)$ (indeed, $\prec$ is a strict total
    order). We have then $M_A(r')\prec M_B(r')$ by definition of
    $\prec$.
    \end{itemize}
 Hence, the induction holds and the lemma follows. 
\end{proof}

A sequence $\pi$ of $2k$ non-negative integers
$(y_1,y_2,\ldots,y_{2k})$ is said to be a {\em path} from a node $u$
iff (1) $k=0$ or (2) $k\geq 1$, $y_1 < {\tt deg}(u)$, ${\tt port}({\tt
  succ}(u,y_1),u)=y_2$ and $(y_3,\ldots,y_{2k})$ is a path from node
${\tt succ}(u,y_1)$. The length of $\pi$, corresponding to its number
of integers, will be denoted by $|\pi|$, and its i$th$ integer will be
referred to as $\pi[i]$. For every integer $0\leq i \leq k$,
$\pi[1,2i]$ will denote the path corresponding to the prefix of $\pi$
of length $2i$. Finally, the concatenation of the path $\pi_1$ from
$u$ to $v$ and the path $\pi_2$ from $v$ to $w$ will be a path from
$u$ to $w$ denoted by $\pi_1\pi_2$. This notion of path in particular
used in the proof of the next lemma.

\begin{lemma}
\label{lem:diff}
Let $A$ and $B$ be two agents sharing the same node $v$ in a round $r$. We have $M_A(r)\ne M_B(r)$.
\end{lemma}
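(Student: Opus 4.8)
The plan is to argue by contradiction: assume that the two (distinct) agents $A$ and $B$ occupy the same node $v$ in round $r$ and that $M_A(r)=M_B(r)$. Since the adversary places agents on pairwise distinct nodes, $A$ and $B$ have distinct starting nodes $s_A\neq s_B$, so it suffices to derive $s_A=s_B$ from the assumed equality of memories.

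First I would record the structural consequences of $M_A(r)=M_B(r)$. By definition of {\tt rk}, the two memories have the same rank $k$, and, peeling the nested triples layer by layer, the two agents exhibit the very same sequence of $A$-components along their last $k$ active rounds. In particular, for each such round the memory reveals whether the agent moved or merely waited (the second and third entries of an $A$-component equal $\perp$ exactly when the agent did not enter a new node that round), and, whenever the agent moved from some node $u$ to its current node $w$, it reveals both ${\tt port}(u,w)$ and the back-port ${\tt port}(w,u)$. Equality of memories therefore forces $A$ and $B$ to have the identical move/wait pattern and the identical port labels along these rounds; in particular they have performed the same number of moves.

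The heart of the argument is to turn this data into a path in the sense defined just above the lemma. Reading off the recorded moves (and skipping the waiting rounds) yields, for each agent $X\in\{A,B\}$, a path $\pi_X$ from its starting node $s_X$ whose endpoint is the common node $v$; by the previous paragraph $\pi_A=\pi_B=:\pi$. I would then retrace $\pi$ backwards starting from $v$: since each recorded step stores the back-port ${\tt port}(w,u)$, the node visited just before the last move is obtained by applying ${\tt succ}$ to $v$ and that back-port, and iterating this deterministic process over the shared back-port sequence reconstructs the whole reversed walk. Because ${\tt succ}$ is a function of a node and a port, and because both reversed walks begin at $v$ and use the same sequence of back-ports, they visit exactly the same nodes and in particular terminate at the same node; that terminal node is $s_A$ for $A$ and $s_B$ for $B$, whence $s_A=s_B$, contradicting $s_A\neq s_B$.

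It then remains to dispose of the base case $k=0$, i.e. $M_A(r)=M_B(r)=(\ell_\bullet,\perp,\perp)$: here each agent is either dormant or freshly woken while alone, so neither has yet left its starting node and $v=s_A=s_B$, again a contradiction. The step I expect to require the most care is the middle one, namely making rigorous that equal memories yield literally the same path: one must reconcile possibly different wake-up times, dormant prefixes, and interleaved waiting rounds. These are all absorbed into the rank-$0$ core and into the $\perp$-valued $A$-components, so that rank equality already guarantees the two agents have been active for the same number of rounds and share the same move/wait skeleton; once this is established, the deterministic uniqueness of a path's endpoint, read directly from its inductive definition, closes the argument.
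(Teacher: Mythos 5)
Your proof is correct and follows essentially the same route as the paper's: the paper's own (two-sentence) argument also considers the port-sequence path followed by each agent since wake-up, observes that these paths are encoded in the memories, and concludes that two agents starting at distinct nodes and meeting at $v$ cannot have followed the same path. You merely present it in contrapositive form and make explicit the backward-retracing step (recovering the starting node from $v$ and the recorded back-ports) that the paper leaves implicit.
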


\begin{proof}
Since $A$ and $B$ are initially placed at different nodes, $r > 1$ and
both robots are awake at round $r$. Let $P_A$ (resp.  $P_B$) the path followed by $A$ (resp. $B$)
since its wake-up.
Since $A$ and $B$ are located at the same node at round $r$ but
initially placed at different nodes, $P_A$ necessarily differs from
$P_B$, which implies that $M_A(r)\ne M_B(r)$.
\end{proof}

We will use $\log$ to denote the binary logarithm. We will say that an execution $\mathcal{E}$ of a sequence of instructions lasts $T$ rounds if the number of edges it prescribes to traverse plus the number of rounds it prescribes to wait is equal to $T$. Moreover, if $\mathcal{E}$ starts in a round $r$, we will say that $\mathcal{E}$ \emph{is completed in} (resp. \emph{is completed by}) round $r+T$ if $\mathcal{E}$ lasts exactly $T$ rounds (resp. at most $T$ rounds).


We now present three procedures that will be used as building blocks to design our algorithm given in Section~\ref{sec:pos}. They require no common knowledge $\mathcal{K}$.

The first procedure, due to Ta-Shma and Zwick \cite{Ta-ShmaZ14}, allows to gather a team made of exactly two agents $X$ and $X'$ having distinct labels. We will call it ${\tt TZ}(\ell)$, where $\ell$ is the label of the executing agent. In \cite{Ta-ShmaZ14}, the authors show that if $X$ and $X'$ start executing procedure ${\tt TZ}$ in possibly different rounds, then they will meet after at most $(n\cdot\min(|\ell_X|,|\ell_{X'}|))^{\alpha}$ rounds since the start of the later agent, for some integer constant $\alpha\geq 2$. 


The second procedure is based on universal exploration sequences (UXS)
and is a corollary of the result of Reingold \cite{Reingold08}. It is
denoted by ${\tt EXPLO}(N)$, where $N$ is any integer such that
$N\geq2$. The execution of ${\tt EXPLO}(N)$ by some agent $X$ makes it
traverse edges of the underlying graph $G$, without any waiting period. This execution terminates
within at most $N^{\beta}$ rounds, where $\beta\geq 2$ is some
constant integer. Moreover, if $N$ is an upper bound on $n$ (i.e., the
number of nodes in $G$), we have the guarantee that each node of $G$
will be visited at least once by $X$ during its execution of ${\tt
  EXPLO}(N)$, regardless of its starting node.


The third procedure, explicitly described in \cite{BouchardDD16} (and
derived from a proof given in \cite{ChalopinDK10}), allows an agent to
visit all nodes of $G$ assuming there is a fixed token at its starting
node and no token anywhere else. As our model does not assume the
existence of tokens, this routine will be later adapted to our needs
by making certain agents play the role of token. We call this
procedure ${\tt EST}$, for \emph{exploration with a stationary
token}. The execution of ${\tt EST}$ in $G$ lasts at most $8n^5$
rounds and at the end of the execution the exploring agent is back to its
starting node. Since this will be important for our purposes, we give
below an almost verbatim description of this procedure given in
\cite{BouchardDD16}.


When executing procedure ${\tt EST}$, the agent actually constructs
and stores a {\em port-labeled $BFS$ tree} $T$ which is an appropriate
representation of a BFS spanning tree of the underlying graph $G$. By
appropriate, we mean a representation that enables the agent to traverse an
actual BFS spanning tree of $G$.  In more detail, there will be a correspondance
between the nodes of $T$ and the nodes of $G$: we will denote by
$c(u)$ the node of $T$ corresponding to the node $u$ of $G$.
$T$ will be rooted at node $c(v)$ such that $v$ is the node where the
agent is located at the beginning of the procedure ${\tt EST}$.
Moreover, we will ensure that the incoming and outgoing ports along
the simple path from the root $c(v)$ to any node $c(u)$ in $T$ will
describe a shortest path from $v$ to $u$ in $G$.


Nodes, edges and port numbers are added to the port-labeled $BFS$ tree as
follows. At the beginning, the agent is at node $v$: it initializes
$T$ to the root node $c(v)$ and then makes the {\em process} of
$c(v)$. The process of any node $c(w)$ of $T$ consists in checking
each neighbor $x$ of $w$ in order to determine whether a corresponding
node $c(x)$ has to be added to the tree or not. When starting the
process of $c(w)$, the agent, which is located at $w$, takes port $0$
to move to the neighbor ${\tt succ}(w,0)$ and then checks this
neighbor.

When a neighbor $x$ of $w$ is being checked, the agent verifies
whether a node corresponding to $x$ has been previously added to
$T$. To do this, for each node $c$ of $T$, the agent considers the
simple path $P$ from node $c$ to the root $c(v)$ in $T$ and then,
starting from node $x$, tries to successively take the following
$\frac{|P|}{2}$ ports in the underlying graph $G$:
$P[1],P[3],P[5]\ldots,P[|P|-1]$. The agent gets the guarantee that $c$
cannot correspond to $x$ if after taking $0\leq i \leq \frac{|P|}{2}$
ports, one of these three conditions is satisfied:

\begin{enumerate}
\item $0<i$ and the port by which the agent enters its current node is not $P[2i]$.
\item $i<\frac{|P|}{2}$ and there is no port $P[2i+1]$ at its current node. 
\item $i=\frac{|P|}{2}$ and the current node does not contain the token.
\end{enumerate}

If each of these conditions is never satisfied, then the agent leaves
(resp. enters) successively by ports $P[1],P[3],P[5]\ldots,P[|P|-1]$
(resp. $P[2],P[4],P[6]\ldots,P[|P|]$) and then meets the token: the
agent thus gets the guarantee that $c$ corresponds to $x$.


As soon as the agent gets the guarantee that $c$ corresponds to $x$ or
not (after taking $0\leq i \leq \frac{|P|}{2}$ ports), it goes back to
node $x$. If $i>0$, it does so by successively taking ports
$P[2i],P[2i-2],P[2i-4]\ldots,P[2]$; otherwise it is already at $x$.
After that, if $c$ does not correspond to $x$, but it remains at least
one node $c'$ of $T$ for which the agent does not know yet whether $c'$
corresponds to $x$ or not: it thus repeats with $c'$ what it has done
for $c$.  Eventually, either the agent determines that some node $c$ of $T$
corresponds to $x$ and $x$ is \emph{rejected}, or no node of $T$
corresponds to $x$ and $x$ is \emph{admitted}.  If $x$ ends up to be
admitted, a corresponding node $c(x)$ is attached to $c(w)$ in $T$ by
an edge whose port number at $c(w)$ (resp. $c(x)$) is equal to ${\tt
  port}(w,x)$ (resp. ${\tt port}(x,w)$).


Once node $x$ is admitted or rejected, the agent goes back to $w$ by
taking ${\tt port}(x,w)$ and then moves to an unchecked neighbor of
$w$, if any, following the increasing order of w's port numbers. When
all neighbors of $w$ have been checked, the process of $c(w)$ is
completed and the agent goes back to $v$ using the simple path from
$c(w)$ to $c(v)$ in $T$. Let $\mathcal{Y}$ be the set of all paths in
the port-labeled $BFS$ tree from its root to a node that has not yet
been processed. If $\mathcal{Y}=\emptyset$ then procedure ${\tt EST}$
is completed (and the port-labeled $BFS$ tree spans the underlying
graph). Otherwise, the agent considers the lexicographically smallest
path of $\mathcal{Y}$, which leads from the root $c(v)$ to some
unprocessed node $c(s)$ in $T$, and follows this path in $G$ to get at
node $s$. When getting at this node, the agent starts the process of
$c(s)$.

\section{Characterization of the Gatherable Teams}
\label{sec:charac}

The aim of this section is to prove Theorem~\ref{theo:charac}, which provides a full characterization of the gatherable teams.

\begin{theorem}
\label{theo:charac}
A team $L$ is gatherable if and only if its symmetry index is $1$.
\end{theorem}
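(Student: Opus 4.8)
The plan is to route both implications through one structural fact about automorphisms of anonymous port-labeled graphs. Encode an initial setting $IS(L)$ as the port-labeled graph $G$ together with the coloring that assigns to each occupied node the pair (label, wake-up time) of the agent sitting there and leaves the remaining nodes blank. Call a bijection $\sigma:V\to V$ an \emph{automorphism} of $IS(L)$ if it preserves adjacency, preserves port numbers (equivalently ${\tt succ}(\sigma(v),i)=\sigma({\tt succ}(v,i))$ for all $v,i$), and preserves the coloring; let $\Sigma$ be the group of all such automorphisms. Since $G$ is connected, a BFS-propagation argument shows that $\sigma$ is entirely determined by the single image $\sigma(v_0)$ of one node, so any $\sigma$ fixing a node is the identity. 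Hence $\Sigma$ acts \emph{freely} on $V$, and therefore also freely on the set of agents (an agent is fixed only if its start node is). Because $\sigma$ preserves labels, each label class of size $m_\ell$ is a disjoint union of $\Sigma$-orbits, all of size $|\Sigma|$; thus $|\Sigma|$ divides every $m_\ell$, and consequently $|\Sigma|$ divides the symmetry index $\gcd_\ell m_\ell$ of $L$. I will pair this divisibility relation, the linchpin of the argument, with an \emph{equivariance} observation proved by induction on the round exactly as Lemma~\ref{lem:prec}: for any algorithm and any $\sigma\in\Sigma$, agents $X$ and $\sigma(X)$ keep identical memories in every round and occupy $\sigma$-related nodes, since they start from the same triple $(\ell_X,\perp,\perp)$, make the same decision (equal memories and $\mathcal K$), move along equally numbered ports, and see $\sigma$-related, hence identically-labelled, neighborhoods and co-located agents.

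For the sufficiency direction, assume the symmetry index equals $1$. Then $|\Sigma|$ divides $1$, so $\Sigma$ is trivial \emph{for every} $IS(L)$: every initial setting is \emph{rigid}. Rigidity makes all agents pairwise distinguishable (equal views would force a nontrivial automorphism), so a dedicated algorithm can gather them. Concretely, each agent first explores until its view stabilizes (finitely many rounds, as $G$ is finite), reconstructs the colored configuration up to isomorphism, and\textemdash using rigidity\textemdash locates itself unambiguously, singles out the canonical agent $X^\star$ with lexicographically smallest view, and computes the port path to $X^\star$'s node. All agents travel there; the collector $X^\star$, knowing $|L|$ and a bound on $n$ from the knowledge of $IS(L)$ embedded in the dedicated algorithm, waits until the number of co-located agents reaches $|L|$ and then triggers one synchronized round in which all agents, sharing memory at that node, simultaneously declare termination. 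As gatherability imposes \emph{no} time bound, this protocol may be made arbitrarily conservative to absorb adversarial wake-up delays.

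For the necessity direction, argued by contraposition, suppose the symmetry index is $d\ge 2$; I will exhibit a single $IS(L)$ that no algorithm can gather. Every $m_\ell$ is a multiple of $d$, so take the cycle $C_{dk}$ with $k=|L|$, fix ports $0$ and $1$ for the two orientations (a rotation-invariant port-labeling), and wake all agents in round $1$. Place $k/d$ agents realizing the multiset with $m_\ell/d$ copies of each $\ell$ on distinct nodes of one arc of length $k$, and replicate this placement through the rotation $\rho$ by $k$ positions, whose order is exactly $d$ and which is fixed-point-free; thus $\rho\in\Sigma$ with $\rho\neq\mathrm{id}$. By equivariance, $M_X(r)=M_{\rho(X)}(r)$ for all $r$, whereas $X\neq\rho(X)$ since $\rho$ is free. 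If the agents ever gathered at a node $w$ in some round $r$, then $X$ and $\rho(X)$ would both be at $w$, and Lemma~\ref{lem:diff} would force $M_X(r)\neq M_{\rho(X)}(r)$, a contradiction. Hence this $IS(L)$ is ungatherable, so $L$ is not gatherable.

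The main obstacle I expect is the sufficiency construction rather than the divisibility lemma: converting mere rigidity into a genuine algorithm requires a target that symmetric copies would agree upon\textemdash which rigidity supplies\textemdash and, more delicately, detecting termination when agents are woken at adversarially chosen rounds (or only upon being visited). The saving grace is that gatherability demands correctness but \emph{not} efficiency, so the collector can lean on its embedded knowledge of $k$ and of an upper bound on $n$ to outwait all delays before declaring. The genuinely hard, quantitative form of this termination problem\textemdash under poly$(n,|\lambda|)$ time with almost no common knowledge\textemdash is precisely what the later universal algorithm must confront.
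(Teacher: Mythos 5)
Your necessity direction is sound and is essentially the paper's own argument: the paper likewise places the agents periodically on a ring of length $k\sigma$ and shows by induction on rounds that the rotation by $k$ positions is preserved by any deterministic execution, so every agent always has a twin with identical memory at a different node. Your automorphism/equivariance phrasing is a clean repackaging of that induction, and closing it with Lemma~\ref{lem:diff} is fine.

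The sufficiency direction, however, has a genuine gap at its pivotal step: the claim that a trivial automorphism group of $IS(L)$ makes all agents pairwise distinguishable because ``equal views would force a nontrivial automorphism.'' That implication is false for port-labeled graphs. A connected port-labeled graph can have two nodes with identical infinite views yet a trivial automorphism group; this happens exactly when the graph is a non-normal covering of its view-quotient (e.g., a degree-$3$ covering corresponding to a non-normal index-$3$ subgroup of the base's fundamental group, over a base whose port labeling is rigid): all nodes of a fiber share a view, but the deck group\textemdash and hence the full automorphism group\textemdash is trivial. So your divisibility lemma proves that $|\Sigma|$ divides every multiplicity, which is strictly weaker than what your algorithm needs. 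The quantity that actually divides every color-class size is the common cardinality $q$ of the view-equivalence classes (the Yamashita--Kameda/Norris quotient theorem), and you would have to run your gcd argument on $q$ rather than on $|\Sigma|$ to conclude that all views are distinct. Even after that repair, the algorithm itself is under-specified on a point that is not cosmetic: an agent can only observe the \emph{current} configuration, not the initial colored one (the other agents move, and dormant agents are revealed and woken by visits), so ``explore, reconstruct the colored configuration, and locate yourself'' does not reduce to a finite traversal in any obvious way. The paper sidesteps all of this: its dedicated algorithm never asks an agent to locate itself; each agent repeatedly runs ${\tt TZ}$ on an input derived from the multiset of labels of the advanced agents at its node, and symmetry index $1$ guarantees that, until gathering is complete, two groups with different multisets coexist and are forced to meet, with termination detected from the known $L$. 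I would either rebuild your sufficiency proof on the view-class-size theorem together with a much more careful self-location and synchronization procedure, or adopt the paper's squad-merging strategy.
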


The proof of Theorem~\ref{theo:charac} follows directly from Lemmas~\ref{lem:charac2} and~\ref{lem:charac1} given below.

\begin{lemma}
\label{lem:charac2}
If the symmetry index of a team $L$ is $1$, then $L$ is gatherable.
\end{lemma}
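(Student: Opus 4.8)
The plan is to split the statement into a purely structural claim and an algorithmic construction. For a fixed initial setting $IS(L)$, call a \emph{configuration-automorphism} any bijection $\phi$ of the node set that preserves the port numbering and maps each occupied node to an occupied node carrying the same label (and each unoccupied node to an unoccupied one). I would first show that, when the symmetry index of $L$ is $1$, the only configuration-automorphism of any $IS(L)$ is the identity; I would then show that whenever the identity is the only configuration-automorphism, a dedicated algorithm can gather $IS(L)$. Since $IS(L)$ is arbitrary, combining the two claims yields the gatherability of $L$.

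For the structural claim, the key observation is that a port-numbering-preserving automorphism $\phi$ of a connected graph is fixed-point-free unless it is the identity: if $\phi(u)=u$, then $\phi(\mathtt{succ}(u,i))=\mathtt{succ}(\phi(u),i)=\mathtt{succ}(u,i)$ for every port $i$, so by connectivity an immediate induction forces $\phi$ to fix every node. Consequently every non-identity power of such a $\phi$ is fixed-point-free as well, so the cyclic group $\langle\phi\rangle$ acts freely and all its orbits have size $m=\mathtt{ord}(\phi)$. Because $\phi$ preserves colors, each orbit is entirely occupied or entirely unoccupied, and every occupied orbit consists of exactly $m$ agents sharing the same label. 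Hence $m$ divides the multiplicity of every label of $L$, so $m$ divides their gcd, i.e. the symmetry index. When that index equals $1$ we obtain $m=1$ and thus $\phi=\mathtt{id}$, so no $IS(L)$ admits a non-trivial configuration-automorphism.

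For the algorithmic part, I would exploit the fact that the absence of a non-trivial configuration-automorphism makes the $k=|L|$ agents pairwise distinguishable: no two of them can follow, from their respective starting nodes, port-for-port identical trajectories forever. The dedicated algorithm is allowed to hard-code the whole of $IS(L)$, in particular $n$ and $k$, and is subject to no time or advice constraint. I would have every agent first run $\mathtt{EXPLO}(n)$ to wake up all the others, and then carry out an exploration-based rendezvous using the \emph{sticking-together} strategy: whenever several agents share a node, the one with the $\prec$-smallest memory keeps moving while the others follow it. This tie-breaking is well defined and time-consistent precisely because of Lemmas~\ref{lem:diff} and~\ref{lem:prec} (co-located agents have distinct, $\prec$-comparable memories, and the order is hereditary), so it replaces the usual smallest-label rule and copes with homonymy. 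Since the configuration has no non-trivial automorphism, the groups cannot avoid one another indefinitely and eventually merge into a single group; and because the algorithm knows $k$, an agent detects completion exactly when it sees $k$ agents gathered at its node.

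The step I expect to be the main obstacle is not the structural claim (which is short) but closing the algorithmic part: guaranteeing that all agents truly coalesce into one group and, above all, that they declare termination in the very same round. The first point requires arguing that, under a trivial automorphism group, any surviving eternal separation between two groups would itself encode a symmetry of $IS(L)$, contradicting the structural claim; the second requires a small synchronization gadget ensuring that, once the last agent joins and every agent simultaneously observes $k$ co-located agents, they all stop in the same round. Both are comfortable in this dedicated, unconstrained regime, where full knowledge of $IS(L)$ and of $k$ is available; the genuinely hard, resource-constrained version of this synchronization is exactly what the later sections of the paper must overcome.
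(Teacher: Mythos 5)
Your structural claim (symmetry index $1$ implies that the only port- and label-preserving automorphism of $IS(L)$ is the identity) is correct, but it is not the lever that closes the proof, and the algorithmic half of your argument has a genuine gap. The sentence doing all the work --- ``any surviving eternal separation between two groups would itself encode a symmetry of $IS(L)$'' --- is false: two groups can behave \emph{differently} and still never meet (e.g., two groups circling a ring in the same direction, or more generally any pair of moving groups whose trajectories simply never intersect at the same node in the same round). Distinctness of behaviour, which is all that the absence of automorphisms can give you, is necessary but not sufficient for a meeting; you need a subroutine that \emph{converts} distinct inputs into a guaranteed meeting, and you never specify what your group leaders actually execute. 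Running ${\tt TZ}$ on the leader's own label fails because two separated groups may have leaders with the same label; running it on the leader's memory fails because Lemma~\ref{lem:diff} only separates \emph{co-located} agents, and two leaders at different nodes may well have identical memories (this is exactly the situation exploited in the impossibility proof of Lemma~\ref{lem:charac1}).

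The paper's proof fixes precisely this point: each group runs ${\tt TZ}(f(S))$ where $S$ is the multiset of labels of the (advanced) agents currently co-located, $f$ being an injective map from multisubsets of $L$ into $[1\ldotp\ldotp 2^k]$, restarting from scratch whenever $S$ changes. The symmetry-index-$1$ hypothesis then enters directly and combinatorially: if at least two groups remain, they cannot all carry the same multiset (otherwise every multiplicity would be divisible by the number of groups), so some two groups hold distinct multisets, hence distinct ${\tt TZ}$ inputs, hence are forced to meet within $(2^k n)^{\alpha}$ rounds by the rendezvous guarantee. No automorphism argument is needed or used. Your leader election by $\prec$-smallest memory and your termination test (seeing $k$ co-located agents) are fine, but to repair the proposal you must replace the appeal to the trivial automorphism group by an explicit rendezvous procedure keyed on a quantity that provably differs between some two surviving groups --- the group's label multiset being the natural choice that the hypothesis makes available.
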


\begin{proof}
Consider any team $L$ of $k$ agents for which the symmetry index is $1$. By definition, $L$ is gatherable if for every initial setting $IS(L)$, there exists an algorithm even specifically dedicated to $IS(L)$ that allows the agents to solve gathering. Hence, to show that the lemma holds, it is enough to show that Algorithm $\mathcal{A}$, whose pseudocode is given below in Algorithm~\ref{alg:gather_specific}, allows to solve the task of gathering, assuming that the multiset $L$ and the graph order $n$ are provided as input. In the rest of this proof, we will denote by $r_0$ the first round in which an agent wakes up and we will say that an agent is \emph{advanced} if it has at least completed the execution of line~\ref{z1} of Algorithm~\ref{alg:gather_specific}.

Besides Procedure ${\tt EXPLO}$ introduced in Section~\ref{sec:preli}, the pseudocode of Algorithm~\ref{alg:gather_specific} involves three particular functions. The first one is ${\tt CurMultLab}()$: when an agent $X$ calls it from a node $u$, the function simply returns the multiset of labels of all advanced agents that are currently at node $u$ (including $X$'s label if $X$ is advanced). The second function is any fixed injective map $f:\mathscr{P}(L) \rightarrow \left[1 \ldotp\ldotp 2^k \right]$, where $\mathscr{P}(L)$ is the set of multisubsets of $L$ (such function exists as the cardinality of $\mathscr{P}(L)$ is at most $2^k$). The third function is ${\tt TZ}(\ell, i)$, where $i$ and $\ell$ are positive integers. When calling ${\tt TZ}(\ell, i)$ in a round $r$, the function returns the $i$th action (move through some port $p$ or wait during the current round) that would be asked by Procedure ${\tt TZ}(\ell)$ in an execution $\mathcal{E}$ of the procedure starting from the node occupied by the agent in round $r-i+1$, assuming the agent is alone in the graph (in which case, $\mathcal{E}$ never terminates). Note that ${\tt TZ}(\ell, i)$ is indeed computable by the agent in round $r$, because according to Algorithm~\ref{alg:gather_specific}, if $i\geq2$, then, in round $r-1$, the agent called ${\tt TZ}(\ell, i-1)$ and performed the $(i-1)$th action of $\mathcal{E}$. 

As stated in Section~\ref{sec:preli}, Procedures ${\tt TZ}$ and ${\tt EXPLO}$ satisfy the following properties in a graph of order $n$. First, if two agents $X$ and $X'$ with distinct labels $\ell_X$ and $\ell_{X'}$ start executing, in possibly different rounds, ${\tt TZ}(\ell_X)$ and ${\tt TZ}(\ell_{X'})$, then they will meet after at most $(n\cdot\min(|\ell_X|,|\ell_{X'}|))^{\alpha}$ rounds since the start of the later agent, for some constant $\alpha\geq 2$. Second, an execution of ${\tt EXPLO}(N)$ by an agent $X$ lasts at most $N^{\beta}$ rounds and allows $X$ to visit every node of the graph if $n\leq N$. In particular, the latter property implies that every agent becomes advanced by round $r_0+2n^{\beta}$. 

\begin{algorithm}[H]
  \label{alg:gather_specific}
  \caption{Algorithm $\mathcal{A}(L,n)$}
  \SetNoFillComment
  perform ${\tt EXPLO}(n)$\;\label{z1}
  $S := {\tt CurMultLab}()$\;
  $i :=1$\;
  \While{$S\neq L$}{
    perform ${\tt TZ}(f(S),i)$\;
    \eIf{$S\neq {\tt CurMultLab}()$}{
      $S := {\tt CurMultLab}()$\;
      $i :=1$\; 
    }
    {
      $i :=i+1$\;
    }
  }
  declare termination\;
  \end{algorithm}

The execution of the while loop of Algorithm $\mathcal{A}$ by an agent $X$ can be viewed as a sequence of phases. In each phase, agent $X$ executes step by step Procedure ${\tt TZ}(f(S))$ where $S$ is the multiset of labels of the advanced agents that are at $X$'s node at the beginning of the phase. A phase ends as soon as the multiset of labels of the advanced agents that are currently at $X$'s node changes: if the new multiset of advanced agents' labels $S'$ is $L$, then $X$ declares termination (and so does every other agent), otherwise it begins a new phase by restarting from scratch a step-by-step execution of Procedure ${\tt TZ}$ but with $f(S')$, instead of $f(S)$, as input. 

Note that agent $X$ declares termination in some round $r$, if, and only if, in this round the multiset of labels of all advanced agents that are at its current node corresponds to $L$ (in which case, all agents also declare termination in round $r$). Also note that according to Algorithm $\mathcal{A}$, when two advanced agents meet, they stay together forever thereafter as they will necessarily perform the exact same actions from this event onward. Thus, since every agent becomes advanced by round $r_0+2n^{\beta}$, to prove the lemma, it is enough to prove the following: for any round $r'\geq r_0+2n^{\beta}$, if not all agents are gathered in round $r'$, then there exist two agents that are not together in round $r'$ that meet in some round $r''>r'$. Suppose by contradiction it is not the case for some round $r'$. 

Since the symmetry index of $L$ is $1$, there are two distinct nodes $u$ and $u'$ that are occupied in round $r'$ by some agent $X$ and some agent $X'$, respectively, and such that the multiset of labels $S$ of all agents that are at $u$ in $r'$ is different from the multiset of labels $S'$ of all agents that are at $v$ in $r'$. From the previous explanations, we know that $X$ and $X'$ cannot be together in any round of $[r_0+2n^{\beta} .. r']$; moreover, the contradictive assumption implies that $X$ and $X'$ cannot meet after round $r'$. In other words, $X$ and $X'$ can never share the same node from round $r_0+2n^{\beta}$ on.

In view of Algorithm $\mathcal{A}$, the contradictive assumption and the fact that every agent becomes advanced by round $r_0+2n^{\beta}$, there is a round $r_0+2n^{\beta}\leq t\leq r'$ (resp. $r_0+2n^{\beta}\leq t'\leq r'$) such that in round $r'+(2^kn)^{\alpha}$, agent $X$ (resp. $X'$) has just finished the first $(2^kn)^{\alpha}+r'-t$ (resp. $(2^kn)^{\alpha}+r'-t'$) actions of some execution of ${\tt TZ}(f(S))$ (resp. ${\tt TZ}(f(S'))$). Assume without loss of generality that $t\leq t'$. It follows that agents $X$ and $X'$ necessarily meet in a round $t'\leq t''\leq r'+(2^kn)^{\alpha}$ in view of the property of Procedure ${\tt TZ}$ recalled above and the fact that $1\leq f(S)<f(S')\leq 2^k$ or $1\leq f(S')<f(S)\leq 2^k$. However, since $r_0+2n^{\beta}\leq t'$, we obtain a contradiction with the fact that $X$ and $X'$ can never share the same node from round $r_0+2n^{\beta}$ on. Hence, the lemma follows.
\end{proof}

\begin{lemma}
\label{lem:charac1}
If the symmetry index of a team $L$ is not $1$, then $L$ is not gatherable.
\end{lemma}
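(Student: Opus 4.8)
The plan is to prove the contrapositive directly: assuming the symmetry index $d$ of $L$ satisfies $d\geq 2$, I will exhibit one particular initial setting $IS(L)$ on which \emph{every} deterministic algorithm fails, regardless of which common knowledge $\mathcal{K}$ is supplied (this is exactly what is needed to negate gatherability). The guiding idea is the classical symmetry obstruction: build a rotationally symmetric instance in which homonymous agents sitting in symmetric positions are forced to mimic one another forever, so that they can never all collapse onto a single node. First I would exploit $d\geq 2$ to factor the team: since $d$ divides the multiplicity $m_\ell$ of every label $\ell$, the multiset $L$ is the disjoint union of $d$ identical copies of the sub-multiset $L'$ in which each label $\ell$ has multiplicity $m_\ell/d$; write $k'=|L'|=k/d\geq 1$.

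Next comes the construction of the symmetric instance. I would take the oriented ring $C_n$ on nodes $\{0,\ldots,n-1\}$ with $n=dm$, choosing $m\geq\max(k',2)$ so that the graph is simple and $n\geq k$, and equip it with the rotation-invariant port numbering (port $0$ goes clockwise, port $1$ counterclockwise). The rotation $\varphi(v)=v+m \bmod n$ is then a \emph{port-preserving} automorphism of order exactly $d$ that is fixed-point-free (since $0<jm<n$ for $0<j<d$), meaning ${\tt deg}(\varphi(v))={\tt deg}(v)$, ${\tt succ}(\varphi(v),p)=\varphi({\tt succ}(v,p))$ and ${\tt port}(\varphi(u),\varphi(v))={\tt port}(u,v)$ for all relevant $u,v,p$. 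I place a fixed enumeration of $L'$ on nodes $0,\ldots,k'-1$ and replicate it through the $d$ sectors via $\varphi$, so the homonymous copies of the agent started at node $i$ occupy $i,i+m,\ldots,i+(d-1)m$. All $k=dk'$ positions are distinct (as $k'\leq m$), the resulting label multiset is precisely $L$, and making the adversary wake all agents simultaneously in round $1$ makes the whole initial setting invariant under $\varphi$.

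The heart of the argument, and the main obstacle, is an invariance lemma proved by induction on the round number $r$. Let $\sigma$ be the permutation of the agents induced by $\varphi$ (sending the agent started at $s$ to the homonym started at $\varphi(s)$). The claim is that for any deterministic algorithm $\mathcal{A}$ run with any fixed $\mathcal{K}$, in every round $r$ the position of $\sigma(X)$ equals $\varphi$ applied to the position of $X$, and $M_{\sigma(X)}(r)=M_X(r)$. The base case holds because homonyms share their label and hence their initial memory $(\ell_X,\perp,\perp)$. For the inductive step, the crucial facts are that $\varphi$ preserves degrees, ports and the ${\tt succ}/{\tt port}$ relations, so the local observations satisfy $A_{\sigma(X)}(r)=A_X(r)$ literally; and that $\varphi$ maps the set of agents sharing $X$'s node bijectively onto the set sharing $\sigma(X)$'s node while preserving their round-$(r-1)$ memories, so the multisets $B_{\sigma(X)}(r)$ and $B_X(r)$ coincide as well. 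Since $M_{\sigma(X)}(r-1)=M_X(r-1)$ and both $\mathcal{A}$ and $\mathcal{K}$ are identical for all agents, $\mathcal{A}$ prescribes the very same action to $X$ and to $\sigma(X)$; port-preservation of $\varphi$ then propagates the position equality, and the equalities above propagate the memory equality, to round $r$. The delicate point is carrying the recursively defined memory $M_X(r)$ faithfully through the induction while using \emph{only} that symmetric agents have identical labels and identical observation histories, so that neither a dedicated algorithm nor the shared advice $\mathcal{K}$ can ever distinguish $X$ from $\sigma(X)$.

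Finally I would derive the contradiction. Suppose gathering succeeds: in some round $r$ all $k$ agents occupy a common node $w$ and declare termination. By the invariance lemma, each agent $\sigma(X)$ is at $\varphi(w)$ in round $r$; since $\sigma$ permutes the agents, this means all agents are simultaneously at $\varphi(w)$ as well. But $\varphi$ is fixed-point-free, so $\varphi(w)\neq w$, and an agent occupies exactly one node per round, which is impossible for $k\geq 2$. Hence gathering never occurs on this $IS(L)$, so $L$ is not gatherable, establishing the lemma.
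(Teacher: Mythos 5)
Your proposal is correct and follows essentially the same route as the paper: both construct a ring with a fixed-point-free, port-preserving rotation of order equal to the symmetry index, place the agents so that the initial setting is invariant under that rotation, and then prove by induction on rounds that rotation-related agents keep identical memories (hence identical actions), so no round can see all agents on one node. The only differences are cosmetic (the paper uses a ring of size $k\sigma$ with agents spread at positions $i\sigma+jk$ and phrases the invariant as memory-preserving bijections between the node-sets $S_i^r$ and $S_j^r$ for $i\equiv j \bmod k$, whereas you pack each sector's agents consecutively and phrase the invariant via the induced agent permutation), and these do not affect correctness.
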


\begin{proof}
Let $L$ be a team of $k$ agents with symmetry index $\sigma \neq 1$, let $m_L(\ell)$ denote the multiplicity of label $\ell$ in $L$, and let $\Lambda$ be the support of $L$. We construct an initial configuration $IS(L)$ as follows.
Let the underlying graph be a ring $G$ of size $k\sigma$, with nodes $v_0, v_1, \ldots, v_{k\sigma-1}$ arranged in clockwise order. Each node $v_i$ has two ports: port 0 leads to $v_{i+1}$, and port 1 leads to $v_{i-1}$ (indices are taken modulo $k\sigma$). Let $h: \{0, 1, \dots, \frac{k}{\sigma}-1\} \rightarrow \Lambda$ be any map such that $|h^{-1}(\ell)| = \frac{m_L(\ell)}{\sigma}$ for all $\ell \in \Lambda$ (such a map exists as $\frac{\sum_{\ell\in\Lambda}m_L(\ell)}{\sigma}=\frac{k}{\sigma}$).
Now, place agents on the ring as follows: for each $i \in \{0, \ldots, \frac{k}{\sigma}-1\}$ and each $j\in \{0, \ldots, \sigma-1\}$, place an agent with label $h(i)$ at node $v_{i\sigma+jk}$. The other nodes, i.e., the nodes $v_i$ such that $i\not\equiv 0\mod \sigma$, do not initially contain any agent. Observe that for each $j\in \{0, \ldots, \sigma-1\}$ the set of nodes $\{v_i\ |\ jk\leq i\leq (j+1)k-1\}$ contains exactly $\frac{m_L(\ell)}{\sigma}$ agents with label $\ell$. It follows that the agents placed in $G$ correspond exactly to the team $L$. We assume that all agents are activated in round 1. See Figure~\ref{fig:lemma_3_1_example} for an example of such a construction. 

\begin{figure}
  \begin{center}
      \includegraphics[width=9cm]{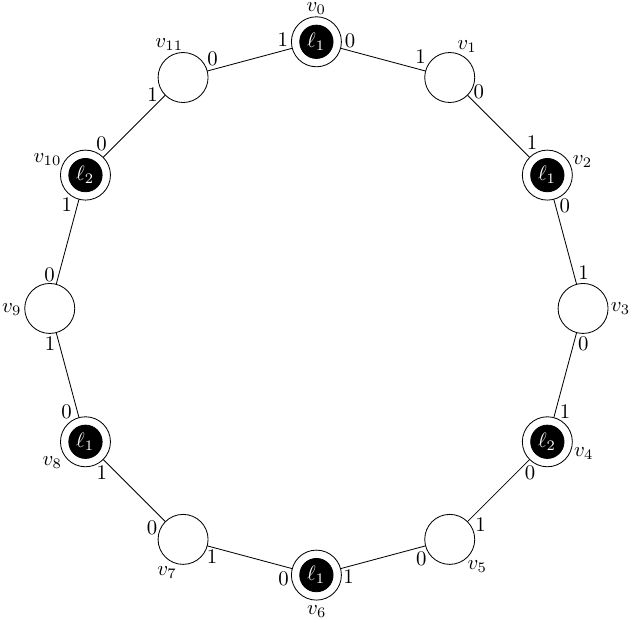}
  \caption{Initial configuration for a team of agents $\{\ell_1,\ell_1,\ell_1,\ell_1,\ell_2,\ell_2\}$ with $k=6$, $\sigma =2$ and with fixed map $h$ such that $h(0)=\ell_1$, $h(1)=\ell_1$ and $h(2)=\ell_2$. Agents are represented as black disk with their label in white. \label{fig:lemma_3_1_example}}
  \end{center}
\end{figure}

Let $S_i^r$ denote the set of agents located at node $v_i$ in round $r$. We will prove by induction that for any deterministic algorithm $\mathcal{A}$, the following property holds for all rounds $r \in \mathbb{N}^*$:

$H(r)$: For all $0 \leq i<j \leq k\sigma-1$ such that $i \equiv j \mod k$, there exists a bijection
$f_{i,j}^r : S_i^r \to S_j^r$ such that for every $X \in S_i^r$, the memory of $X$ in round $r$ equals that of $f_{i,j}^r(X)$.

First, we show the base case $H(1)$ of the induction.  Let $i, j \in$ $\{0, \dots, k\sigma-1\}$ with $i<j$ and $i \equiv j \equiv m \mod k$ for some $m$ such that $0 \leq m < k$. If $m \not\equiv 0\mod \sigma$, then $v_i$ and $v_j$ are both empty, and the bijection $f_{i,j}^1$ is trivial. Hence, we can assume that $m \equiv 0\mod \sigma$. This means that $v_i$ and $v_j$ each host a single agent with label $h(m/\sigma)$. In round $1$, each agent has initial memory $(\ell, \bot, \bot)$ where $\ell$ is its label. Therefore, for any such $i, j$, the singleton sets $S_i^1$ and $S_j^1$ contain agents with identical memories. The bijection $f_{i,j}^1$ is also trivial in this case, and thus $H(1)$ necessarily holds.

Assume $H(r)$ holds for some round $r \geq 1$. We show that $H(r+1)$ also holds. Denote by $\pi(\cdot)$ the reduction modulo $k\sigma$.
For any node $v_i$, let us partition $S_i^{r+1}$ into three disjoint sets:

\begin{itemize}
  \item $\dot{S}_i^{r+1}$: agents that decided in round $r$ to stay idle at $v_i$,
  \item $\vec{S}_i^{r+1}$: agents that decided in round $r$ to move to $v_i$ from $v_{\pi(i-1)}$ via port $0$,
  \item $\cev{S}_i^{r+1}$: agents that decided in round $r$ to move to $v_i$ from $v_{\pi(i+1)}$ via port $1$.
\end{itemize}

By the inductive hypothesis $H(r)$, for all $i \equiv j \mod k$ such that $i<j$, there exist bijections $f^r_{i,j}$, $f^r_{\pi(i-1),\pi(j-1)}$ and $f^r_{\pi(i+1),\pi(j+1)}$ from $S_i^r$ to $S_j^r$, $S_{\pi(i-1)}^r$ to $S_{\pi(j-1)}^r$ and $S_{\pi(i+1)}^r$ to $S_{\pi(j+1)}^r$, respectively, that all preserve the agents' memories in round $r$. Since the algorithm is deterministic, agents with the same memories take the same action. Hence, there exists a bijection from $\dot{S}_i^{r+1}$ to $\dot{S}_j^{r+1}$ that preserves the memories hold in round $r$.
Similarly, such bijections exist between $\vec{S}_i^{r+1}$ and $\vec{S}_j^{r+1}$, and between $\cev{S}_i^{r+1}$ and $\cev{S}_j^{r+1}$.
From these, we can construct a global bijection $f_{i,j}^{r+1} : S_i^{r+1} \to S_j^{r+1}$
that preserves both the memories hold in round $r$ and the actions taken in round $r$.
Let us now verify that this bijection necessarily preserves the agents' memories hold in round $r+1$. For any agent $X \in S_i^{r+1}$, define $X' = f_{i,j}^{r+1}(X)$. Their memories  in round $r+1$ are respectively 
$M_X(r+1) = (M_X(r), A_X(r+1), B_X(r+1))$ and $M_{X'}(r+1) = (M_{X'}(r), A_{X'}(r+1), B_{X'}(r+1))$.
Since $f_{i,j}^{r+1}$ preserves the memories hold in round $r$ (resp. the actions taken in round $r$), we have $M_X(r) = M_{X'}(r)$ (resp. $A_X(r+1) = A_{X'}(r+1)$). The set $B_X(r+1)$ is also preserved because $B_X(r+1) = \{(A_Y(r+1), M_Y(r)) \mid Y \in S_i^{r+1} \setminus \{X\}\}$, $B_{X'}(r+1)=\{(A_{Y'}(r+1), M_{Y'}(r)) \mid Y' \in S_j^{r+1} \setminus \{X'\}\}$ and $f_{i,j}^{r+1}$ preserves the memories hold in round $r$ as well as the actions taken in round $r$.
Hence, $M_X(r+1) = M_{X'}(r+1)$, completing the inductive step. Therefore, $H(r)$ holds for all $r \geq 1$ by induction. 

Consequently, in any round $r$, for any agent $X$ at node $v_i$, there exists a corresponding agent $f^{r}_{i,j}(X)$ at node $v_{j}$ for some $j\ne i$ such that $j \equiv i \mod k$. It follows that gathering is never achieved.
\end{proof}

\section{A Polynomial-Time Algorithm Using Common Knowledge of Small Size}
\label{sec:pos}

In this section, we present an algorithm, called ${\mathcal{HG}}$ (short for {\it Homonymous Gathering}). Essentially, this algorithm enables the gathering of any gatherable team $L$, with multiplicity index $\mu$, in a time at most polynomial in $n$ and $|\lambda|$, provided that $\mathcal{K}$ is the binary representation of $\lceil\log\log (\mu+1)\rceil$. Note that the size of $\mathcal{K}$ is thus in ${O}(\log \log \log \mu)$.


The remainder of this section is structured as follows. First, we provide the intuition underlying Algorithm ${\mathcal{HG}}$. Next, we present a detailed description of the algorithm. Finally, we establish its correctness and analyze its time complexity.

\subsection{Intuition}

At first glance, it could be quite tempting to simply reapply the strategy of Algorithm~\ref{alg:gather_specific}. However, for this algorithm to work correctly, it requires as input both the multiset of labels $L$ and the graph order $n$. In particular, termination detection relies on the knowledge of $L$: an agent stops when, at its current node, the multiset of labels of the advanced agents (those having at least completed the execution of the first line of the algorithm) matches $L$. True, such a simple detection could also have been done knowing only the cardinality of $L$. But in our situation, we are nowhere near knowing $L$ or its cardinality. When the oracle provides the binary representation of $\lceil\log\log (\mu+1)\rceil$, this does not even reveal the value of $\mu$: it only yields an approximate upper bound $U$ such that $\mu\leq U \leq (\mu+1)^2$.

Even setting aside the problem of initial knowledge, another difficulty remains: that of complexity. A quick analysis can show that the time required by Algorithm~\ref{alg:gather_specific} to gather all agents is far from being polynomial in $n$ and $|\lambda|$. Certainly, we could have improved the complexity to some extent, by designing the injective map $f$ more carefully, but this would have not been enough without more radical change. Indeed, our method of progressively constructing larger and larger agent groups until only one remains, by applying function ${\tt TZ}$, necessarily requires providing it with an input that is more than just the smallest label of the group, to avoid symmetries that could prevent gathering. Actually, the function must also take into account, in some way, the other labels in the group. This simple observation constitutes a fundamental barrier to reaching polynomial complexity in $n$ and $|\lambda|$ using this strategy. We therefore need a significantly more sophisticated approach, whose main ideas are outlined below. 

Our algorithm is structured around five states, between which an agent can switch and act accordingly. These states are: {\tt cruiser}, {\tt token}, {\tt explorer}, {\tt searcher}, and {\tt shadow}. Among them, {\tt shadow} and {\tt searcher} can be introduced immediately as they are simple and play only a secondary role in the overall process. When an agent enters state {\tt shadow}, it never leaves it: from then on, it only mirrors, in every round, the actions of a specific agent, called its \emph{guide}\textemdash exiting through the same port, waiting when the guide waits, and declaring termination when the guide does. When an agent is in state {\tt searcher}, it simply applies ${\tt EXPLO}(2^1)$, ${\tt EXPLO}(2^2)$, ${\tt EXPLO}(2^3)$, and so on, till finding an agent in state {\tt token}, in which case it transits to state {\tt shadow} by choosing this agent as its guide (this is unambiguous as we will prove that two agents can never be in state {\tt token} at the same node in the same round). By contrast, the states other than {\tt shadow} and {\tt searcher} play a central role and require a detailed explanation, as they capture the essence of the algorithm.

A fundamental requirement is that, within a polynomial number of rounds in $n$ and $|\lambda|$ from the first round $r_0$ in which any agent wakes up, every agent must have woken up. State {\tt cruiser} is a first step toward fulfilling this condition. Upon waking up, an agent begins in this state, where the primary objective is to ``quickly'' meet another agent. For simplicity, and because it is sufficient for our purpose, the agent in state {\tt cruiser} will precisely focus on finding a peer that is in state {\tt cruiser} or {\tt token}, and will deliberately ignore the meetings involving only agents in the other three states. To achieve this, the agent proceeds in phases $i= 1,2,\ldots$ in which it executes procedure ${\tt EXPLO}(2^i)$ and then, for $2^i$ rounds, procedure ${\tt TZ}$ using its own label as input. This continues until it finally meets an agent in one of the two above-mentioned states. What happens then depends on the states of the agents involved in the meeting. If the meeting involves an agent $X$ in state {\tt token}, then our agent in state {\tt cruiser} transits to state {\tt shadow} by choosing $X$ as its guide. Otherwise, all agents involved in the meeting are in state {\tt cruiser}: depending on their current memories (which are all different in view of Lemma~\ref{lem:diff}), one becomes {\tt explorer}, one becomes {\tt token}, while the others, if any, transit to state {\tt shadow} by choosing as their guide the agent becoming {\tt token}.

At this point, using the properties of procedures ${\tt EXPLO}$ and ${\tt TZ}$, together with the fact that an execution of the first $\lceil \log x\rceil$ phases lasts at most $H(x)$ rounds for some polynomial $H$, we can guarantee that the team's first meeting\textemdash necessarily involving only agents in state {\tt cruiser}\textemdash occurs within poly$(n,|\lambda|)$ time from $r_0$. Precisely, if some agent is not woken up by the adversary by round $r_0+H(\lceil \log n \rceil)$, then a meeting is guaranteed by this round because an execution of phase $\lceil\log n\rceil$ ensures that all nodes are visited via ${\tt EXPLO}(2^{\lceil\log n\rceil})$. Otherwise, all executions of {\tt TZ} initiated by the agents in a phase $i$ start within a window of fewer than $H(\lceil \log n \rceil)$ rounds, and, in view of the meeting bound of procedure  {\tt TZ} (cf. Section~\ref{sec:rela}), we can prove that a meeting occurs before any agent finishes the execution of {\tt TZ} in phase $\lceil \log n \rceil + \lceil (\alpha \log (n|\lambda|) \rceil$, for some constant $\alpha$ (i.e., by round $r_0+H(2^{\lceil \log n \rceil + \lceil (\alpha \log (n|\lambda|) \rceil})$. Hence, in either case, the team's first meeting indeed occurs in some round $r_1$, within poly$(n,|\lambda|)$ time from $r_0$. However, we cannot yet guarantee the fundamental requirement that, within such a time frame, every agent wakes up. To achieve this, additional mechanisms are needed. This is where the states {\tt token} and {\tt explorer} come into the picture.

When an agent $A$ becomes an explorer in some round $t$ at some node $v$, there is exactly one agent $B$ that becomes a token in the same round at the same node. As the name suggests, the role of $A$ in state {\tt explorer} is to explore the graph. To do so, it emulates procedure {\tt EST} (cf. Section~\ref{sec:rela}), using $B$ as its token. While in state {\tt token}, agent $B$ plays a passive role: it simply remains idle at node $v$. The only exception is a specific situation described later, in which it will interrupt this behavior and transition to state {\tt searcher}. Meanwhile, during its emulation of {\tt EST} started in round $t$, agent $A$ follows the instructions of {\tt EST}, but whenever it encounters an agent $C$ in state {\tt token} with $M_C(t) = M_B(t)$, it treats (rightly or wrongly) $C$ as its own token $B$ and proceeds accordingly. In the ideal situation where all agents have distinct labels, $A$ can never confuse $B$ with any other agent, and we can be certain that every node (resp. agent) is visited (resp. woken up) by the end of its exploration, which lasts at most $8n^5$ rounds\textemdash the worst-case complexity of {\tt EST}. In particular, the port-labeled BFS tree constructed by $A$ through this exploration indeed corresponds to a spanning tree of the underlying graph $G$. Unfortunately, we are not necessarily in such an ideal situation. While exploring the graph, agent $A$ might sometimes mistakenly consider another agent as its token $B$, which could prevent it from visiting all nodes and, consequently, we cannot guarantee that all agents are woken up by the end of its exploration. Likewise, the tree constructed by $A$ may actually correspond to a truncated BFS tree of the graph. However, what we can prove is that all explorers starting an exploration in round $t$ with a token whose memory matches that of $B$ in this round will, collectively, succeed in exploring the entire graph by round $t+8n^5$. Roughly speaking, the union of their explorations, and, more specifically, of the trees they will construct, will span the whole graph, even though some trees may cover significantly larger portions than others. As a result, the algorithm satisfies the aforementioned fundamental requirement: every agent is necessarily woken up within poly$(n,|\lambda|)$ time from $r_0$.

While this is of course a significant step forward, we need to keep in mind that our ultimate goal is much stronger and requires all agents to gather at the same node within poly$(n,|\lambda|)$ time (from $r_0$). Perhaps surprisingly, we are quite close to achieving this if, instead of limiting each explorer $A$ to a single exploration, they continue performing EST-based explorations until some conditions are met. Precisely, during each exploration that starts (resp. ends) in some round $t$ (resp. $t'$), agent $A$ considers (similarly as above) any agent $C$ in state {\tt token} as its own token $B$ if $M_C(t) = M_B(t)$. Furthermore, $A$ and its token $B$, which can be shown to be together in round $t'$, both switch to state {\tt searcher} in this round if, at some point in the interval $[t..t']$, $A$ has encountered an agent $C$ in state {\tt token} that entered this state before $B$, or in the same round as $B$ but $M_B(t)\prec M_C(t)$. With these simple rules and the fact the team's symmetry index is $1$, we can show the emergence of a kind of domino effect, by which, within poly$(n,|\lambda|)$ time, only a single explorer $E$ and a single token $T$ eventually remain, while all other agents become {\tt shadow} of this token. In particular, $E$ and $T$ can be shown to enter states {\tt explorer} and {\tt token}, respectively, in round $r_1$ at the same node, and remain in them forever thereafter. Since each exploration lasts at most $8n^5$ rounds, it follows that, within poly$(n,|\lambda|)$ time, there exists a round $r_2$ in which all agents are together at the same node, with $E$ in state {\tt explorer}, $T$ in state {\tt token}, and all other agents in state {\tt shadow}! However, a crucial problem persists: in the current design of our solution, the agents cannot detect this event, because they know a priori neither $n$ nor $\lambda$. To overcome this\textemdash and, notably, to prevent the last surviving explorer from performing explorations ad indefinitely\textemdash we relied on two additional fundamental findings that we brought to light in our analysis. In particular, this is where we make use of the upper bound $U$ on $\mu$, derived by an agent from the oracle's advice, which, as explained earlier, satisfies $\mu\leq U \leq (\mu+1)^2$.

The first finding is that when an explorer has completed $U^c$ consecutive identical explorations for some prescribed constant $c$, then it can get a polynomial estimation of $n$. (Two explorations are considered identical if, in both, the explorer follows exactly the same path and encounters an agent it interprets as its token at the exact same times.) More precisely, we can prove that when such an event occurs for an explorer $A$, the order $\eta$ of each of the identical (possibly truncated) BFS trees constructed during the $U^c$ consecutive identical explorations of $A$ satisfies $n\leq U\eta \leq (n+1)n^2$. Intuitively, this stems from the fact that, at some point, a group of at most $\mu$ explorers (including $A$) must have acted in a perfectly symmetric manner, notably by constructing identical trees (of order $\eta$) whose union spans the entire graph. Moreover, it is important to mention that starting from round $r_2$, the last surviving explorer can no longer confuse its token with another one and is therefore guaranteed to complete $U^c$ consecutive identical explorations by round $r_2+8U^cn^5$.

The second finding is that while we can prove that the difference $r_2-r_0$ is upper bounded by some polynomial $P(n,|\lambda|)$, we can also establish a more subtle and useful fact, which may seem counterintuitive at first: the difference $r_2-r_0$ is upper bounded by some polynomial $P'(n,\tau)$, where $\tau$ is the minimum time spent by an agent in state {\tt cruiser}. Why it is more subtle and useful? Consider any explorer $X$ that completes $U^c$ consecutive identical explorations. Let $\eta_X$ be the order of the trees computed in these explorations and $\tau_X$ the time spent by $X$ in state {\tt cruiser}. Since $n\leq U\eta_X\leq (n+1)n^2$ and $\tau\leq \tau_X\leq P(n,|\lambda|)$, we necessarily have: $$P'(n,\tau)\leq P'(U\eta_X,\tau_X)\leq P'((n+1)n^2,P(n,|\lambda|))$$

In other words, once an explorer $X$ obtains its estimated upper bound $U\eta_X$ on the graph order, it can combine it with $\tau_X$ to determine an upper bound on the duration that must elapse before exceeding round $r_2$. And more importantly, all such upper bounds on this duration computed by the explorers are themselves upper bounded by the same polynomial in $n$ and $|\lambda|$, without requiring prior knowledge of $n$ or $\lambda$!

Consequently, we can ensure that gathering with detection is achieved within poly$(n,|\lambda|)$ time, by asking each explorer $X$ to declare termination at the end of an exploration if:
\begin{enumerate}[label=(\roman*)]
\item it has just completed at least $U^c$ consecutive identical explorations, each producing a tree of order $\eta_X$, and 
\item the total number of elapsed rounds since the beginning of its 
execution exceeds some polynomial bound in both the estimated graph 
size $U\eta_X$ and the time $\tau_X$ it spent in state {\tt cruiser}.\footnote{For technical reasons, we substituted this second condition in our algorithm with carefully parametrized waiting periods inserted between explorations.}
\end{enumerate}

In practice, only the last surviving explorer eventually satisfies these conditions, at which point it is co-located with its token and all other agents (in state {\tt shadow}). Of course, when the explorer declares termination, the algorithm will ask all the colocated agents to simultaneously do the same.

Overall, we end up with an algorithm\textemdash called ${\mathcal{HG}}$\textemdash whose design is remarkably simple. However, this simplicity comes at a price of a difficult and challenging analysis. One of the main difficulties was undoubtedly to show that our termination detection mechanism, relying on the two findings described above, indeed works within the expected time frame without producing any false positives.

\subsection{Algorithm}

In this section, we give a detailed description of procedure ${\mathcal{HG}}$ executed by an agent $A$ of label $\ell_A$. For ease of reading, we describe it as a list of states in which agent $A$ can be and, for each of them, we specify the actions performed by $A$ as well as the rules governing the transitions to other states. In the description of each state $X$, when we say ``agent $A$ transits to state $Y$'' in a round $r$, we exactly mean that agent $A$ remains idle in its current node in state $X$ until the end of round $r$ and enters state $Y$ in round $r+1$ (precisely at the beginning of $r+1$). By doing so, each transition costs one extra round, but we have the guarantee that, in each round, agent $A$ is in at most one state. The states in which agent $A$ can be are: {\tt cruiser}, {\tt token}, {\tt explorer}, {\tt searcher} and {\tt shadow}. Upon waking up, the agent starts in state {\tt cruiser}. (If an agent is dormant or has completed the execution of ${\mathcal{HG}}$, it is considered to have no state.)\\

  \noindent
 {\bf State} {\tt cruiser}.

In this state, agent $A$ works in phases $i= 1,2,\ldots$ In phase $i$, it executes procedure ${\tt EXPLO}(2^i)$ and then, for $2^i$ rounds, procedure ${\tt TZ}(\ell_A)$. As soon as it meets another agent in state {\tt cruiser} or {\tt token} in a round $r$ at a node $v$, it interrupts the execution of the current phase and transits either to state {\tt shadow} or {\tt explorer} or {\tt token} in the same round, depending on the cases presented below. Let $\mathcal{S}$ be the set of all agents in state {\tt cruiser} or {\tt token} located at node $v$ in round $r$, including agent $A$.

\begin{itemize}
\item{\it Case~1. All agents of $\mathcal{S}$ are in state {\tt cruiser} in round $r$.} In this case, there are at least two agents of $\mathcal{S}$ that are in state {\tt cruiser} in round $r$. If agent $A$ has the largest (resp. the smallest) memory within $\mathcal{S}$ in round $r$, then it transits to state {\tt token} (resp. {\tt explorer}). Otherwise, it transits to state {\tt shadow} and its guide is the agent having the largest memory within $\mathcal{S}$ in round $r$ (the role of the guide is given in the description of state {\tt shadow}).

\item{\it Case~2. There is an agent of $\mathcal{S}$ that is in state {\tt token} in round $r$.} Agent $A$ transits to state {\tt shadow} and its guide is the agent of $S$ that is in state {\tt token} in round $r$ (we will show in Section~\ref{sec:correct} that, in every round, a node can contain at most one agent in state {\tt token}).

\end{itemize} 

  \noindent
 {\bf State} {\tt searcher}.

In state {\tt searcher}, agent $A$ executes successively ${\tt EXPLO}(2)$, ${\tt EXPLO}(2^2)$, ${\tt EXPLO}(2^3)$, \ldots, ${\tt EXPLO}(2^i)$, and so on, until it meets an agent in state {\tt token}. When such a meeting occurs at a node $v$ in a round $r$, agent $A$ transits to state {\tt shadow} in the same round and selects as its guide the agent that is in state {\tt token} at node $v$ in round $r$.\\

  \noindent
 {\bf State} {\tt shadow}.

We will show that when agent $A$ enters state {\tt shadow} at a node $v$, it has exactly one guide that is also located at node $v$. At the end of each round, agent $A$ simply takes the same decision as its guide regarding whether to remain idle, take a specific port, or declare termination (we will see that agent $A$ always stays colocated with its guide and can always identify it unambiguously). If, in a round $r$, $A$'s guide decides to transit itself to state {\tt shadow} and selects an agent $C$ as its guide, then, in round $r+1$, $A$'s guide is $C$.\\


  \noindent
{\bf State} {\tt token}

While in this state, agent $A$ remains idle and does nothing until it is, in a round $r$, in one of the following two situations:
\begin{enumerate}
\item agent $A$ shares its current node with an agent in state {\tt explorer} that transits to state {\tt searcher},
\item or all agents in state {\tt explorer} at the node occupied by $A$ declare termination.\footnote{We write ``all agents in state {\tt explorer}'' instead of ``an agent in state {\tt explorer}'' to formally avoid what might appear as a conflict in the description\textemdash namely, at the occupied by $A$, one {\tt explorer} switching to state {\tt searcher} while another declares termination. In practice, however, this can never happen: we prove that only one agent in state {\tt explorer} will declare termination, and at that moment no other agents are in this state.}
\end{enumerate}

In the first situation (resp. second situation) agent $A$ also transits to state {\tt searcher} (resp. declares termination) in round $r$.\\

  \noindent
{\bf State} {\tt explorer}

We will show that when agent $A$ enters state {\tt explorer} at a node $v$, there is exactly one agent $B$ that simultaneously enters state {\tt token} at the same node. We will also show that $B$ remains in state {\tt token} (and thus idle at node $v$) as long as $A$ remains in state {\tt explorer}. In this state, agent $A$ executes the protocol given in algorithm~\ref{alg:algexplo}, which uses the notion of \emph{seniority} that is defined as follows. The seniority of an agent $B$ in state {\tt token} (resp. {\tt explorer}) in a round $r'$ is the difference $r'-r$, where $r$ is the round in which $B$ entered state {\tt token} (resp. state {\tt explorer}). Since, in view of the description of the states, an agent can enter state {\tt token} or {\tt explorer} at most once, the difference is well defined. The protocol also uses function ${\tt EST}^+$, which corresponds to procedure {\tt EST} but with the following three changes. Consider an execution of ${\tt EST}^+$ initiated by agent $A$ in a round $t$ from a node $u$. The first change is that to start this execution, we need to give function ${\tt EST}^+$ any memory $M$ as input. The second change is that each time agent $A$ encounters an agent $B$ that is in state {\tt token} and such that $M_B(t)=M$,\footnote{Note that even if agent $A$ is not with $C$ in round $t$, it can determine $M_C(t)$ when meeting $C$ in a round $t'>t$ using $M_C(t')$} agent $A$ considers it is with its token. The third change is that ${\tt EST}^+$ returns a triple $(b,\eta,trace)$. The first element, $b$, is a Boolean. Its value is true if, during the execution of ${\tt EST}^+$, agent $A$ encounters an agent $C$ in state {\tt token} of either higher seniority or equal seniority but such that $M \prec M_C(t)$. Otherwise, $b$ is false. The second element, $\eta$, corresponds to the number of nodes in the BFS tree constructed during the execution of ${\tt EST}^+$ (recall that, when executing ${\tt EST}$, an agent constructs such a tree). Finally the third element, $trace$, is, as the name suggests, a trace of the execution. It is represented as the sequence $(p_1,q_1,m_1,p_2,q_2,m_2,\ldots,p_k,q_k,m_k)$ that satisfies the following two conditions. The first condition is that $(p_1,q_1,p_2,q_2,\ldots,p_k,q_k)$ is the path followed by agent $A$ during the execution of ${\tt EST}^+$. The second condition is that for every $1\leq i\leq k$, $m_i=1$ (resp. $m_i=0$) if in round $t+i$ agent $A$ is (resp. is not) with a token (in the sense of the second change given above).

\begin{algorithm}[H]
\label{alg:algexplo}
\caption{Procedure executed by agent $A$ while in state {\tt explorer}}
\SetNoFillComment
$counter :=1$\;
$x := $ the decimal number whose binary representation is $\mathcal{K}$\label{l:2}\; 
$U := 2^{(2^x)}$; $trace_{old} :=$ the empty list\label{l:3}\;
$\tau :=$ the number of rounds spent by agent $A$ in state {\tt cruiser}\; 
\Repeat{$counter= U^{25\beta}$ \OR $b = \true$\label{l:repeat}}{
$M :=$ the memory of the agent in state {\tt token} that is currently at the same node as $A$\label{l:4}\;
$(b,\eta,trace_{new}) :=$ {\tt EST$^+(M)$}\label{l:5}\;
\tcc{Agent $A$ is back at the node from which it has started the previous execution of EST$^+$}
\If{$b = \false$\label{l:if}}%
{
wait $\sum_{i=1}^{11\lceil \beta \log (\eta U\tau) \rceil}$ $2^i$ rounds\label{l:if2} \;
\eIf{$trace_{old}=trace_{new}$}
{$counter := counter+1$\;}
{$counter := 1$\;}
$trace_{old} := trace_{new}$\;
}
}
\eIf{$b = \true$\label{l:afterrepeat1}}
{transit to state {\tt searcher}\;\label{l:afterrepeat2}}
{declare termination\;\label{l:afterrepeat3}}
\end{algorithm}

~\\

\subsection{Correctness and Complexity Analysis}
\label{sec:correct}
All the lemmas given in this subsection are stated under the following three assumptions. The first assumption is that the team of agents is obviously gatherable. The second assumption is that the algorithm executed by an agent when it wakes up is procedure ${\mathcal{HG}}$. The third assumption is that the advice $\mathcal{K}$, given by the oracle, corresponds to the binary representation of $\lceil\log\log (\mu+1)\rceil$, where $\mu$ is the multiplicity index of the team. Thus, for ease of reading, these assumptions will not be explicitly repeated in the statements of the lemmas.

Furthermore, since $\mathcal{K}$ corresponds to the binary representation of $\lceil\log\log(\mu+1)\rceil$, it follows that variable $U$ used in Algorithm~\ref{alg:algexplo} is necessarily equal to $2^{(2^{\lceil\log\log(\mu+1)\rceil})}$. Thus, for simplicity, we will reuse $U$ as the notation for this value throughout the proof below. Finally, we will denote by $r_0$ the first round in which at least one agent wakes up. 

We start with the following lemma that is a direct consequence of procedure $\mathcal{HG}$.

\begin{lemma}
\label{lem:onlyone}
Let $S$ be any of the $5$ states of procedure $\mathcal{HG}$. If an agent leaves state $S$ in a round $r$, it cannot return to this state in any subsequent round. 
\end{lemma}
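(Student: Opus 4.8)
The plan is to read off, directly from the description of procedure $\mathcal{HG}$, the complete list of transitions between the five states, and then to exhibit a ranking of the states that strictly increases along every genuine state change. Once such a ranking is in place the lemma follows at once, since returning to an already-left state would force the rank to decrease. This is why the lemma is announced as a direct consequence of the procedure: no dynamic argument is needed, only a careful inspection of the transition rules.

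First I would record all possible transitions. Upon waking up an agent starts in state {\tt cruiser}, and no rule of $\mathcal{HG}$ ever makes an agent enter {\tt cruiser} afterwards. From {\tt cruiser} an agent can move only to {\tt token}, {\tt explorer} or {\tt shadow} (Cases~1 and~2). From {\tt token} it can move only to {\tt searcher} (situation~1) or declare termination (situation~2). From {\tt explorer} it can move only to {\tt searcher} (line~\ref{l:afterrepeat2}) or declare termination (line~\ref{l:afterrepeat3}). From {\tt searcher} the only transition is to {\tt shadow}. Finally, once in {\tt shadow} the agent remains there forever (a change of guide is internal and does not alter its state), until it possibly declares termination by mimicking its guide. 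Crucially, all the internal looping within a state\textemdash the phases $i=1,2,\dots$ in {\tt cruiser}, the successive calls to ${\tt EXPLO}$ in {\tt searcher}, and the \textbf{Repeat} loop of Algorithm~\ref{alg:algexplo} in {\tt explorer}\textemdash keeps the agent in the very same state, and therefore does not count as leaving or re-entering a state.

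Then I would define the ranking $\phi$ with $\phi({\tt cruiser})=0$, $\phi({\tt explorer})=\phi({\tt token})=1$, $\phi({\tt searcher})=2$ and $\phi({\tt shadow})=3$, and assign the value $4$ to the absorbing situation in which the agent has declared termination (by the model, such an agent definitively stops and never executes again). Inspecting the transition list above, every transition that actually changes the agent's state strictly increases $\phi$ (note that there is no transition between the two rank-$1$ states {\tt token} and {\tt explorer}, so equal-rank jumps never occur), while the only transition preserving $\phi$ is the {\tt shadow}-to-{\tt shadow} loop, which does not change the state at all. Consequently $\phi$ is non-decreasing along the whole execution of any agent. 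Hence, if an agent leaves a state $S$ in round $r$, its rank in round $r+1$ is strictly larger than $\phi(S)$, and by monotonicity its rank stays strictly above $\phi(S)$ in every round $r'>r$; being in state $S$ in round $r'$ would require rank exactly $\phi(S)$, a contradiction. For $S={\tt shadow}$ the statement holds vacuously, since an agent never leaves {\tt shadow}.

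I expect the only real care needed\textemdash rather than any genuine obstacle\textemdash is to be exhaustive in enumerating the transitions and to treat the degenerate cases correctly: the change of guide inside {\tt shadow} (which is not a state change), the declaration of termination (which permanently halts the agent and hence forbids re-entering any state), and the internal loops of {\tt cruiser}, {\tt searcher} and {\tt explorer} (which do not leave the current state). Once these are handled, the monotone ranking $\phi$ closes the argument uniformly for all five states.
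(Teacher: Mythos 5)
Your proof is correct and matches the paper's intent: the paper states Lemma~\ref{lem:onlyone} without proof, as ``a direct consequence of procedure $\mathcal{HG}$,'' and your enumeration of the transitions ({\tt cruiser} to \{{\tt token}, {\tt explorer}, {\tt shadow}\}, {\tt token}/{\tt explorer} to {\tt searcher} or termination, {\tt searcher} to {\tt shadow}, {\tt shadow} absorbing) together with the monotone ranking $\phi$ is exactly the inspection the authors leave implicit. The degenerate cases you flag (guide changes inside {\tt shadow}, internal loops, termination as an absorbing rank) are handled correctly, so nothing is missing.
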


The next two lemmas establish key properties related to state {\tt token}.

\begin{lemma}
\label{lem:tokappears}
There exists an agent that enters state {\tt token} within at most a polynomial number of rounds in $n$ and $|\lambda|$ after round $r_0$.
\end{lemma}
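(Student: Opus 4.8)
The plan is to reduce the statement to bounding the time of the \emph{first meeting} among agents in state {\tt cruiser} or {\tt token}. First I would observe that an agent can enter state {\tt token} only through Case~1 of state {\tt cruiser}, and that Case~1 requires at least two colocated agents in state {\tt cruiser}. Since tokens (and indeed every non-{\tt cruiser} state) are produced only at such meetings, before the very first meeting all non-dormant agents are in state {\tt cruiser}; consequently the first meeting necessarily falls under Case~1 and, as memories are pairwise distinct at a common node by Lemma~\ref{lem:diff}, the agent with the largest memory enters state {\tt token}. Thus it suffices to show that the first meeting occurs within poly$(n,|\lambda|)$ rounds of $r_0$. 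I would also record at the outset that, since the team is gatherable and $k\geq 2$, its symmetry index equals $1$, so not all agents can share a single label; in particular there exist an agent $X$ with label $\lambda$ and an agent $X'$ with label $\ell_{X'}\neq\lambda$, whence $\min(|\ell_X|,|\ell_{X'}|)=|\lambda|$.

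Next I would make the phase structure explicit. Phase $i$ of state {\tt cruiser} consists of ${\tt EXPLO}(2^i)$, whose length is a \emph{fixed} number of rounds at most $2^{i\beta}$ (a universal exploration sequence has fixed length, independent of the graph and starting node), followed by exactly $2^i$ rounds of ${\tt TZ}(\ell_A)$. Hence each phase has a fixed duration identical for all agents, so any two agents still in state {\tt cruiser} progress through phases in lockstep up to the offset between their wake-up rounds. Summing durations, the first $\lceil\log x\rceil$ phases last at most a fixed polynomial $H(x)$; I would fix such an $H$ and the target phase index $j:=\lceil\log n\rceil+\lceil\alpha\log(n|\lambda|)\rceil$.

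I would then argue by contradiction: assume no meeting occurs before round $R:=r_0+H(\lceil\log n\rceil)+D(j)$, where $D(j)$ is the (polynomial) total duration of the first $j$ phases. Under this assumption every non-dormant agent stays a {\tt cruiser} throughout, and I split into two cases. If some agent is not woken by the adversary by round $r_0+H(\lceil\log n\rceil)$, then it is still dormant at its starting node (a wake-up by a visit would itself be a meeting); meanwhile the earliest agent completes phase $\lceil\log n\rceil$ by that round, and its ${\tt EXPLO}(2^{\lceil\log n\rceil})$ visits every node since $2^{\lceil\log n\rceil}\geq n$, thereby reaching and meeting the dormant agent\textemdash a contradiction. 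Otherwise all agents are awake by $r_0+H(\lceil\log n\rceil)$, so the fresh executions of ${\tt TZ}$ that $X$ and $X'$ launch in their respective phase $j$ begin within a window $W\leq H(\lceil\log n\rceil)$ of each other and each lasts $2^j$ rounds; choosing $j$ so that $2^j\geq W+(n|\lambda|)^\alpha$, the meeting bound of ${\tt TZ}$ forces $X$ and $X'$ to meet while both are executing their phase-$j$ ${\tt TZ}$, again a contradiction. Hence a meeting\textemdash and therefore a {\tt token}\textemdash appears by round $R=r_0+{}$poly$(n,|\lambda|)$.

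The main obstacle I anticipate is the synchronization bookkeeping in the second case: the ${\tt TZ}$ guarantee counts rounds from the later start of a \emph{single, uninterrupted} ${\tt TZ}$ run, whereas in state {\tt cruiser} each agent restarts ${\tt TZ}$ from scratch every phase and runs it for only $2^j$ rounds. The care needed is to verify that the two phase-$j$ runs of $X$ and $X'$ genuinely overlap for at least $(n|\lambda|)^\alpha$ consecutive rounds\textemdash this is exactly where the fixed-duration phase structure (so that the wake-up offset is precisely the run-start offset) and the inequality $2^j\geq W+(n|\lambda|)^\alpha$ enter\textemdash and that any earlier interruption of $X$ or $X'$ by another meeting only establishes the conclusion sooner.
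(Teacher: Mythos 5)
Your proposal is correct and follows essentially the same route as the paper's proof: reduce to bounding the first meeting among agents in state {\tt cruiser}, argue by contradiction that all agents remain cruisers, use the completion of phase $\lceil\log n\rceil$ via {\tt EXPLO} to bound the wake-up spread (otherwise a dormant agent is visited and a meeting occurs), and then invoke the {\tt TZ} meeting guarantee for two agents with distinct labels (which exist since the symmetry index is $1$) in a late enough phase whose {\tt TZ} run absorbs both the start offset and the $(n|\lambda|)^\alpha$ meeting bound. The only cosmetic difference is the choice of the decisive phase index: the paper takes phase $\lceil\log T\rceil$ with $T=(n|\lambda|)^\alpha+(8n)^\beta$, whereas your initially announced $j=\lceil\log n\rceil+\lceil\alpha\log(n|\lambda|)\rceil$ need not satisfy $2^j\geq W+(n|\lambda|)^\alpha$ when $\beta\gg\alpha$; your later hedge ``choosing $j$ so that $2^j\geq W+(n|\lambda|)^\alpha$'' fixes this and keeps the bound polynomial.
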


\begin{proof}
In Section~\ref{sec:preli}, we mention that if two agents $X$ and $X'$ start executing procedure ${\tt TZ}$ in possibly different rounds, then they will meet after at most $(n\cdot\min(|\ell_X|,|\ell_{X'}|))^{\alpha}$ rounds since the start of the later agent, for some integer constant $\alpha\geq 2$. We also mention that for any positive integer $N\geq2$, procedure ${\tt EXPLO}(N)$ lasts at most $N^{\beta}$ rounds, for some integer constant $\beta\geq 2$.

In this proof, constants $\alpha$ and $\beta$  play a central role. Indeed, we show below that some agent necessarily enters state {\tt token} by round $r_0+(16T)^{\beta}$ with $T=(n|\lambda|)^{\alpha}+(8n)^{\beta}$, which is enough to prove the lemma.

Assume by contradiction that no agent enters state {\tt token} by round $r_0+(16T)^{\beta}$. We have the following claim.

\begin{claim}
\label{claim1}
Let $X$ be any agent that wakes up in a round $r_0 \leq r \leq r_0+(16T)^{\beta}$. Agent $X$ never meets an agent in state {\tt token} or {\tt cruiser} before round $r_0+(16T)^{\beta}$ and it remains in state {\tt cruiser} from round $r$ to round $r_0+(16T)^{\beta}$ included.
\end{claim}

\begin{proofclaim}
Suppose by contradiction that the claim does not hold. According to Algorithm $\mathcal{HG}$, an agent starts in state {\tt cruiser} in the round of its wake-up and decides to transit from state {\tt cruiser} to another state in some round if, and only if, it meets an agent in state {\tt token} or {\tt cruiser} in this round. Moreover, recall that when an agent decides to transit to a new state in some round, it enters this state in the next round. Hence, there is a round $r\leq r' \leq r_0+(16T)^{\beta}-1$ in which agent $X$ decides to transit from state {\tt cruiser} to another state because, in round $r'$, it is at some node $v$ with an agent in state {\tt token} or {\tt cruiser}. In the first case, it means that an agent enters state {\tt token} by round $r' \leq r_0+(16T)^{\beta}-1$, which contradicts the assumption that no agent enters state {\tt token} by round $r_0+(16T)^{\beta}$. Hence, at node $v$ in round $r'$, there is no agent in state {\tt token} and there are at least two agents in state {\tt cruiser} (including agent $X$). Let $\mathcal{S}$ be the set of all agents in state {\tt cruiser} at node $v$ in round $r$. In view of Lemma~\ref{lem:diff}, no two agents of $\mathcal{S}$ can have the same memory in round $r$. Hence, according to the rules applied by an agent in state {\tt cruiser}, the agent having the largest memory in round $r$ among the agents of $\mathcal{S}$ necessarily enters state {\tt token} in round $r'+1\leq r_0+(16T)^{\beta}$. This contradicts again the assumption that no agent enters state {\tt token} by round $r_0+(16T)^{\beta}$, which concludes the proof of this claim.
\end{proofclaim}

~\newline

While in state {\tt cruiser}, any agent $X$ works in phases $i= 1,2,\ldots$ where phase $i$ consists of an execution of procedure ${\tt EXPLO}(2^i)$ followed by an execution, for $2^i$ rounds, of procedure ${\tt TZ}(\ell_X)$. All of this is interrupted only when it meets another agent in state {\tt cruiser} or {\tt token}. For every positive integer $k$ an entire execution of the first $\lceil\log k\rceil$ phases of state {\tt cruiser} lasts at most $\sum_{i=1}^{\lceil\log k\rceil}(2^{i\beta}+2^i)$, which is upper bounded by $(8k)^{\beta}$. Hence, an entire execution of the first $\lceil\log n\rceil$ phases lasts at most $(8n)^{\beta}$. Let $A$ be any agent that wakes up in round $r_0$. 
Since $(8n)^{\beta}<(16T)^{\beta}$, we know by Claim~\ref{claim1} that agent $A$ is in state {\tt cruiser} from round $r_0$ to round $r_0+(8n)^{\beta}$ included. In view of the explanations given just above, it follows that an entire execution of phase $\lceil\log n \rceil$, and thus of procedure ${\tt EXPLO}(2^{\lceil\log n \rceil})$, is completed by agent $A$ by round $r_0+ (8n)^{\beta}$. Since an execution of procedure ${\tt EXPLO}(2^{\lceil\log n \rceil})$ allows to visit every node of the underlying graph at least once, it follows that every agent enters state {\tt cruiser} by round $r_0+(8n)^{\beta}$. Indeed, otherwise it means that by some round $r\leq r_0+ (8n)^{\beta}$, agent $A$ enters a node occupied by a dormant agent that immediately enters state {\tt cruiser} in round $r$, which contradicts Claim~\ref{claim1}. As a result, we have the guarantee that the delay between the wake-ups of any two agents is at most $(8n)^{\beta}$ rounds and, in view of Claim~\ref{claim1}, every agent is in state {\tt cruiser} from round $r_0 + (8n)^{\beta}$ to round $r_0+(16T)^{\beta}$ included. 

We stated above that, for every positive integer $k$, an entire execution of the first $\lceil\log k\rceil$ phases of state {\tt cruiser} is upper bounded by $(8k)^{\beta}$. This implies that an entire execution of the first $\lceil\log T\rceil$ phases is upper bounded by $(8T)^{\beta}$. Moreover, $(8n)^{\beta}+(8T)^{\beta}<(16T)^{\beta}$, which means that every agent is in state {\tt cruiser} from round $r_0 + (8n)^{\beta}$ to round $r_0+(8n)^{\beta}+(8T)^{\beta}$ included. Hence, an entire execution of phase $\lceil\log T \rceil$ is completed by every agent by round  $r_0+(8n)^{\beta}+(8T)^{\beta}$. During its execution of phase $\lceil\log T \rceil$, any agent $X$ starts an execution of procedure ${\tt TZ}(\ell_X)$ for at least $T\geq (n|\lambda|)^{\alpha}+(8n)^{\beta}$ rounds. Since the delay between the wake-ups of any two agents is at most $(8n)^{\beta}$, the delay between the starting times of procedure ${\tt TZ}$ by any two agents in phase $\lceil\log T \rceil$ is also at most $(8n)^{\beta}$ rounds. Furthermore, by Theorem~\ref{theo:charac} and the assumption that the team is gatherable, it follows that the team must contain at least two agents with distinct labels. Hence, there is an agent $B$ with label $\ell_B=\lambda$ (resp. an agent $B'$ with a label $\ell_{B'}\ne\lambda$) that executes  ${\tt TZ}(\lambda)$ (resp. ${\tt TZ}(\ell_{B'})$), from round $r'$ to round $r'+ (n|\lambda|)^{\alpha}$ included, where $r'$ is the later of the starting rounds of the two agents of procedure ${\tt TZ}$ in phase $\lceil\log T \rceil$. According to the properties of procedure {\tt TZ} recalled at the beginning of this proof, agent $B$ meets agent $B'$ in some round $r'\leq r''\leq r'+ (n|\lambda|)^{\alpha}$. Note that agent $B$ and $B'$ are necessarily woken up in round $r''$. Besides, $r'+ (n|\lambda|)^{\alpha}<r_0+(8n)^{\beta}+(8T)^{\beta}$ as an entire execution of phase $\lceil\log T \rceil$ is completed by every agent by round  $r_0+(8n)^{\beta}+(8T)^{\beta}$. In view of this and Claim~\ref{claim1},  agent $B$ and $B'$ are both in state {\tt cruiser} when they meet in round $r''$ and $r''< r_0+(16T)^{\beta}$. This is a contradiction with Claim~\ref{claim1}. Hence, some agent enters state {\tt token} by round $r_0+(16T)^{\beta}$, which concludes the proof of the lemma.
\end{proof}

\begin{lemma}
\label{lem:notwotok}
An agent enters state {\tt token} (resp. {\tt explorer}) at a node $v$ in a round $r$ if, and only if, exactly one agent enters state {\tt explorer} (resp. {\tt token}) at node $v$ in round $r$. Moreover, in every round, there can never be two agents in state {\tt token} that occupy the same node.
\end{lemma}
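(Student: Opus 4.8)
The plan is to trace every entry into states {\tt token} and {\tt explorer} back to a single mechanism, Case~1 of state {\tt cruiser}, and then exploit two structural facts: an agent in state {\tt token} stays put (it is idle while in that state), and Case~1 fires only when no {\tt token} is present among the meeting agents. Throughout I keep in mind the transition-timing convention, namely that a transition decided in round $r-1$ takes effect in round $r$.

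First I would observe, by inspecting the state descriptions of $\mathcal{HG}$, that the only transition producing a {\tt token} is Case~1 of state {\tt cruiser} (the agent of largest memory), and that the only transition producing an {\tt explorer} is that same Case~1 (the agent of smallest memory). Now fix a round $r$ and a node $v$ at which some agent enters state {\tt token}. That agent's transition was decided during a Case~1 event at $v$ in round $r-1$: a set $\mathcal{S}$ of colocated agents, all in state {\tt cruiser}, with $|\mathcal{S}|\geq 2$. By Lemma~\ref{lem:diff} the members of $\mathcal{S}$ have pairwise distinct memories in round $r-1$, so the largest-memory and smallest-memory agents of $\mathcal{S}$ are well defined and, since $|\mathcal{S}|\geq 2$, distinct. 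The former enters state {\tt token} at $v$ in round $r$ and the latter enters state {\tt explorer} at $v$ in round $r$. Because the set of {\tt cruiser}s at $v$ in round $r-1$ is uniquely determined, this is the only Case~1 event at $v$ in round $r-1$, and it produces exactly one {\tt token} and exactly one {\tt explorer} at $v$ in round $r$; conversely, any agent entering state {\tt explorer} at $v$ in round $r$ must arise from this same event, which simultaneously produces one {\tt token}. This yields both biconditionals together with the ``exactly one'' counts (the statement obtained by exchanging the roles of ``largest'' and ``smallest'' is proved identically).

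For the ``moreover'' part I would argue by contradiction. Suppose agents $A$ and $B$ are both in state {\tt token} at the same node $v$ in some round $r$. Since an agent in state {\tt token} remains idle, and by Lemma~\ref{lem:onlyone} never re-enters {\tt token} once it has left it, each of $A$ and $B$ has occupied $v$ continuously in state {\tt token} from the round it entered {\tt token} up to round $r$; in particular each entered {\tt token} at $v$. Let $r_A$ and $r_B$ be these entry rounds, and assume without loss of generality $r_A\leq r_B$. If $r_A=r_B$, then two agents enter {\tt token} at $v$ in the same round, contradicting the first part of the lemma, which shows a Case~1 event produces exactly one {\tt token}. If $r_A<r_B$, then in round $r_B-1$ agent $A$ is already in state {\tt token} at $v$; hence when the colocated {\tt cruiser} $B$ evaluates its transition in round $r_B-1$, the set $\mathcal{S}$ contains a {\tt token}, placing it in Case~2, so $B$ transits to state {\tt shadow} rather than {\tt token}\textemdash contradicting that $B$ enters {\tt token} in round $r_B$. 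In either case we reach a contradiction, so no node ever holds two {\tt token}s simultaneously.

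The main obstacle I anticipate is the temporal bookkeeping in the ``moreover'' part: one must invoke the timing convention correctly, justify via Lemma~\ref{lem:onlyone} and the idleness of {\tt token}s that a {\tt token} present at $v$ in round $r$ has occupied $v$ continuously since its entry, and verify that the definition of $\mathcal{S}$ in state {\tt cruiser} indeed forces Case~2 the instant any {\tt token} shares the node. Once these points are pinned down, the contradiction in the $r_A<r_B$ case is immediate, since Case~1\textemdash the sole source of new {\tt token}s\textemdash explicitly excludes the presence of a {\tt token}.
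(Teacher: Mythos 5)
Your proposal is correct and follows essentially the same route as the paper's proof: both parts trace every {\tt token}/{\tt explorer} entry back to a Case~1 event of state {\tt cruiser}, use Lemma~\ref{lem:diff} to make the extremal-memory agents unique, and derive the ``moreover'' part by the same two-case analysis on the entry rounds (identical rounds contradict uniqueness/distinct memories; distinct rounds contradict the requirement that no {\tt token} be present for Case~1 to fire). The only cosmetic difference is that in the equal-rounds case you invoke the first part of the lemma where the paper appeals directly to Lemma~\ref{lem:diff}, which is equivalent.
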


\begin{proof}
According to Algorithm~$\mathcal{HG}$, an agent $A$ can enter state {\tt token} (resp. {\tt explorer}) at a node $v$ in a round $r$ if, and only if, the following three conditions are satisfied at node $v$ in round $r-1$: 
\begin{enumerate}
\item The set $\mathcal{S}$ of the agents in state {\tt cruiser} contains at least two agents, including agent $A$.
\item Agent $A$ has the largest memory (resp. the smallest memory) among the agents of $\mathcal{S}$. 
\item There is no agent in state {\tt token}. 
\end{enumerate}

Moreover, Lemma~\ref{lem:diff} implies that the memories of any two agents of $\mathcal{S}$ in round $r-1$ are different. Hence, an agent enters state {\tt token} (resp. {\tt explorer}) at a node $v$ in a round $r$ if, and only if, exactly one agent enters state {\tt explorer} (resp. {\tt token}) at node $v$ in round $r$.

To prove the lemma, it now remains to show that in every round, there can never be two agents in state {\tt token} that occupy the same node. Assume, by contradiction, there is a round $r'$ in which two agents $A$ and $B$ are in state {\tt token} at the same node $u$. Let $r'_A$ (resp. $r'_B$) be the round in which agent $A$ (resp. $B$) enters state {\tt token} ($r'_A$, as well as $r'_B$, is  unique and at most equal to $r'$ in view of Lemma~\ref{lem:onlyone}). According to what is stated above and the fact that an agent never moves while in state {\tt token}, we know that agent $A$ (resp. $B$) is in state {\tt cruiser} at node $u$ in round $r'_A-1$ (resp. $r'_B-1$) and has the largest memory among the agents in state {\tt cruiser} at node $u$ in round $r'_A-1$ (resp. $r'_B-1$). We also know that agent $A$ (resp. $B$) is in state {\tt token} at node $u$ from round $r'_A-1$ (resp. $r'_B-1$) to round $r'$ included.

If $r'_A=r'_B$ then agents $A$ and $B$ are both in state {\tt cruiser} at node $u$ in round $r'_A-1$ and both have the largest memory among the agents in state {\tt cruiser} (and thus the same memory) at node $u$ in round $r'_A-1$. However, this directly contradicts Lemma~\ref{lem:diff}. 

Consequently, we necessarily have $r'_A\ne r'_B$ and we can suppose, without loss of generality, that $r'_A<r'_B$. It follows that at node $u$ in round $r'_B-1$, agent $A$ (resp. $B$) is in state {\tt token} (resp. {\tt cruiser}). However, according to the three conditions given at the beginning of this proof, agent $B$ can enter state {\tt token} in round $r'_B$ at node $u$ only if there is no agent in state {\tt token} in round $r'_B-1$ at node $u$. This means that agent $B$ cannot enter state {\tt token} in round $r'_B$: this is a contradiction with the definition of round $r'_B$.

As a result, in every round, there can never be two agents in state {\tt token} that occupy the same node, which ends the proof of the lemma.
\end{proof}

Consider any agent $A$ (resp. $B$) that ends up entering state {\tt explorer} (resp. {\tt token}). By Lemma~\ref{lem:onlyone}, we know that agent $A$ (resp. $B$) enters state {\tt explorer} (resp. {\tt token}) exactly once. By Algorithm~\ref{alg:algexplo}, we also know that once an agent becomes {\tt explorer} (resp. {\tt token}), it will never enter state {\tt token} (resp. {\tt explorer}) thereafter. Hence, we can unambiguously define ${\tt home}(A)$ (resp. ${\tt home}(B)$) as the node at which agent $A$ (resp. $B$) enters state {\tt explorer} (resp. {\tt token}). Moreover, when it does so, there is exactly one agent that simultaneously enters state {\tt token} (resp. {\tt explorer})  at ${\tt home}(A)$ (resp. ${\tt home}(B)$) in view of Lemma~\ref{lem:notwotok}. In the remainder, this other agent will be denoted by ${\tt tok}(A)$ (resp. ${\tt exp}(B)$) (since agent $A$ (resp. $B$) enters state {\tt explorer} (resp. {\tt token}) exactly once, the definition of ${\tt tok}(A)$ (resp. ${\tt exp}(B)$) is unambiguous). By a slight misuse of language, we will occasionally refer to ${\tt tok}(A)$ as the token of $A$ and ${\tt exp}(B)$ as the explorer of $B$. Note that we have ${\tt home}(A)={\tt home}(B)$ if $B={\tt tok}(A)$ (or equivalently if $A={\tt exp}(B)$).
Having established these definitions, we can now proceed to the next lemma.

\begin{lemma}
\label{lem:exp}
Let $\mathcal{E}$ be an execution of ${\tt EST}^+$ started in a round $r$ by an agent $A$. We have the following four properties.
\begin{enumerate}
\item In $\mathcal{E}$, the input given to function ${\tt EST}^+$ is $M_{{\tt tok}(A)}(r)$.
\item $\mathcal{E}$ lasts at most $8n^5$ rounds.
\item In round $r$ and in the round when $\mathcal{E}$ is completed, agent $A$ (resp. ${\tt tok}(A)$) is in state {\tt explorer} (resp. {\tt token}) at ${\tt home}(A)$.
\item Let $(*,\eta,*)$ be the triple returned by function ${\tt EST}^+$ when $\mathcal{E}$ is completed. If there exists no agent $X\ne {\tt tok}(A)$ such that $M_X(r)=M_{{\tt tok}(A)}(r)$, then all nodes of the underlying graph $G$ are visited by $A$ during $\mathcal{E}$ and $\eta=n$, otherwise $1\leq\eta\leq n$.
\end{enumerate}
\end{lemma}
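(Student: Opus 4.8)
The plan is to establish first a structural invariant that underlies all four properties: \emph{as long as $A$ is in state {\tt explorer}, its token ${\tt tok}(A)$ stays in state {\tt token} and idle at ${\tt home}(A)$, and $A$ begins (and, by the return-to-start property of {\tt EST}, ends) every execution of ${\tt EST}^+$ at ${\tt home}(A)$}. I would prove this by a combined induction over the successive ${\tt EST}^+$ calls of Algorithm~\ref{alg:algexplo}. The base case is immediate: right after $A$ and ${\tt tok}(A)$ simultaneously enter states {\tt explorer} and {\tt token} at ${\tt home}(A)$ (Lemma~\ref{lem:notwotok}), the only instructions before the first call of ${\tt EST}^+$ (lines~\ref{l:2}--\ref{l:4}) are local computations, so $A$ is still at ${\tt home}(A)$ and ${\tt tok}(A)$ is the unique {\tt token} there. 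For the step, the crucial observation is that every navigation inside ${\tt EST}^+$ follows real edges of $G$ (each admitted tree node is a genuine neighbor of its parent), so the tree paths used to return to the root describe real walks ending at ${\tt home}(A)$; hence $A$ returns to ${\tt home}(A)$ at the end of each call, and the waiting of line~\ref{l:if2} does not move it. To see that ${\tt tok}(A)$ cannot leave {\tt token} prematurely, I would argue that a {\tt token} leaves only when colocated with an {\tt explorer} that simultaneously transits to {\tt searcher} or declares termination, that these transitions occur (by the control flow of Algorithm~\ref{alg:algexplo} and the return-to-start property) only at the transiting explorer's own home, and that no explorer other than $A$ can have home ${\tt home}(A)$ while ${\tt tok}(A)$ sits there, since by Lemma~\ref{lem:notwotok} a new {\tt token}/{\tt explorer} pair cannot appear at a node already occupied by a {\tt token} (and, by Lemma~\ref{lem:onlyone}, a departed explorer never returns). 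Thus ${\tt tok}(A)$ leaves {\tt token} exactly in the round $A$ leaves {\tt explorer}. This invariant immediately yields Properties~1 and~3: at the start round $r$ of $\mathcal{E}$, $A$ is at ${\tt home}(A)$ and ${\tt tok}(A)$ is the unique {\tt token} there, so the input read on line~\ref{l:4} is $M_{{\tt tok}(A)}(r)$; and both round $r$ and the completion round (at which $A$ has not yet transited out of {\tt explorer}) satisfy the stated state and position conditions.

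For Property~4's general bound $1\le\eta\le n$ and for Property~2, I would show that the tree built by $\mathcal{E}$ never places two tree nodes on the same node of $G$. Suppose a tree node $c$ already corresponds to a graph node $u$ when the agent later reaches $u$ again while testing a candidate $x=u$ for admission. The correspondence check starts from $u$ and follows the reverse of the real walk encoded by the root-to-$c$ path; since that walk is genuine, none of the three failure conditions of ${\tt EST}$ is triggered along the way, and at its end the agent is back at the root ${\tt home}(A)$, where ${\tt tok}(A)$ sits in state {\tt token} with $M_{{\tt tok}(A)}(r)=M$ (recoverable from its current memory via the prefix argument of the footnote in Algorithm~\ref{alg:algexplo}). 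Hence the check confirms the correspondence and $x$ is rejected, so no duplicate tree node is ever created. Consequently the tree has at most $n$ nodes, and at least one (its root), giving $1\le\eta\le n$; and since the time analysis of ${\tt EST}$ bounds the number of rounds by a quantity depending only on the (at most $n$) tree nodes, the same $8n^5$ bound carries over to $\mathcal{E}$, proving Property~2.

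For the first part of Property~4, I would use the hypothesis that ${\tt tok}(A)$ is the unique agent with memory $M_{{\tt tok}(A)}(r)=M$ in round $r$. Then the only agent that $A$ can ever treat as its token during $\mathcal{E}$ is ${\tt tok}(A)$ itself, which by the invariant is stationary at the root ${\tt home}(A)$; every other token has a different start-memory and is therefore never counted as ``the token'' by the second modification defining ${\tt EST}^+$. Thus $\mathcal{E}$ behaves exactly like an execution of the original ${\tt EST}$ with a single fixed token at the starting node and none elsewhere, for which all three failure conditions are evaluated correctly (no false positives, and, by the previous paragraph, no false negatives). I would then invoke the guarantee of ${\tt EST}$ that it constructs a port-labeled BFS \emph{spanning} tree of $G$, so that every node is visited and $\eta=n$.

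The main obstacle I expect is the structural invariant of the first paragraph, and in particular disentangling the apparent circularity between ``$A$ returns to ${\tt home}(A)$ at the end of each ${\tt EST}^+$'' and ``${\tt tok}(A)$ remains a stationary token at ${\tt home}(A)$''. The clean way around it is to observe that return-to-start holds \emph{unconditionally}, relying only on the fact that every edge recorded in the constructed tree is a real edge of $G$, independently of any misidentification of the token; the idleness of ${\tt tok}(A)$ and the correctness of the correspondence checks can then be layered on top without circular reasoning. Care is also needed to justify that $A$ correctly compares $M_C(r)$, rather than $M_C(t')$ at a later meeting time $t'$, for every encountered token $C$, which is precisely what the recovery property in the footnote of Algorithm~\ref{alg:algexplo} provides.
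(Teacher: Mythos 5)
Your proposal is correct and follows essentially the same route as the paper: both hinge on the invariant that ${\tt tok}(A)$ remains a stationary token at ${\tt home}(A)$ while $A$ is in state {\tt explorer} (the paper's property $\Psi(\mathcal{E},A)$), on the observation that token misidentification can only cause wrongful rejections\textemdash never wrongful admissions\textemdash so the constructed tree is a non-empty, possibly truncated BFS spanning tree of order at most $n$ (yielding Properties~2 and the bound $1\leq\eta\leq n$), and on the single-token correctness of ${\tt EST}$ for the first part of Property~4. The only difference is organizational: where you close the circular dependency by a direct induction exploiting that no explorer other than $A$ can have home ${\tt home}(A)$ while ${\tt tok}(A)$ sits there, the paper reaches the very same uniqueness conclusion (${\tt tok}(A)={\tt tok}(A'')$, hence $A=A''$, contradicting that $A''$ already left state {\tt explorer}) via a minimal invalidating round $r^*$ taken over all executions of ${\tt EST}^+$ by all agents.
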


\begin{proof}
Before starting, just observe that an agent can start an execution of ${\tt EST}^+$ only when in state {\tt explorer} and it cannot decide to leave this state during an execution of ${\tt EST}^+$. This implies that agent $A$ is in state {\tt explorer} in round $r$ and in the round when $\mathcal{E}$ is completed if such a round exists (of course, we show below that such a round exists). This observation must be kept in mind when reading the proof as it will not always be repeated in order to lighten the text. 

We now start this proof by assuming that property $\Psi(\mathcal{E},A)$, made of the following two conditions, is satisfied:

\begin{itemize}
\item In round $r$, agent $A$ is at ${\tt home}(A)$  with ${\tt tok}(A)$ that is in state {\tt token} and
\item during $\mathcal{E}$, agent ${\tt tok}(A)$ never decides to switch from state {\tt token} to state {\tt searcher} or to declare termination.
\end{itemize} 

Our goal is to first establish that, under this assumption, the lemma holds, and then to demonstrate that property $\Psi(\mathcal{E},A)$ is necessarily satisfied. Note that property $\Psi(\mathcal{E},A)$ immediately implies that ${\tt tok}(A)$ always remains in state {\tt token} at ${\tt home}(A)$ during $\mathcal{E}$ and, in particular, in the round when $\mathcal{E}$ is completed, if any. Also note that property $\Psi(\mathcal{E},A)$ and Lemma~\ref{lem:notwotok} ensure that ${\tt tok}(A)$ is the only agent in state {\tt token} at ${\tt home}(A)$ in round $r$. Thus, by lines~\ref{l:4} and~\ref{l:5} of Algorithm~\ref{alg:algexplo}, the input given to function ${\tt EST}^+$ in $\mathcal{E}$ is necessarily well defined and indeed corresponds to $M_{{\tt tok}(A)}(r)$, which proves the first property of the lemma.

As mentioned in Section~\ref{sec:preli}, an execution of procedure ${\tt EST}$ by an agent $Y$ in a setting where a single token is present in the underlying graph $G$, at the node from which $Y$ starts this procedure, allows this agent to visit all nodes of $G$ in at most $8n^5$ rounds. Additionally, at the end of the execution, agent $Y$ is back with its token and has constructed a BFS spanning tree of $G$. Hence, if there is no agent $X\ne {\tt tok}(A)$ such that $M_X(r)=M_{{\tt tok}(A)}(r)$, agent $A$ can never confuse its token with another one during $\mathcal{E}$ and thus, in view of property $\Psi(\mathcal{E},A)$, the second, third and fourth properties of the lemma also hold.

Consequently, suppose that there is at least one agent $X\ne {\tt tok}(A)$ such that $M_X(r)=M_{{\tt tok}(A)}(r)$. In this case, agent $A$ may sometimes mistakenly confuse its token ${\tt tok}(A)$ with another agent. By taking a close look at the description of procedure ${\tt EST}$ (cf. Section~\ref{sec:preli}), on which ${\tt EST}^+$ is based, we can observe that, during $\mathcal{E}$, agent $A$ never takes into account the presence or absence of ${\tt tok}(A)$ except in one specific situation. This occurs only when it decides, during the process of a node $c(x)$ of the BFS tree $T$ under construction, corresponding to a node $x$ of $G$, whether it adds to $T$ a node $c(x')$ corresponding to a neighbor $x'$ of $x$ (i.e, whether a node corresponding to $x'$ has already been added before or not). This means that confusing its token ${\tt tok}(A)$ with another agent may only occur at times of these decisions. However, while such a confusion can lead agent $A$, during the process of $c(x)$, to wrongly reject $x'$ if no corresponding node has not yet been added to the tree, it cannot lead the agent to wrongly admit $x'$ i.e., adding to $T$ a node corresponding to $x'$ while such a node has been previously added. Nor can such a confusion prevent the process of $c(x)$ from being completed at the node from which it is started, i.e., $x$. Also observe that execution $\mathcal{E}$ consists of alternating periods of two different types. The first type corresponds to periods when agent $A$ processes a node of the BFS tree. The second type corresponds to periods when:

\begin{itemize}
\item Agent $A$, having just finished processing a node of $T$ corresponding to some node $w$ in $G$, moves from $w$ to a node corresponding to an unprocessed node of $T$, following a path from this tree.
\item Or there is no remaining node to process and agent $A$, having just finished the last process of a node of $T$ corresponding to some node $w$ in $G$, moves from $w$ to the node corresponding to the root of $T$.
\end{itemize}

From these explanations, property $\Psi(\mathcal{E},A)$ and the fact that the root of the BFS tree $T$ constructed by $A$ corresponds to the node from which $\mathcal{E}$ is started by $A$ (actually it is the first node to be processed), we get two consequences. The first consequence is that the tree constructed by agent $A$ during $\mathcal{E}$ is either a spanning tree of $G$ or a non-empty truncated spanning tree of $G$ (i.e., a non-empty tree that can be obtained from a spanning tree of $G$ by removing one or more of its subtrees). The second consequence is that execution $\mathcal{E}$ is eventually completed at the node from which $\mathcal{E}$ is started by $A$. In view of these two consequences and property $\Psi(\mathcal{E},A)$, the third and fourth properties of the lemma hold. Concerning the second property, note that, whether or not there is at least one agent $X\ne {\tt tok}(A)$ such that $M_X(r)=M_{{\tt tok}(A)}(r)$, a period of the first (resp. second) type, during $\mathcal{E}$, takes at most $4n^3$ (resp. $2n$) rounds. Moreover, in $\mathcal{E}$, the number of periods of the first type, as well as the number of periods of the second type, is equal to the number of nodes that are added to BFS tree, i.e., at most $n$ (we explained above that the tree spans all or part of $G$). Thus, $\mathcal{E}$ indeed lasts at most $8n^5$ rounds, which proves the second property. This concludes the proof that the lemma is true if property $\Psi(\mathcal{E},A)$ is satisfied.

It now remains to show that property $\Psi(\mathcal{E},A)$ is necessarily satisfied. We will say that a round $r^*$ invalidates $\Psi(\mathcal{E},A)$ if at least one of the following conditions is met:

\begin{itemize}
\item $\mathcal{E}$ starts in round $r^*$ and either agent $A$ is not at ${\tt home}(A)$ in round $r^*$ or ${\tt tok}(A)$ is not in state {\tt token} at ${\tt home}(A)$ in round $r^*$.
\item $\mathcal{E}$ starts no later than round $r^*$ but is not completed by round $r^*$, and, in this round, ${\tt tok}(A)$ either decides to switch from state {\tt token} to state {\tt searcher} or to declare termination.
\end{itemize}  

Assume by contradiction that $\Psi(\mathcal{E},A)$ is not satisfied. This implies that there exists a round $r^*$ that invalidates $\Psi(\mathcal{E},A)$. Also assume, without loss of generality, that $r^*$ is minimal in the following precise sense: for every execution $\mathcal{E}'$ of function ${\tt EST}^+$ by any agent $A'$ in state {\tt explorer}, if a round $r'$ invalidates $\Psi(\mathcal{E}',A')$ then $r^*\leq r'$. (Note that $\mathcal{E}'$ (resp. $A'$) is not necessarily different from $\mathcal{E}$ (resp. $A$).) 

First consider the case in which $\mathcal{E}$ starts in round $r^*$
and either agent $A$ is not at ${\tt home}(A)$ in round $r^*$ or ${\tt
  tok}(A)$ is not in state {\tt token} at ${\tt home}(A)$ in round
$r^*$. Note that the first execution of function ${\tt EST}^+$ by
agent $A$ starts in the round when agent $A$ enters state {\tt
  explorer} at ${\tt home}(A)$. Moreover, by the definition of ${\tt
  tok}(A)$, ${\tt tok}(A)$ enters state {\tt token} at ${\tt home}(A)$
in the round when $A$ enters state {\tt explorer}. Hence,
$\mathcal{E}$ cannot correspond to the first execution of function
${\tt EST}^+$ by agent $A$. In other words, $\mathcal{E}$ corresponds
to the $i$th execution of function ${\tt EST}^+$ by agent $A$, for
some $i\geq 2$.

Denote by $\mathcal{E}'$ the $(i-1)$th execution of function ${\tt EST}^+$ by agent $A$ and denote by $r'_1$ (resp. $r'_2$) the round in which $\mathcal{E}'$ starts (resp. is completed). In view of Algorithm~$\mathcal{HG}$, agent $A$ can execute function ${\tt EST}^+$ only when in state {\tt explorer}, which, by Lemma~\ref{lem:onlyone}, it enters at most once. In particular, this means that agent $A$ is in state {\tt explorer} from round $r'_1$ to (at least) round $r^*$ included. By the definition of $\mathcal{E}'$, we have $r'_1\leq r'_2\leq r^*$, and thus, by the minimality of $r^*$, we get the guarantee that no round invalidates property $\Psi(\mathcal{E}',A)$. From this and the fact, proven above, that the four properties of the lemma hold if $\Psi(\mathcal{E}',A)$ is satisfied, we know that the triple $(b,\eta,*)$ returned by function ${\tt EST}^+$, when $\mathcal{E}'$ is completed in round $r'_2$, is such that $1\leq\eta\leq n$. By line~\ref{l:repeat} of Algorithm~\ref{alg:algexplo} and the fact that agent $A$ is in state {\tt explorer} from round $r'_1$ to (at least) round $r^*$ included, we also know that the first element $b$ of the triple is the Boolean value false. Hence, by lines~\ref{l:if} and \ref{l:if2} of Algorithm~\ref{alg:algexplo}, we know that agent $A$ starts in round $r'_2$ a waiting period of some positive duration (the duration is well defined and positive, since $1\leq\eta\leq n$ and $U=2^{(2^{\lceil\log\log (\mu+1)\rceil})}\geq 2$). This waiting period is necessarily completed in round $r^*$ because agent $A$ is in state {\tt explorer} from round $r'_2$ to (at least) round $r^*$ included and the next execution of function ${\tt EST}^+$ by agent $A$, after $\mathcal{E}'$, is $\mathcal{E}$ that starts in round $r^*$. This implies that agent $A$ is at ${\tt home}(A)$ in round $r^*$, as agent $A$ is already at ${\tt home}(A)$ in round $r'_2$ by the fact that no round invalidates property $\Psi(\mathcal{E}',A)$.

Consequently, for the case under analysis, we know that ${\tt tok}(A)$ is not in state {\tt token} at ${\tt home}(A)$ in round $r^*$. This means that, in some round $r''<r^*$ at node ${\tt home}(A)$, ${\tt tok}(A)$ decides to switch from state {\tt token} to state {\tt searcher} or to declare termination. According to the description of states {\tt token} and {\tt explorer}, this is due to the fact that an agent $A''$ in state explorer has completed an execution $\mathcal{E}''$ of function ${\tt EST}^+$ at ${\tt home}(A)$ in some round  $r'''\leq r''$ and then has waited $r''-r'''$ rounds (while staying in state {\tt explorer}) before deciding in round $r''$ to switch from state {\tt explorer} to state {\tt searcher} or to declare termination. Note that, in view of the minimality of $r^*$, $\Psi(\mathcal{E}'',A'')$ necessarily holds. Therefore, according to what has been proved in the first part of the proof, we can state that, in round $r'''$, agent $A''$ is with ${\tt tok}(A'')$ at ${\tt home}(A'')$ and ${\tt tok}(A'')$ is in state {\tt token}. If $r''=r'''$, then ${\tt tok}(A)$ and ${\tt tok}(A'')$ are both in state {\tt token} at ${\tt home}(A)={\tt home}(A'')$, and thus ${\tt tok}(A)={\tt tok}(A'')$, or otherwise we get a contradiction with Lemma~\ref{lem:notwotok}. If $r'''<r''$, we can also get the guarantee that ${\tt tok}(A)={\tt tok}(A'')$ but it requires a bit more explanation. Precisely, if $r'''<r''$, agent $A''$ is in state {\tt explorer} at ${\tt home}(A)={\tt home}(A'')$ in each round of $[r'''\ldotp\ldotp r'']$. Hence, if no agent transits from state {\tt explorer} to state {\tt searcher} at ${\tt home}(A'')$ in some round of $[r'''\ldotp\ldotp r''-1]$, we know, by the description of state {\tt token}, that ${\tt tok}(A'')$ is in state {\tt token} at ${\tt home}(A'')$ in each round of $[r'''\ldotp\ldotp r'']$. As above, this necessarily means that ${\tt tok}(A)$ and ${\tt tok}(A'')$ are both in state {\tt token} at ${\tt home}(A)={\tt home}(A'')$, and thus ${\tt tok}(A)={\tt tok}(A'')$ by Lemma~\ref{lem:notwotok}. But what if some agent $X$ transits from state {\tt explorer} to state {\tt searcher} at ${\tt home}(A'')$ in some round $t$ of $[r'''\ldotp\ldotp r''-1]$?  In view of the minimality of $r^*$ and lines~\ref{l:5}-\ref{l:afterrepeat2} of Algorithm~\ref{alg:algexplo}, we know that, in round $t$, agent $X$ is with ${\tt tok}(X)$ at ${\tt home}(X)={\tt home}(A'')$ and ${\tt tok}(X)$ is in state {\tt token}. Moreover, assuming, without loss of generality, that $t$ is the first round of $[r'''\ldotp\ldotp r''-1]$ in which an agent transits from state {\tt explorer} to state {\tt searcher} at ${\tt home}(A'')$, ${\tt tok}(A'')$ is also in state {\tt token} at ${\tt home}(X)={\tt home}(A'')$ in round $t$, in view of the description of state {\tt token} and the fact that in each round of $[r'''\ldotp\ldotp t]$ ${\tt tok}(A'')$  is with an agent in state {\tt explorer} that does not declare termination (namely $A''$). Thus, ${\tt tok}(X)={\tt tok}(A'')$ by Lemma~\ref{lem:notwotok}, which implies that $X=A''$ because, by definition, ${\tt tok}(X)$ is the token of exactly one agent, namely $X$. However, by Lemma~\ref{lem:onlyone}, an agent can enter state {\tt explorer} at most once, which means that agent $A''=X$ cannot be in state {\tt explorer} in round $r''>t$. This is a contradiction confirming that, for the first case of our analysis, we must have ${\tt tok}(A)={\tt tok}(A'')$.

By definition, the fact that ${\tt tok}(A)={\tt tok}(A'')$ implies that $A=A''$. However, agent $A''$ decides to leave state {\tt explorer} in round  $r''<r^*$ and agent $A$ is still in this state in round $r^*$, which means that $A''\ne A$ in view of Lemma~\ref{lem:onlyone}. This is a contradiction that shows that the first case of our analysis cannot occur.



The second case to consider is when $\mathcal{E}$ starts by round $r^*$ but is not completed by round $r^*$, and ${\tt tok}(A)$ decides in this round to switch from state {\tt token} to state {\tt searcher} or to declare termination. However, using similar arguments as those in the analysis of the first case, we again reach a contradiction.

As a result, round $r^*$ cannot invalidate property $\Psi(\mathcal{E},A)$. This shows that $\Psi(\mathcal{E},A)$ is necessarily satisfied and concludes the proof of the lemma. 
\end{proof}

The next lemma highlights some relationship between a token and its explorer.

\begin{lemma}
\label{lem:transit}
Agent $A$ transits from state {\tt explorer} to state {\tt searcher} in a round $r$ if, and only if, ${\tt tok}(A)$ transits from state {\tt token} to state {\tt searcher} in round $r$.
\end{lemma}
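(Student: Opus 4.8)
The plan is to prove the two implications separately, in each case reducing the high-level statement about state transitions to the concrete triggering conditions in Algorithm~\ref{alg:algexplo} and in the description of state {\tt token}, and then gluing the explorer's behaviour to its token's behaviour by means of Lemma~\ref{lem:exp} and Lemma~\ref{lem:notwotok}. The single structural fact that does all the work is the third property of Lemma~\ref{lem:exp}: in the round in which an execution of ${\tt EST}^+$ by an explorer is completed, that explorer is in state {\tt explorer} and its token is in state {\tt token}, both located at the corresponding {\tt home} node. I would first record two book\-keeping facts. By lines~\ref{l:repeat} and~\ref{l:afterrepeat1}--\ref{l:afterrepeat2} of Algorithm~\ref{alg:algexplo}, an agent leaves state {\tt explorer} for state {\tt searcher} in a round $r$ \emph{exactly} when the execution of ${\tt EST}^+$ that it has just completed in round $r$ returns $b=\true$. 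And by the rules of state {\tt token}, an agent leaves state {\tt token} for state {\tt searcher} in a round $r$ \emph{exactly} when, in that round, it is co-located with an agent in state {\tt explorer} that transits to state {\tt searcher} (its first situation).

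For the ``only if'' direction, suppose $A$ transits from {\tt explorer} to {\tt searcher} in round $r$. Then $r$ is the completion round of an execution of ${\tt EST}^+$ returning $b=\true$, so by the third property of Lemma~\ref{lem:exp} both $A$ (in state {\tt explorer}) and ${\tt tok}(A)$ (in state {\tt token}) lie at ${\tt home}(A)$ in round $r$. Since ${\tt tok}(A)$ is then co-located with the explorer $A$, which transits to {\tt searcher} in round $r$, its first situation fires and ${\tt tok}(A)$ transits from {\tt token} to {\tt searcher} in round $r$ as well.

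For the ``if'' direction, suppose ${\tt tok}(A)$ transits from {\tt token} to {\tt searcher} in round $r$. Its first situation must be the cause, so there is an agent $X$ in state {\tt explorer}, located in round $r$ at the node of ${\tt tok}(A)$ (which, as a token that never moves while in state {\tt token}, is ${\tt home}(A)$), and $X$ transits to {\tt searcher} in round $r$. Since $X$ transits to {\tt searcher} in round $r$, that round is the completion round of $X$'s last execution of ${\tt EST}^+$, so by the third property of Lemma~\ref{lem:exp} both $X$ and ${\tt tok}(X)$ lie at ${\tt home}(X)$ in round $r$, with ${\tt tok}(X)$ in state {\tt token}. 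As $X$ is located at ${\tt home}(A)$, this forces ${\tt home}(X)={\tt home}(A)$; hence ${\tt tok}(X)$ and ${\tt tok}(A)$ are two agents in state {\tt token} occupying the same node in the same round, and Lemma~\ref{lem:notwotok} gives ${\tt tok}(X)={\tt tok}(A)$. Since ${\tt tok}(\cdot)$ is injective (by definition ${\tt tok}(X)$ is the token of exactly one agent, namely $X$), this yields $X=A$, so $A$ transits from {\tt explorer} to {\tt searcher} in round $r$.

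I expect the main delicate point to be the round-level bookkeeping rather than the logical skeleton: one must verify that the ``transit'' convention (an agent stays idle in round $r$ and enters the new state in round $r+1$) makes the transit round $r$ coincide \emph{exactly} with the completion round of the relevant ${\tt EST}^+$ execution, which is precisely what licenses the use of the third property of Lemma~\ref{lem:exp} to co-locate an explorer with its token. A secondary point to dispatch is the apparent ambiguity in the rules of state {\tt token} between its two situations (an explorer transiting to {\tt searcher} versus explorers declaring termination); here the footnote attached to state {\tt token}, guaranteeing that these cannot occur simultaneously, ensures that the first situation is unambiguously the trigger in both directions.
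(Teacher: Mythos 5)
Your proof is correct, and for half of the statement it takes a genuinely leaner route than the paper. The backward direction is essentially the paper's argument, phrased directly rather than by contradiction: the token's transition is triggered by some explorer $X$ transiting to {\tt searcher} at its node; since that transit round is the completion round of $X$'s last ${\tt EST}^+$ execution, the third property of Lemma~\ref{lem:exp} places ${\tt tok}(X)$ in state {\tt token} at ${\tt home}(X)={\tt home}(A)$, Lemma~\ref{lem:notwotok} forces ${\tt tok}(X)={\tt tok}(A)$, and injectivity of ${\tt tok}$ gives $X=A$. For the forward direction, however, the paper does substantially more work: it first isolates an auxiliary claim (an explorer is at its home when it switches to {\tt searcher} or terminates) and then proves, via a first-bad-round contradiction, the stronger invariant that \emph{in every round} an explorer sitting at its home is accompanied by its token in state {\tt token}, before concluding. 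You shortcut this by observing that the transit round is exactly the completion round of the ${\tt EST}^+$ execution returning $b=\true$ (line~\ref{l:if2} being skipped in that case), so the third property of Lemma~\ref{lem:exp} applies directly to co-locate $A$ with ${\tt tok}(A)$ still in state {\tt token} and fire the token's first transition rule. This is legitimate — it is precisely the reasoning the paper itself deploys in its backward direction — and it renders the paper's case analysis unnecessary; what it gives up is the stronger every-round invariant, which the paper does not reuse elsewhere anyway. The two delicate points you flag (the coincidence of the transit round with the ${\tt EST}^+$ completion round, and the disambiguation between the two triggering situations of state {\tt token}) are exactly the right ones, and both are handled correctly.
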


\begin{proof}
From the round when an agent has completed its last execution of line~\ref{l:5} of Algorithm~\ref{alg:algexplo} to the round when it executes line~\ref{l:afterrepeat2} or \ref{l:afterrepeat3} of Algorithm~\ref{alg:algexplo} (this round included), the agent does not move and is always in state {\tt explorer}. Hence, by Lemma~\ref{lem:exp}, we have the following claim.

\begin{claim}
\label{claim27}
When an agent $B$ in state {\tt explorer} declares termination or switches to state {\tt searcher}, it is necessarily at ${\tt home}(B)$. 
\end{claim}

Below, we prove both directions of the equivalence separately. We start with the forward direction of the equivalence. Thus, assume that an agent $A$ transits from state {\tt explorer} to state {\tt searcher} in a round $r$. We need to show that ${\tt tok}(A)$ transits from state {\tt token} to state {\tt searcher} in round $r$. Note that, according to Claim~\ref{claim27}, agent $A$ is in state {\tt explorer} at ${\tt home}(A)$ in round $r$.

There is a case that can be handled easily. It is the one where, in each round, if an agent $B$ is in state {\tt explorer} at ${\tt home}(B)$, then ${\tt tok}(B)$ is in state {\tt token} at ${\tt home}(B)$ in the same round. Indeed, this case immediately implies that ${\tt tok}(A)$ is in state {\tt token} at ${\tt home}(A)$ with agent $A$ in round $r$. Consequently, by the transition rule given in the description of state {\tt token}, we know that agent ${\tt tok}(A)$ transits from state {\tt token} to state {\tt searcher} in round $r$.



So, consider the complementary case in which there exists a round where an agent $B$ is in state {\tt explorer} at ${\tt home}(B)$, but ${\tt tok}(B)$ is not in state {\tt token} at ${\tt home}(B)$ in this round. Let $r'$ be the first round in which this occurs. In view of the description of state {\tt token}, it turns out that there is a round $r''<r'$ in which the following two conditions are satisfied: $(1)$ ${\tt tok}(B)$ (resp. some agent $D$) is in state {\tt token} (resp. {\tt explorer}) at ${\tt home}(B)$ and $(2)$ agent $D$ declares termination or switches to state {\tt searcher}. According to Claim~\ref{claim27}, we know that agent $D$ is at ${\tt home}(D)$ in round $r''$. Therefore ${\tt home}(B)={\tt home}(D)$. Moreover, by the definition of round $r'$ and the fact that $r''<r'$, we have the guarantee that ${\tt tok}(D)$ is in state {\tt token} at ${\tt home}(B)={\tt home}(D)$ in round $r''$. Given that ${\tt tok}(B)$ and ${\tt tok}(D)$ are both in state {\tt token} at ${\tt home}(B)={\tt home}(D)$ in round $r''$, we necessarily have  ${\tt tok}(B)={\tt tok}(D)$ by Lemma~\ref{lem:notwotok}. This implies that $B=D$. However, since, in round $r''$, agent $D$ is in state {\tt explorer} and declares termination or switches to state {\tt searcher}, we know that $B\ne D$ in view of Lemma~\ref{lem:onlyone} and the fact that an agent declaring termination in round $r''$ is considered to have no state in any subsequent round. This is a contradiction, which concludes the proof of the forward direction of the equivalence.


Now, consider the backward direction. Assume for the sake of contradiction that ${\tt tok}(A)$ transits from state {\tt token} to state {\tt searcher} in round $r$, but agent $A$ does not transit from state {\tt explorer} to state {\tt searcher} in a round $r$. In view of the description of state {\tt token} and the fact that while in state {\tt token} an agent does not move, we know that, in round $r$, agent ${\tt tok}(A)$ is at ${\tt home}(A)$ with an agent $B\ne A$ that transits from state {\tt explorer} to state {\tt searcher}. According to Claim~\ref{claim27}, we have ${\tt home}(A)={\tt home}(B)$. Moreover, since $B$ transits from state {\tt explorer} to state {\tt searcher} in round $r$, we know that, during its last iteration of the repeat loop of Algorithm~\ref{alg:algexplo}, the execution of ${\tt EST}^+$ returns $(\true,*,*)$ and line~\ref{l:if2} of Algorithm~\ref{alg:algexplo} is not executed. Thus, an execution of ${\tt EST}^+$ is completed by agent $B$ in round $r$, which means, by Lemma~\ref{lem:exp}, that ${\tt tok}(B)$ is in state {\tt token} at ${\tt home}(A)={\tt home}(B)$ in round $r$. From this point, we can again reach a contradiction by showing both that $A=B$ and $A\ne B$, following similar arguments to those used in the analysis of the forward direction. This concludes the proof of the backward direction of the equivalence, and thus the proof of the lemma.
\end{proof}

In the sequel, we denote by $r_1$ the first round in which at least one agent enters state {\tt explorer}. This round, which is involved in the next lemma, exists by Lemmas~\ref{lem:tokappears} and~\ref{lem:notwotok}.

\begin{lemma}
\label{claim2}
Let $EXP$ be the (non-empty) set of agents that enters state {\tt explorer} in round $r_1$. There exists an agent of $EXP$ that never switches from state {\tt explorer} to state {\tt searcher}.
\end{lemma}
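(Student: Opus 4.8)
The plan is to single out, among the agents that enter state {\tt token} in round $r_1$, those whose memory in round $r_1$ is $\prec$-maximal, and to prove that at least one of them keeps its explorer in state {\tt explorer} forever in the sense that this explorer never switches to state {\tt searcher}. By Lemma~\ref{lem:notwotok}, the agents entering state {\tt token} in round $r_1$ are exactly $\{{\tt tok}(A) : A \in EXP\}$, so it suffices to exhibit such a token, since its explorer then lies in $EXP$. Concretely, I would let $M^\star$ be the $\prec$-largest value of $M_Z(r_1)$ taken over all agents $Z$ entering state {\tt token} in round $r_1$, and let $\mathcal{C}^\star$ be the (non-empty, finite) set of those tokens $Z$ with $M_Z(r_1)=M^\star$. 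The goal is to show that some $Z\in\mathcal{C}^\star$ is such that ${\tt exp}(Z)$ never switches to state {\tt searcher}.

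First I would record two facts governing when the explorer of a $\mathcal{C}^\star$-token can switch. By Lemma~\ref{lem:exp}, whenever an explorer $A={\tt exp}(Z)$ starts an execution of ${\tt EST}^+$ in a round $t$, the input is $M_Z(t)$ and $t\ge r_1$; moreover, by the rule of Algorithm~\ref{alg:algexplo}, $A$ switches to state {\tt searcher} only if, during some such execution, it meets an agent $C$ in state {\tt token} of strictly higher seniority, or of equal seniority with $M_Z(t)\prec M_C(t)$. Since $r_1$ is, by definition and Lemma~\ref{lem:notwotok}, the first round in which any agent enters state {\tt token}, every token entered in a round $\ge r_1$, so $Z\in\mathcal{C}^\star$ has maximal seniority and the ``strictly higher seniority'' case is impossible; hence the culprit $C$ must also have entered state {\tt token} in round $r_1$ (equal seniority) and satisfy $M_Z(t)\prec M_C(t)$. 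Second, such a culprit must itself lie in $\mathcal{C}^\star$: were $M_C(r_1)\prec M^\star=M_Z(r_1)$, the heredity of $\prec$ (Lemma~\ref{lem:prec}) would force $M_C(t)\prec M_Z(t)$ for all $t\ge r_1$, contradicting $M_Z(t)\prec M_C(t)$; and $M^\star\prec M_C(r_1)$ is excluded by maximality of $M^\star$. Thus $M_C(r_1)=M^\star$, i.e. $C\in\mathcal{C}^\star$.

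With these two facts, I would argue by contradiction, assuming that the explorer of every $\mathcal{C}^\star$-token switches to state {\tt searcher}. Starting from an arbitrary $Z_1\in\mathcal{C}^\star$, the switch of ${\tt exp}(Z_1)$ yields, by the two facts above, a token $Z_2\in\mathcal{C}^\star$ and a round $t_1\ge r_1$ with $M_{Z_1}(t_1)\prec M_{Z_2}(t_1)$; iterating (each $Z_{i+1}\in\mathcal{C}^\star$, so ${\tt exp}(Z_{i+1})$ switches as well), the switch of ${\tt exp}(Z_i)$ produces $Z_{i+1}\in\mathcal{C}^\star$ and a round $t_i\ge r_1$ with $M_{Z_i}(t_i)\prec M_{Z_{i+1}}(t_i)$. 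Fixing any $k$ and setting $s=\max(t_1,\dots,t_k)$, heredity (Lemma~\ref{lem:prec}) propagates each inequality $M_{Z_i}(t_i)\prec M_{Z_{i+1}}(t_i)$ to round $s\ge t_i$, yielding the strict chain $M_{Z_1}(s)\prec M_{Z_2}(s)\prec\cdots\prec M_{Z_{k+1}}(s)$. Since $\prec$ is a strict total order, the agents $Z_1,\dots,Z_{k+1}$ are pairwise distinct, so $|\mathcal{C}^\star|\ge k+1$ for every $k$, contradicting the finiteness of $\mathcal{C}^\star$. Hence some $Z\in\mathcal{C}^\star$ has an explorer that never switches to state {\tt searcher}, and that explorer belongs to $EXP$, which proves the lemma.

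The step I expect to be most delicate is the synchronization of the comparisons across time: the inequalities $M_{Z_i}(t_i)\prec M_{Z_{i+1}}(t_i)$ are witnessed at a priori different rounds $t_i$, so they do not automatically form a chain in a single fixed order. The crux is to transport all of them, via the heredity of Lemma~\ref{lem:prec}, to one common later round $s$, where $\prec$ is a genuine strict total order and an ascending chain longer than $|\mathcal{C}^\star|$ cannot exist. This is also precisely what handles the possibility that several tokens share the $\prec$-maximal memory $M^\star$ in round $r_1$ (so that a naive ``pick the unique maximum'' would fail): the ties inside $\mathcal{C}^\star$ are exactly what the finite-ascending-chain argument rules out. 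The remaining points to check carefully are routine: that each culprit token is genuinely in state {\tt token} when met (which is part of the switching condition), and that Lemma~\ref{lem:transit} keeps explorer and token behavior consistent so that the iteration producing $Z_1,Z_2,\dots$ is well defined.
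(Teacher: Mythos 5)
Your proof is correct. The paper's own argument is the same kind of extremal argument at its core: assume every agent of $EXP$ switches, extract the culprit token $B$ that triggers a switch, observe that the ``higher seniority'' case is impossible because $r_1$ is the first round in which any agent enters state {\tt token}, deduce that $B={\tt tok}(B')$ for some $B'\in EXP$, and use the heredity of $\prec$ (Lemma~\ref{lem:prec}) to contradict a maximality choice. The difference lies in where the maximum is taken. The paper sets $r^*$ to be the latest of the switching rounds over $EXP$ and chooses $A'$ maximizing $M_{{\tt tok}(A)}(r^*)$; the single inequality $M_{{\tt tok}(A')}(r)\prec M_{{\tt tok}(B')}(r)$ produced by the switch of $A'$, transported to round $r^*$ via Lemma~\ref{lem:prec}, then contradicts that maximality in one step. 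You maximize instead at round $r_1$, where several tokens may share the maximal memory (Lemma~\ref{lem:diff} only separates co-located agents), so you must additionally build the culprit chain $Z_1,Z_2,\dots$, transport all the witnessed inequalities to a common later round, and invoke a finite-ascending-chain/pigeonhole argument to rule out the ties. Both routes are sound; the paper's choice of evaluating the maximum at a late enough round is precisely what makes your chain argument unnecessary.
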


\begin{proof}
 Assume, for the sake of contradiction, that the claim does not hold. For every agent $A$ of $EXP$, denote by $r_A$ the round in which it decides to switch from state {\tt explorer} to state {\tt searcher} (this round is unique by Lemma~\ref{lem:onlyone}). Let $r^*$ be the latest of these rounds and let $A'$ be an agent of $EXP$ such that for every $A\in EXP$, $M_{r^*}({\tt tok}(A))$ is smaller than or equal to $M_{r^*}({\tt tok}(A'))$. 
 
 

By the first property of Lemma~\ref{lem:exp}, we know that at the beginning of every execution of function ${\tt EST}^+$ by agent $A'$ the input of the function is well defined and corresponds to the current memory of ${\tt tok}(A')$. Consequently, the description of state {\tt explorer} and the fact that $A'$ decides to switch from state {\tt explorer} to state {\tt searcher} in round $r_{A'}$ implies that, in some round $r_1\leq r\leq r_{A'}$, there is an agent $B$ in state {\tt token} that has either $(1)$ a higher seniority than $A'$ or $(2)$ the same seniority as $A'$ but $M_{r}({\tt tok}(A'))\prec M_{r}(B)$.

By the definition of $EXP$, agent $A'$ is among the earliest to enter state {\tt explorer}, which means that the first case cannot occur. Therefore, the second case must occur. As a result, agent $B$ enters state {\tt token} in round $r_1$ and thus, there is an agent $B'$ such that $B={\tt tok}(B')$ (or equivalently such that $B'={\tt exp}(B)$) that enters state {\tt explorer} in round $r_1$. This implies that agent $B'$ belongs to $EXP$. Moreover, in view of Lemma~\ref{lem:transit} and the fact that $B$ enters state {\tt token} at most once (cf. Lemma~\ref{lem:onlyone}), we know that $r_{B'}\geq r$. Thus, in view of Lemma~\ref{lem:prec}, the inequality $M_{r}({\tt tok}(A'))\prec M_{r}({\tt tok}(B'))$ implies that $M_{r^*}({\tt tok}(A'))\prec M_{r^*}({\tt tok}(B'))$. From this and the fact that $B'\in EXP$, we get a contradiction with the definition of $A'$. Hence, the claim is necessarily true.
\end{proof}

Until now, we have essentially established properties concerning states {\tt token} and {\tt explorer}. The following lemma introduces some properties related to state {\tt shadow}.

\begin{lemma}
\label{lem:guide}
Let $A$ be an agent in state {\tt shadow} in a round $r$ at a node $v$. Agent $A$ has exactly one guide $B$ in round $r$. Moreover, in this round, agent $B$ is at node $v$, can be unambiguously identified by $A$ and is either in state {\tt searcher} or {\tt token}.
\end{lemma}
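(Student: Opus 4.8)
The plan is to prove the statement by induction on the rounds during which $A$ is in state \texttt{shadow}, maintaining throughout the stronger invariant that, in every such round $r'$, agent $A$ has a single well-defined guide $B(r')$ that is colocated with $A$, is in state \texttt{token} or \texttt{searcher}, and is unambiguously identifiable by $A$. The two facts I would lean on are Lemma~\ref{lem:notwotok} (a node hosts at most one \texttt{token} in any round) and Lemma~\ref{lem:diff} (two agents sharing a node in the same round carry distinct memories); the latter is what ultimately lets $A$ single out its guide among the agents it observes.

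For the base case I would inspect the three ways an agent enters state \texttt{shadow}. In Case~1 of state \texttt{cruiser}, the guide is by definition the largest-memory agent of $\mathcal{S}$, which simultaneously enters state \texttt{token} at the meeting node $v$; in Case~2 of state \texttt{cruiser} and in state \texttt{searcher}, the guide is an agent already in state \texttt{token} at $v$. In all three cases the designated guide sits at $v$, hence is colocated with $A$ in the first round $A$ is a \texttt{shadow}, and by Lemma~\ref{lem:notwotok} it is the \emph{unique} \texttt{token} present at $v$, which makes it both unique and identifiable as ``the \texttt{token} at my node''. A point I would need to verify here is that this freshly designated guide is genuinely in state \texttt{token} (rather than already transiting away) in that first round; I would establish this from the transition rules of state \texttt{token} together with the local configuration at the meeting node, noting that a transit to \texttt{searcher} would in any case still leave an admissible guide.

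For the inductive step, assuming the invariant in round $r'$ and that $A$ is still a \texttt{shadow} in round $r'+1$, I would split on the state of $B=B(r')$. If $B$ is a \texttt{token}, then in round $r'$ it stays idle (so $A$ mimics the idle action and keeps the same guide), transits to \texttt{searcher} (so $A$ again stays put and $B$ is an admissible guide in round $r'+1$), or declares termination (so $A$ mimics termination and is no longer a \texttt{shadow}, leaving nothing to prove). If $B$ is a \texttt{searcher}, then in round $r'$ it either moves, in which case $A$ copies the move and remains colocated with the same guide, or it meets a \texttt{token} $C$ at $v$ and transits to state \texttt{shadow} selecting $C$ as its guide; in the latter case the explicit reassignment rule of state \texttt{shadow} makes $C$ the guide of $A$ in round $r'+1$, and since $A$ was colocated with $B$ and $C$ lies at $v$ and is unique by Lemma~\ref{lem:notwotok}, $A$ is colocated with the unique \texttt{token} $C$. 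In every branch the guide in round $r'+1$ is colocated, unique, and in state \texttt{token} or \texttt{searcher}.

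The crux of the argument, and the part I expect to be the main obstacle, is the identifiability claim. The key observation is that $A$ and its guide start colocated and, by the mirroring rule, take identical actions in every round, so they remain colocated forever and $A$ can reconstruct round by round the full memory of its guide from the guide's memory in the previous round, the common action, and $A$'s local observations. By Lemma~\ref{lem:diff}, no two agents sharing $A$'s node carry the same memory, so this reconstructed memory pins the guide down uniquely; whenever the guide is (re)assigned to a fresh \texttt{token}, Lemma~\ref{lem:notwotok} yields the same conclusion directly. The delicate bookkeeping is to check that this reconstruction stays consistent exactly across the \texttt{searcher}-to-\texttt{shadow} reassignment, where the guide changes identity while colocation and identifiability must be transferred from $B$ to $C$ without gap.
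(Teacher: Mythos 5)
Your proof is correct and follows essentially the same route as the paper: an induction on the rounds spent in state {\tt shadow}, with a base case analyzing the entry modes into that state (using Lemma~\ref{lem:notwotok} for token uniqueness) and an inductive step splitting on whether the current guide is a {\tt token} or a {\tt searcher}, including the reassignment when the guide itself becomes a {\tt shadow}. The only cosmetic difference is in the identifiability step for a moving {\tt searcher} guide, where the paper filters candidates by the port of entry and then applies Lemma~\ref{lem:diff} at the previous node, while you reconstruct the guide's memory and apply Lemma~\ref{lem:diff} at the current node\textemdash the two mechanisms are interchangeable.
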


\begin{proof}
Let $r'$ be the round in which agent $A$ enters state {\tt shadow} and let $v'$ be the node occupied by $A$ in round $r'$. Note that, by Lemma~\ref{lem:onlyone}, $r'$ is unique and agent $A$ is in state {\tt shadow} from round $r'$ to round $r$ included.

First assume that $r=r'$. Under this assumption, we have $v=v'$. According to Algorithm~$\mathcal{HG}$, agent $A$ decides to transit to state {\tt shadow}, at node $v'$ in round $r'-1$, from either $(1)$ state {\tt cruiser} or $(2)$ state {\tt searcher}. In the first case, agent $A$ selects as its guide an agent that enters state {\tt token} at node $v'$ in round $r'$. In view of Lemma~\ref{lem:notwotok}, there is at most one agent that can be in state {\tt token} at node $v'$ in round $r'$, which implies that the lemma holds in the first case. In the second case, agent $A$ selects as its guide an agent $B$ that is in state {\tt token} in round $r'-1$ at node $v'$ (cf. the description of state {\tt searcher}) and that is in state {\tt token} or {\tt searcher} at node $v'$ in round $r'$ (cf. the description of state {\tt token}). By Lemma~\ref{lem:notwotok}, there is at most one agent that is in state {\tt token} at node $v'$ in round $r'-1$. Moreover, this means that, among the agents occupying $v'$ in round $r'$, $B$ is the only agent that was in state {\tt token} at this node in the previous round. Hence, the lemma also holds in the second case, and, by extension, when $r=r'$.

In light of the first part of the proof, we know there exists a non negative integer $i$ such that the lemma holds if $r=r'+i$. As a result, to show that the lemma holds when $r>r'$, it is enough to show that the lemma holds if $r=r'+i+1$. So, we now assume that $r=r'+i+1$ and we denote by $v''$ the node occupied by $A$ in round $r'+i$. 

At the beginning of the proof, we observed that agent $A$ is in state {\tt shadow} from round $r'$ to round $r$ included. Since $r=r'+i+1>r'$,  this means that agent $A$ is necessarily in state {\tt shadow} in round $r'+i$. Hence, by the definition of $i$, we know that, in round $r'+i$, the agent $B$ playing the role of $A$'s guide is either in state {\tt searcher} or {\tt token}, occupies node $v''$ and can be unambiguously identified by $A$. Note that if agent $B$ is in state {\tt searcher} in round $r'+i$, then it is in state {\tt searcher} or {\tt shadow} in round $r'+i+1$. Below, we consider two complementary cases.

First, consider the case where agent $B$ is in state {\tt searcher} in rounds $r'+i$ and $r'+i+1$. According to the description of state {\tt shadow}, in round $r'+i+1$, agent $B$ is at node $v$ and remains the unique guide of $A$. Since, for every integer $N\geq2$, an execution of procedure ${\tt EXPLO}(N)$ contains no waiting periods, the nodes occupied by $B$ in rounds $r'+i$ and $r'+i+1$ are different, i.e., $v\ne v''$. Note that, by Lemma~\ref{lem:diff}, we have $M_B(r'+i)\ne M_X(r'+i)$ for every agent $X\ne B$ at node $v''$ in round $r'+i$. Moreover, among the agents at node $v$ in round $r'+i+1$, only those that were at node $v''$ in round $r'+i$ enter $v$ by port ${\tt port}(v'',v)$ in round $r'+i+1$. As a result, agent $A$ can unambiguously identify $B$ in round $r'+i+1$, which proves the lemma in this case.

Now, consider the case where agent $B$ is either $(1)$ in state {\tt searcher} in round $r'+i$ and in state {\tt shadow} in round $r'+i+1$ or $(2)$ in state {\tt token} in round $r'+i$. In view of Algorithm~$\mathcal{HG}$, in round $r'+i+1$, $A$'s guide is necessarily an agent that was in state {\tt token} at node $v''$ in round $r'+i$ (it corresponds to the guide of $B$ if $B$ is in state {\tt shadow} in round $r'+i+1$, or to $B$ itself otherwise). Hence, using similar arguments to those above (when assuming $r=r'$ and $A$'s guide is in state {\tt token} in round $r-1$), we can prove that the lemma holds in the second case as well. This completes the proof of the lemma.   
\end{proof}

The next lemma pinpoints a specific situation in which all nodes of the underlying graph are visited at least once. In particular, it involves function ${\tt EST}^+$ and the notion of ``admitted node'' (cf. the description of procedure ${\tt EST}$ in Section~\ref{sec:preli}).

\begin{lemma}
\label{lem:toutexplo}
Let $A$ be an agent in state {\tt explorer} that starts an execution of ${\tt EST}^+$ in a round $r$. Let $\mathcal{X}$ be the set of all agents $B$ in state {\tt explorer} in round $r$ such that $M_{{\tt tok}(A)}(r)=M_{{\tt tok}(B)}(r)$ ($\mathcal{X}$ includes agent $A$). Every agent of $\mathcal{X}$ starts an execution of ${\tt EST}^+$ in round $r$. Moreover, each node of the underlying graph $G$ is visited and admitted by at least one agent of $\mathcal{X}$ during its execution of ${\tt EST}^+$ started in round $r$.
\end{lemma}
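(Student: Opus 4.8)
The plan is to treat the two assertions separately: the first (synchronized starts) from a round-$r$ memory symmetry between the two tokens, and the second (coverage) from a structural symmetry of the explorations rooted at the homes of the memory-$M$ tokens, where I write $M:=M_{{\tt tok}(A)}(r)$.

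\textbf{Synchronized starts.} First I would use that equality of memories is hereditary \emph{backward}: since $M_X(t)=(M_X(t-1),\ast,\ast)$ whenever ${\tt rk}(M_X(t))>0$, the hypothesis $M_{{\tt tok}(A)}(r)=M_{{\tt tok}(B)}(r)$ forces $M_{{\tt tok}(A)}(t)=M_{{\tt tok}(B)}(t)$ for all $t\le r$; as equal memories have equal rank, this matching is moreover \emph{round-preserving}. The token ${\tt tok}(A)$ is idle at ${\tt home}(A)$, so its memory records, round by round, the pair $(A_Y(r'),M_Y(r'-1))$ of every agent $Y$ co-located with it, and these records are pairwise distinct (a two-line consequence of Lemma~\ref{lem:diff}). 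Because $A$ starts ${\tt EST}^+$ in round $r$, it is at ${\tt home}(A)$ with ${\tt tok}(A)$ in that round (Lemma~\ref{lem:exp}), so $A$'s record sits inside $B_{{\tt tok}(A)}(r)$; matching it with the corresponding record of $B_{{\tt tok}(B)}(r)$ (the explorer being the unique agent that became so from the smallest-memory cruiser at its token's formation) yields that $B$ too is at ${\tt home}(B)$ in round $r$ with $M_B(r-1)=M_A(r-1)$ and $A_B(r)=A_A(r)$; a symmetric bookkeeping on the remaining memory components then gives $M_B(r)=M_A(r)$. Since an agent's action is a deterministic function of $\mathcal{K}$ and its current memory, $B$ performs the same action as $A$ in round $r$, i.e.\ it starts ${\tt EST}^+$.

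\textbf{The explorations are mutually symmetric (the crux).} The heart of the coverage claim is to prove that the explorers of $\mathcal{X}$ carry out \emph{identical} executions of ${\tt EST}^+$ in round $r$: each builds the very same abstract port-labeled tree $\mathcal{T}$, of common order $\eta$, rooted at its own home. Since confusion during ${\tt EST}^+$ can only cause false \emph{rejections} and never false admissions (as established in the proof of Lemma~\ref{lem:exp}), each such tree is a possibly truncated BFS tree of $G$ whose root-to-node paths are genuine shortest paths. I would obtain the identity of the trees by propagating the round-$r$ memory symmetry forward through the exploration, in the deterministic spirit of the proof of Lemma~\ref{lem:charac1}: maintaining a memory-preserving correspondence between the agents operating around ${\tt home}(A)$ and those around ${\tt home}(B)$, so that corresponding agents act identically and therefore record identical degrees, entry ports and co-located sets. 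This is the main obstacle: one must rule out any divergence of the two explorations, i.e.\ show that the local port-structure discovered from ${\tt home}(A)$ mirrors that discovered from ${\tt home}(B)$ — which is exactly what the faithfulness of the tokens' recorded histories is meant to force, and is the delicate part of the analysis.

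\textbf{Coverage by contradiction, via reversibility.} Granting the symmetry, let $\mathcal{H}=\{{\tt home}(B):B\in\mathcal{X}\}$; these nodes are pairwise distinct (Lemma~\ref{lem:notwotok}) and each carries a memory-$M$ token in round $r$. Suppose some node of $G$ lies in no $\mathcal{X}$-tree. As $G$ is connected and every root lies in its own tree, there is an edge $\{w,v\}$ with $w$ in some tree $\mathcal{T}_{B_0}$ and $v$ in none. Every node of a completed ${\tt EST}^+$ tree is eventually processed, so $B_0$ processes $w$ and, while checking the neighbor $v$, must reject it (else $v\in\mathcal{T}_{B_0}$). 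Hence some node $c$ of $\mathcal{T}_{B_0}$, with up-path $P=(P[1],\dots,P[2\delta])$, passes the test: starting from $v$ and taking ports $P[1],P[3],\dots,P[2\delta-1]$ while entering each node by the prescribed $P[2],P[4],\dots,P[2\delta]$, the agent reaches a node $z$ holding a memory-$M$ token (the comparison being made at round $r$, as in the definition of ${\tt EST}^+$). Because the full signature of $P$ is respected, this walk is the reverse of the unique walk leaving $z$ along $P[2\delta],P[2\delta-2],\dots,P[2]$, so $v$ is determined by $z$ and $P$ alone.

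\textbf{Conclusion.} If $z={\tt home}(B_0)$, this identification forces $v$ to be exactly the graph node represented by $c$ in $\mathcal{T}_{B_0}$, contradicting $v\notin\mathcal{T}_{B_0}$; hence $z={\tt home}(B_1)$ for some $B_1\in\mathcal{X}$ with $z\ne{\tt home}(B_0)$ (the agent sitting at $z$ has memory $M$ in round $r$, hence is in state {\tt token} by determinism, so its explorer lies in $\mathcal{X}$ and, by the first part, explores in round $r$). By the symmetry, $\mathcal{T}_{B_1}$ is the same abstract tree $\mathcal{T}$ and therefore contains the node $c$; its genuine root-to-$c$ shortest path in $G$ leaves $z$ along exactly $P[2\delta],\dots,P[2]$ and so reaches $v$. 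Thus $v$ is the graph node of $c$ in $\mathcal{T}_{B_1}$, i.e.\ $v$ is admitted (and a fortiori visited) by $B_1$ during its round-$r$ exploration, contradicting that $v$ lies in no $\mathcal{X}$-tree. I therefore expect the whole lemma to reduce to the symmetry of the explorations established in the second step; the reversibility identification, though it must be handled carefully against the port-consistency conditions of ${\tt EST}$, is then routine.
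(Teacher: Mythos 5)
Your first assertion (all agents of $\mathcal{X}$ start ${\tt EST}^+$ in round $r$) is argued essentially as in the paper: from $M_{{\tt tok}(A)}(r)=M_{{\tt tok}(B)}(r)$ and determinism one reads off that each ${\tt tok}(B)$ sits at ${\tt home}(B)$ with an explorer of the same seniority launching ${\tt EST}^+$, and Lemma~\ref{lem:notwotok} identifies that explorer with $B$. The problem is your second step, which you yourself flag as the crux: the claim that all explorers of $\mathcal{X}$ carry out \emph{identical} executions and build the \emph{same} abstract port-labeled tree $\mathcal{T}$ is false in general, and it is load-bearing for your coverage argument. Equality of the tokens' memories in round $r$ says nothing about the port structure of $G$ around the various homes; as soon as two explorers move, they may land on nodes of different degrees, their memories diverge (Lemma~\ref{lem:prec} only propagates \emph{inequality} forward, nothing forces continued equality), and their subsequent checks and admissions differ. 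The paper is explicit about this in its overview of ${\mathcal{HG}}$: the union of the trees spans $G$ ``even though some trees may cover significantly larger portions than others.'' Identical traces are only forced much later, in Lemma~\ref{lem:important}, by a separate counting argument (Claim~\ref{claim24}); Lemma~\ref{lem:toutexplo} must hold, and is used, without that assumption.

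Without the symmetry, your final step collapses: when the false rejection of $v$ leads the walk to a token at $z={\tt home}(B_1)\neq{\tt home}(B_0)$, you can no longer transfer the node $c$ and its up-path $P$ into $\mathcal{T}_{B_1}$, so you cannot conclude that $B_1$ admits $v$ ($B_1$'s tree may simply not contain any node with that root-path, or $B_1$ may itself falsely reject $v$, sending you to yet another home with no guarantee of termination of the chase). The paper's proof replaces your symmetry claim with a global extremal argument: take $v$ unadmitted by \emph{every} agent of $\mathcal{X}$ and let $\pi$ be the lexicographically smallest shortest path from the whole set $\mathcal{H}$ of homes to $v$; then the (possibly false) rejection occurring along $\pi$ in the tree of the explorer realizing $\pi$ produces, via the reversed test walk $\bar{\varphi}$ ending at some ${\tt home}(Z)\in\mathcal{H}$, a path from $\mathcal{H}$ to $v$ that is shorter than $\pi$ or of equal length but lexicographically smaller — a contradiction that needs no relation whatsoever between the different trees. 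You would need to restructure your coverage argument along these lines (or otherwise remove the dependence on identical trees) for the proof to go through.
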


\begin{proof}
We first want to prove that every agent of $\mathcal{X}$ starts an execution of ${\tt EST}^+$ in round $r$. By assumption, it is the case for agent $A$. So, consider any agent $A'\ne A$ from $\mathcal{X}$. By Lemma~\ref{lem:exp}, in round $r$, ${\tt tok}(A)$ is in state {\tt token} and occupies ${\tt home}(A)$ with agent $A$. Moreover, we know that while in state {\tt token}, an agent does not move. Hence, since $M_{{\tt tok}(A)}(r)=M_{{\tt tok}(A')}(r)$, we know that, in round $r$, ${\tt tok}(A')$ is in state {\tt token} at ${\tt home}(A')$ with an agent $Y$ in state {\tt explorer} that has the same seniority as ${\tt tok}(A')$ (and thus as $A'$) and that starts an execution of ${\tt EST}^+$. Note that we necessarily have ${\tt home}(A')={\tt home}(Y)$ in view of Lemma~\ref{lem:exp}. Consequently, agents $Y$ and $A'$ enter state {\tt explorer} at node ${\tt home}(A')$ in the round when agent ${\tt tok}(A')$ enters state {\tt token} at ${\tt home}(A')$. By Lemma~\ref{lem:notwotok}, this necessarily means that $Y=A'$, which proves that every agent of $\mathcal{X}$ indeed starts an execution of ${\tt EST}^+$ in round $r$.

Now, it remains to show that each node of the underlying graph $G$ is visited and admitted by at least one agent of $\mathcal{X}$ during its execution of ${\tt EST}^+$ started in round $r$. As mentioned earlier, every agent of $\mathcal{X}$ is at its home in round $r$. This implies that during its execution of ${\tt EST}^+$ started in this round, each agent $B$ of $\mathcal{X}$ constructs a tree $T_B$ rooted at a node corresponding to ${\tt home}(B)$. Although, in this execution, agent $B$ can sometimes confuse its token with the token of another agent of $\mathcal{X}$, this can never lead agent $B$ to add a ``wrong'' path in $T_B$. Precisely, at each stage of the construction of $T_B$, each path from the root of $T_B$ to any of its node always corresponds to a path from ${\tt home}(B)$ to some node in $G$. This particularly implies that when admitting a node $x$ of $G$ (and thus adding a corresponding node to $T_B$), agent $B$ is necessarily located at node $x$. Hence, for the purpose of the rest of this proof, it is enough to show that each node of the underlying graph $G$ is admitted by at least one agent of $\mathcal{X}$ during its execution of ${\tt EST}^+$ started in round $r$. Assume, by contradiction, it is not the case for some node $v$ of $G$.


Let $\mathcal{H}$ be the set of homes of all the agents from $\mathcal{X}$ and let $\pi=(p_1,q_1,p_2,q_2,\ldots,p_k,q_k)$ be the lexicographically smallest shortest path from a node of $\mathcal{H}$ to node $v$. Let $C$ be an agent of $\mathcal{X}$ such that $\pi$ is a path from ${\tt home}(C)$ to $v$ (it is not needed here, but one can show that $C$ is unique, using Lemma~\ref{lem:notwotok}). Finally, let $\mathcal{E}_C$ be the execution of ${\tt EST}^+$ started by agent $C$ in round $r$. As mentioned at the beginning of this proof, in round $r$, agent ${\tt tok}(C)$ is in state {\tt token} and occupies node ${\tt home}(C)$ with agent $C$. 

During $\mathcal{E}_C$, agent $C$ constructs a tree $T_C$ rooted at a node corresponding to ${\tt home}(C)$. According to procedure {\tt EST}, when admitting a node $x$ of $G$, agent $C$ only attaches a corresponding node $c(x)$ to a node of $T_C$ by an edge with two port numbers, so that the the simple path from the root of $T_C$ to $c(x)$ is a path from ${\tt home}(C)$ to $x$. Moreover, adding a node to $T_C$ occurs only when admitting a node of $G$. Hence, since $v$ is not admitted by $C$ in $\mathcal{E}_C$, we know that, at the end of $\mathcal{E}_C$, there exists an integer $0\leq i <k$ such that $\pi[1,2i]$ is a simple path in $T_C$ from its root to some of its node, while $\pi[1,2(i+1)]$ is not (as defined in Section~\ref{sec:preli}, $\pi[1,0]$ is considered to be the empty path). According to Lemma~\ref{lem:exp}, $\mathcal{E}_C$ eventually terminates. Hence, in view of the description of procedure ${\tt EST}$, agent $C$ eventually moves from ${\tt home}(C)$ to some node $u$ by following path $\pi[1,2i]$, which leads to some node $c(u)$ from the root of $T_C$, and then starts the process of $c(u)$. Denote by $Pr$ this process. During $Pr$, agent $C$ checks each neighbor of $u$ in order to determine whether admitting the neighbor or not. In particular, it checks ${\tt succ}(u,p_{i+1})$ which, by the definition of $i$, is necessarily not admitted during $Pr$. Let $u'={\tt succ}(u,p_{i+1})$. Since, by the first property of Lemma~\ref{lem:exp}, the input given to function ${\tt EST}^+$ in $\mathcal{E}_C$ is $M_{{\tt tok}(C)}(r)$, it follows from the descriptions of ${\tt EST}$ and ${\tt EST}^+$ that there exists a simple path $\varphi$ from node $u'$ satisfying the following two conditions:

\begin{itemize}
\item {Condition~1.} When agent $B$ starts checking $u'$ in $Pr$, $\varphi$ is a simple path in $T_C$ from some node $w$ to the root of $T_C$. 
\item {Condition~2.} For some agent $Z$, agent $C$ encounters an agent ${\tt tok}(Z)$ in state {\tt token} such that $M_{{\tt tok}(C)}(r)=M_{{\tt tok}(Z)}(r)$, right after following path $\varphi$ from node $u'$ during $Pr$.
\end{itemize}

Recall that, in round $r$, agent ${\tt tok}(C)$ is in state {\tt token} and occupies node ${\tt home}(C)$ with agent $C$. Also recall that, while in state {\tt token}, an agent does not move. Hence, the fact that $M_{{\tt tok}(C)}(r)=M_{{\tt tok}(Z)}(r)$ implies that, in round $r$, ${\tt tok}(Z)$ is in state {\tt token} at ${\tt home}(Z)$ with an agent in state {\tt explorer} that has the same seniority as ${\tt tok}(Z)$ and that starts an execution of ${\tt EST}^+$. Using the same reasonning as at the beginning of this proof, we can prove that this agent is necessarily $Z$. As a result, since $M_{{\tt tok}(A)}(r)=M_{{\tt tok}(C)}(r)=M_{{\tt tok}(Z)}(r)$, we have $Z\in \mathcal{X}$ and ${\tt home}(Z)\in\mathcal{H}$.

According to procedure {\tt EST}, when agent $C$ finishes the process of a node of $T_C$ in $\mathcal{E}_C$, the next node to process, if any, is the one reached by following the lexicographically smallest shortest path from the root of $T_C$ to an unprocessed node of $T_C$. Moreover, during the process of a node $c(w)$ of $T_C$, corresponding to a node $w$ of $G$, the neighbors of $w$ are checked, and thus possibly admitted, following the increasing order of the port numbers at this node. In particular, when admitting a neighbor $w'$ of $w$ during the process of $c(w)$, the agent precisely attaches a node $c(w')$ to $c(w)$ by an edge whose port number at node $c(w)$ is ${\tt port}(w,w')$. This implies that if a subsequent neighbor $w''$ of $w$ is admitted during the same process, a corresponding node $c(w'')$ will be attached to $c(w)$ with an edge whose port number at node $c(w)$ is greater than ${\tt port}(w,w')$. Therefore, when agent $B$ starts checking $u'$ in $Pr$, we have the guarantee that each simple path from the root of $T_C$ to any of its node is either shorter than path $\pi[1,2i]({\tt port}(u,u'),{\tt port}(u',u))$ or of equal length but lexicographically smaller. Note that $\pi[1,2i]({\tt port}(u,u'),{\tt port}(u',u))=\pi[1,2(i+1)]$ and is a path from ${\tt home(C)}$ to $u'$, by the definition of $u$ and $u'$. Hence, in view of Condition~1, the reverse of path $\varphi$, call it $\bar{\varphi}$, is shorter than $\pi[1,2(i+1)]$ or of equal length but lexicographically smaller. By Condition~2 and the fact that ${\tt tok}(Z)$ stays at ${\tt home}(Z)$ while in state {\tt token}, it follows that $\bar{\varphi}(p_{i+2},q_{i+2},p_{i+3},q_{i+3},\ldots,p_k,q_k)$ is a path from a node of $\mathcal{H}$, namely ${\tt home}(Z)$, to node $v$ that is shorter than $\pi$ or of the same length but lexicographically smaller. This contradicts the definition of $\pi$ and thus the existence of $v$, which terminates the proof. 
\end{proof}

Every execution by an agent in state {\tt explorer} of the repeat loop of Algorithm~\ref{alg:algexplo} will be viewed as a sequence of consecutive steps $j= 1,2,3,4,\ldots$ where step $j$ is the part of the execution corresponding to the $j$th iteration of this loop. We will say that two steps, executed by the same agent or not, are identical if they have the same duration and the triple returned by function ${\tt EST}^+$ in the first step is equal to the triple returned by ${\tt EST}^+$ in the second step. The notion of step is used in the following four lemmas.


Note that, in view of Algorithm~\ref{alg:algexplo}, an agent can transit to state {\tt explorer} only from state {\tt cruiser} and, by Lemma~\ref{lem:onlyone}, once an agent leaves state {\tt cruiser}, it cannot return to it. Hence, in the statement of the next lemma, the value $\tau$ is well defined.

\begin{lemma}
\label{lem:stepduration}
Let $s$ be a step completed in some round by an agent $A$ in state {\tt explorer}, in which function ${\tt EST}^+$ returns a triple $(b,\eta,trace)$ such that $b=\false$. Let $\tau$ be the number of rounds that agent $A$ spends in state {\tt cruiser}. The duration $T$ of step $s$ satisfies the following bounds: $(\eta U\tau)^{11\beta}\leq T \leq 2^{13}(nU\tau)^{11\beta}$.
\end{lemma}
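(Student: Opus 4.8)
The plan is to split the step duration $T$ into the two contributions that consume rounds during a single iteration of the repeat loop of Algorithm~\ref{alg:algexplo}: the time $T_{\mathrm{exp}}$ taken by the single call to ${\tt EST}^+$ at the start of the iteration, and the waiting period $W$ prescribed on line~\ref{l:if2} (the counter and trace updates cost no round). Since $b=\false$ by hypothesis, the conditional on line~\ref{l:if} is entered, so both contributions are present and I can write $T = T_{\mathrm{exp}} + W$ with $W = \sum_{i=1}^{m} 2^i$ and $m = 11\lceil \beta \log(\eta U \tau)\rceil$, where $\eta$ is precisely the second component of the returned triple.

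The computational core is to pin down $W$. Using the closed form $W = 2^{m+1}-2$ together with the elementary bracketing $t \leq \lceil t\rceil < t+1$, I get $11\beta\log(\eta U\tau) \leq m < 11\beta\log(\eta U\tau)+11$. Writing $y=\eta U\tau$, this exponentiates to $y^{11\beta}\leq 2^{m}<2^{11}y^{11\beta}$, hence $2y^{11\beta}-2\leq W<2^{12}y^{11\beta}$. I will also record the positivity facts needed to make $\log(\eta U\tau)$ meaningful and to force $y^{11\beta}\geq 2$: namely $\eta\geq 1$ (Lemma~\ref{lem:exp}, property~4), $U=2^{(2^{\lceil \log\log(\mu+1)\rceil})}\geq 2$, and $\tau\geq 1$ (an agent occupies state {\tt cruiser} for at least its wake-up round before any transition, so $\tau\geq 1$). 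Consequently $y\geq 2$ and $y^{11\beta}\geq 2^{22}\geq 2$.

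For the lower bound I simply discard $T_{\mathrm{exp}}\geq 0$ and bound $T\geq W\geq 2y^{11\beta}-2\geq y^{11\beta}=(\eta U\tau)^{11\beta}$, where the last inequality is exactly $y^{11\beta}\geq 2$. For the upper bound I invoke Lemma~\ref{lem:exp}: property~2 gives $T_{\mathrm{exp}}\leq 8n^5$, and property~4 gives $\eta\leq n$, whence $y\leq nU\tau$ and $W<2^{12}(nU\tau)^{11\beta}$. It then remains to absorb the exploration term into the waiting term: since $\beta\geq 2$ gives $11\beta\geq 22\geq 5$ and $2^{12}\geq 8$, I have $8n^5\leq 2^{12}n^{22}\leq 2^{12}(nU\tau)^{11\beta}$. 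Adding the two estimates yields $T\leq 2\cdot 2^{12}(nU\tau)^{11\beta}=2^{13}(nU\tau)^{11\beta}$, as required.

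No step here is genuinely difficult; the whole argument is summing a geometric series, controlling the ceiling inside the exponent, and quoting the two size bounds of Lemma~\ref{lem:exp}. The only points demanding care are establishing $\tau\geq 1$, $\eta\geq 1$, and $U\geq 2$ (which underpin the lower bound through $y^{11\beta}\geq 2$), and checking that the $8n^5$ exploration cost is dominated by the waiting term (which relies on $11\beta\geq 5$). Both reduce to the standing constraints $\beta\geq 2$ and $U\geq 2$.
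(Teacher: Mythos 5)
Your proof is correct and follows essentially the same route as the paper's: decompose the step into the ${\tt EST}^+$ call (at most $8n^5$ rounds by Lemma~\ref{lem:exp}) plus the waiting period of line~\ref{l:if2}, bound the latter using the geometric sum together with $1\leq\eta\leq n$, $U\geq 2$ and $\tau\geq 1$, and absorb the $8n^5$ term into $2^{12}(nU\tau)^{11\beta}$ to reach $2^{13}(nU\tau)^{11\beta}$. Your write-up merely spells out the ceiling and geometric-series arithmetic that the paper leaves implicit.
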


\begin{proof}
Since, during step $s$, function ${\tt EST}^+$ returns a triple $(b,\eta,trace)$ such that $b=\false$, we know, in view of line~\ref{l:if} of Algorithm~\ref{alg:algexplo}, that the waiting period at line~\ref{l:if2} of Algorithm~\ref{alg:algexplo} is executed in step $s$. Note that $\beta\geq2, U\geq 2$, $\tau\geq 1$ (the agent spends at least one round in state {\tt cruiser}) and, by Lemma~\ref{lem:exp}, $1\leq\eta\leq n$. This means that the execution time of line~\ref{l:if2} of Algorithm~\ref{alg:algexplo} in step $s$ can be can be lower bounded (resp. upper bounded) by $(\eta U\tau)^{11\beta}$ (resp. $2^{12}(nU\tau)^{11\beta}$). Moreover, according to Lemma~\ref{lem:exp}, an execution of line~\ref{l:5} of Algorithm~\ref{alg:algexplo} lasts at most $8n^5$ rounds. Hence, step $s$ lasts at least (resp. at most) $(\eta U\tau)^{11\beta}$ (resp. $8n^5+2^{12}(nU\tau)^{11\beta}\leq 2^{13}(nU\tau)^{11\beta}$) rounds, which proves the lemma.
\end{proof}

Below is arguably one of the most important lemma of this section. It will serve as the main argument to show that gathering is indeed done when a sequence of $U^{25\beta}$ consecutive identical steps is completed by some agent (cf. the proof of Lemma~\ref{lem:gatheroccurs}).

\begin{lemma}
\label{lem:important}
Let $r$ be the first round, if any, in which a sequence of $U^{25\beta}$ consecutive identical steps is completed by some agent $A$ in state {\tt explorer}. No agent has declared termination before round $r$ and, in this round, agent ${\tt tok}(A)$ (resp. every agent different from $A$ and ${\tt tok}(A)$) is in state {\tt token} (resp. {\tt shadow}).
\end{lemma}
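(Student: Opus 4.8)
The plan is to prove the three assertions separately, dispatching the first two quickly and concentrating on the claim that every agent other than $A$ and ${\tt tok}(A)$ is in state {\tt shadow} in round $r$.

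\emph{No agent terminates before round $r$.} I would first read off from Algorithm~\ref{alg:algexplo} that an agent in state {\tt explorer} declares termination only at line~\ref{l:afterrepeat3}, that is, only when its repeat loop exits with $counter=U^{25\beta}$ while $b=\false$. Since $counter$ is incremented only inside the $b=\false$ branch and is reset whenever the trace changes, this exit requires a run of $U^{25\beta}$ consecutive steps all returning $b=\false$ with a common trace. The BFS tree built by ${\tt EST}^+$, hence its order $\eta$, is determined by the trace (the admit/reject decisions depend only on the recorded path and token encounters), and so is the running time of ${\tt EST}^+$; as the waiting period at line~\ref{l:if2} depends only on $\eta$, $U$ and the fixed value $\tau$, these $U^{25\beta}$ steps are in fact identical. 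Thus an {\tt explorer}'s termination happens exactly when it completes $U^{25\beta}$ consecutive identical steps. Every other way to terminate is parasitic on an {\tt explorer}: a {\tt token} terminates only when an {\tt explorer} at its node does, a {\tt shadow} only when its guide does, and {\tt searcher}/{\tt cruiser} never do. Hence the earliest termination of any agent is round $r$, and no agent terminates before $r$.

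\emph{${\tt tok}(A)$ is a {\tt token} in round $r$.} Round $r$ is the round in which $A$ completes its last step, so $A$ is still in state {\tt explorer} in round $r$ (it declares termination from this state). I would then invoke the invariant, obtained from property~3 of Lemma~\ref{lem:exp} together with Lemma~\ref{lem:transit}, that ${\tt tok}(A)$ stays in state {\tt token} at ${\tt home}(A)$ for as long as $A$ remains in state {\tt explorer}: property $\Psi$ from the proof of Lemma~\ref{lem:exp} guarantees ${\tt tok}(A)$ never leaves {\tt token} during an execution of ${\tt EST}^+$, and during $A$'s intervening waiting periods ${\tt tok}(A)$ could leave {\tt token} only because of an explorer acting at ${\tt home}(A)$, which by Lemmas~\ref{lem:transit} and~\ref{lem:notwotok} would have to be $A$ itself, impossible while $A$ merely waits. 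Therefore ${\tt tok}(A)$ is in state {\tt token} in round $r$.

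\emph{All other agents are {\tt shadow}s: the core.} Here I would group the agents that are in state {\tt explorer} at a given round by the current memory of their token, calling such a group a class; since two agents with equal memories carry the same label (the label sits at the bottom of the memory triple and is never erased), every class has size at most $\mu<U$. By Lemma~\ref{lem:toutexplo}, the members of the class of an explorer starting ${\tt EST}^+$ in round $t$ all start ${\tt EST}^+$ in round $t$ and jointly admit every node of $G$, so their trees, each of order $\eta$, cover $G$, whence $\mu\,\eta\ge n$ (this also yields $n\le U\eta$). The decisive device is the seniority/memory test driving $b$. If the class of $A$ is a singleton in one of the $U^{25\beta}$ steps, then by property~4 of Lemma~\ref{lem:exp} $A$ visits all of $G$ in that step and meets every token then present; as that step has $b=\false$, no such token has higher seniority than ${\tt tok}(A)$ nor equal seniority with strictly larger memory, so ${\tt tok}(A)$ is the globally maximal token. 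Suppose now, for contradiction, that some $B\notin\{A,{\tt tok}(A)\}$ is not a {\tt shadow} in round $r$. If $B$ is an {\tt explorer} (or a {\tt token}, via ${\tt exp}(B)$), then either its class is a singleton in some step, in which case $B$ meets the maximal token ${\tt tok}(A)$ during a full exploration, sets $b=\true$, and transits to {\tt searcher} before round $r$ (Lemma~\ref{lem:transit}), a contradiction; or the maximal class has size $c\ge 2$ and sustains identical token-memory evolution throughout the run of $U^{25\beta}$ identical steps. That last possibility is a persistent symmetry of order $c$: the class members being pairwise symmetric with explorations covering $G$, a run this long forces the remaining agents to be absorbed symmetrically, producing a nontrivial automorphism of the configuration that divides every label multiplicity by $c$, so $c$ is a common divisor of all multiplicities, contradicting that the symmetry index is $1$. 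Hence no such $B$ is an {\tt explorer} or {\tt token}, the maximal class is $\{A\}$, and ${\tt tok}(A)$ is the unique token in round $r$. It remains to rule out {\tt cruiser}s and {\tt searcher}s in round $r$: since ${\tt tok}(A)$ idles as a token at ${\tt home}(A)$ through round $r$ and, because the last (singleton-class) step alone lasts at least $(\eta U\tau)^{11\beta}\ge n^{11\beta}$ by Lemma~\ref{lem:stepduration}, there is a poly$(n)$-long window before $r$ during which any such agent would reach a phase with $2^i\ge n$, have its ${\tt EXPLO}(2^i)$ visit ${\tt home}(A)$, meet the token, and turn into a {\tt shadow} strictly before $r$. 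Combining the three parts gives the statement. The main obstacle is precisely the persistence argument: quantitatively excluding a non-singleton maximal class that maintains $U^{25\beta}$ identical steps without ever breaking symmetry, using only the class-size bound $\mu<U$, the coverage of Lemma~\ref{lem:toutexplo}, the seniority test, and symmetry index $1$, while ensuring the detected termination is never a false positive; I expect this to need dedicated intermediate claims (e.g.\ that $U$ identical steps already force every agent into {\tt token}/{\tt explorer}/{\tt shadow} with $n\le U\eta$, then bootstrapped over the much longer run of $U^{25\beta}$ steps to the singleton collapse).
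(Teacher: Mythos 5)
Your first two parts are sound and essentially match the paper's argument (termination of an {\tt explorer} is equivalent to completing $U^{25\beta}$ consecutive identical steps, all other terminations are parasitic on an {\tt explorer}'s; and ${\tt tok}(A)$ must still be a {\tt token} in round $r$ because otherwise Lemma~\ref{lem:transit} would have sent $A$ to state {\tt searcher} earlier). Several ingredients of your third part are also the right ones: classes of explorers keyed by token memory, the bound $|\mathcal{X}|\leq\mu\leq U$, collective coverage via Lemma~\ref{lem:toutexplo}, the seniority test, elimination of {\tt cruiser}s and {\tt searcher}s via step durations, and the final appeal to symmetry index $1$.

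The genuine gap is in the core of the third part. First, your decisive case ``the maximal class has size $c\ge 2$ and sustains identical token-memory evolution throughout the run\ldots producing a nontrivial automorphism of the configuration that divides every label multiplicity by $c$'' is asserted, not proved, and it is not how the contradiction actually arises. The $U^{25\beta}$ identical steps are a property of $A$ alone; nothing immediately guarantees that the other members of $A$'s class execute identical steps, nor that the class is stable over time (it can only shrink, and shrinks exactly when some member's step differs). The paper closes this with a pigeonhole extraction (its Claim~\ref{claim24}): within any $k(\mu+1)$ consecutive steps of one agent there is a window of $k$ steps that is ``perfect,'' i.e.\ the class is constant and every member completes an identical step in each round of the window. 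Only inside such a window can one apply Lemma~\ref{lem:toutexplo} round after round, then successively eliminate {\tt cruiser}s, outsider {\tt explorer}s (via the seniority test applied in \emph{both} directions, which forces any surviving outsider's token to share the class memory and hence to belong to the class), {\tt searcher}s, and outsider {\tt token}s. Second, the final contradiction is not that a non-singleton class is impossible per se: it is that, after these eliminations, \emph{every} agent is co-located with a class token, the class tokens all share one memory, so the symmetry index is at least the number of nodes they occupy; symmetry index $1$ forces one node, and Lemma~\ref{lem:notwotok} then forces one token, hence the singleton class. Your ``automorphism dividing every multiplicity by $c$'' is a plausible intuition for this, but it needs the co-location of all agents with class tokens first, which is precisely what the missing claims establish. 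You flag this obstacle yourself at the end, which is honest, but as written the proof does not go through without it.
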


\begin{proof}
Note that an agent in state {\tt explorer} (resp. {\tt token}) declares termination only when it has completed a sequence of $U^{25\beta}$ consecutive identical steps (resp. only when it is with an agent in state {\tt explorer} that declares termination). By the definition of round $r$, this means there is no round $r'<r$ in which an agent in state {\tt explorer} or {\tt token} declares termination. Moreover, while in state {\tt searcher} or {\tt cruiser}, an agent cannot declare termination and an agent in state {\tt shadow} declares termination only if its guide declares termination. However, by Lemma~\ref{lem:guide}, in each round, the guide of an agent in state {\tt shadow} is either in state {\tt searcher} or {\tt token}. Hence, no agent declares termination before round $r$. From this, Lemma~\ref{lem:transit} and the description of state {\tt token}, it particularly follows that if agent ${\tt tok}(A)$ is not in state {\tt token} in round $r$, then agent $A$ transits from state {\tt explorer} to state {\tt searcher} before round $r$, which contradicts Lemma~\ref{lem:onlyone} and the fact that agent $A$ is in state {\tt explorer} in round $r$. 

Consequently, to prove the lemma, it remains to show that, in round $r$, every agent, different from $A$ and ${\tt tok}(A)$, is in state {\tt shadow}. In the rest of this proof, given a step $s$ executed by an agent $X$ in state {\tt explorer}, we will denote by $r_s$ (resp. $r'_s$) the round in which $s$ is started (resp. completed) and by $\mathcal{X}_s$ the set of agents $X'$ in state {\tt explorer} in round $r_s$ such that $M_{{\tt tok}(X)}(r_s)=M_{{\tt tok}(X')}(r_s)$ ($\mathcal{X}_s$ includes agent $X$). We will also say that a sequence of $l$ consecutive (not necessarily identical) steps $S=s_1,s_2,\ldots,s_l$ completed in round $r'_{s_l}$ by an agent in state {\tt explorer} is $l$-perfect if, the next two properties hold:

\begin{enumerate}
\item In step $s_l$, function ${\tt EST}^+$ returns a triple whose the first element is the Boolean value false.
\item For every $1\leq i \leq l$, $\mathcal{X}_{s_i}=\mathcal{X}_{1}$ and a step identical to $s_i$ is completed by each agent of $\mathcal{X}_{s_i}$ in round $r'_{s_i}$.
\end{enumerate}

Note that the first property and lines~\ref{l:repeat} to~\ref{l:afterrepeat2} of Algorithm~\ref{alg:algexplo} immediately imply that function ${\tt EST}^+$ returns a triple whose the first element is the Boolean value false in each step of $S$. Also note that in every round, there cannot be more than $\mu$ non-dormant agents that have the same memory. Hence, since for each agent $X$ there exists at most one agent ${\tt exp}(X)$, it follows that the cardinality of $\mathcal{X}_{s_i}$ is at most $\mu$ for every $1\leq i \leq l$. 

We now proceed with the following two claims.

\begin{claim}
\label{claim24}
Consider any sequence of $k(\mu+1)$ consecutive steps $s_{1},s_{2},\ldots,s_{k(\mu+1)}$ completed in round $r'_{s_{k(\mu+1)}}$ by an agent $X$ in state {\tt explorer}, where $k$ is a positive integer. There exists a positive integer $j \leq k\mu$ such that the subsequence $s_j,s_{j+1},\ldots,s_{j+k-1}$ is $k$-perfect.
\end{claim}

\begin{proofclaim}
Let us first prove that, for every $1\leq i < k(\mu+1)$, if an agent of $\mathcal{X}_{s_i}$ starts in round $r_{s_i}$ a step that will not be identical to $s_i$, then this agent cannot belong to $\mathcal{X}_{s_{i+1}}$. Suppose by contradiction this does not hold. Hence, there exist an integer $1\leq i<k(\mu+1)$ and an agent $X'$ such that $X'$ is in $\mathcal{X}_{s_i}\cap\mathcal{X}_{s_{i+1}}$ while the step started in round $r_{s_i}$ by agent $X'$ is not identical to $s_i$. By Lemma~\ref{lem:prec}, we necessarily have $M_X(r_{s_{i+1}})\ne M_{X'}(r_{s_{i+1}})$. Moreover, by the definition of $\mathcal{X}_{s_{i+1}}$ and Lemma~\ref{lem:exp}, we know that $M_{{\tt tok}(X)}(r_{s_{i+1}})=M_{{\tt tok}(X')}(r_{s_{i+1}})$ and, in round $r_{s_{i+1}}$, agent $X$ is in state {\tt explorer} at ${\tt home}(X)$ with ${\tt tok}(X)$ that is in state {\tt token}. Thus, in round $r_{s_{i+1}}$, ${\tt tok}(X')$ is in state {\tt token} at ${\tt home}(X')$ with an agent $Y$ in state {\tt explorer} that starts a step and such that $M_X(r_{s_{i+1}})=M_{Y}(r_{s_{i+1}})$. This particularly implies that $Y\ne X'$. However from Lemmas~\ref{lem:notwotok} and~\ref{lem:exp}, it follows that ${\tt tok}(X')={\tt tok}(Y)$, which implies that $X'=Y$ as, by definition, ${\tt tok}(X')$ is the token of exactly one agent, namely $X'$. This is a contradiction that proves the implication given at the beginning of the proof of the claim.


In light of this and Lemma~\ref{lem:prec}, we can state that, for every integer $1\leq i <k(\mu+1)$, $\mathcal{X}_{s_{i+1}}\subset \mathcal{X}_{s_{i}}$ if at least one agent of $\mathcal{X}_{s_{i}}$ starts in round $r_{s_i}$ a step that will not be identical to $s_i$, $\mathcal{X}_{s_{i+1}}\subseteq \mathcal{X}_{s_{i}}$ otherwise. Moreover, lines~\ref{l:repeat}-\ref{l:afterrepeat2} of Algorithm~\ref{alg:algexplo} and the fact that the steps $s_{1},s_{2},\ldots,s_{k(\mu+1)}$ are consecutive imply that function ${\tt EST}^+$ necessarily returns a triple whose the first element is the Boolean value false in each of these steps, except possibly in step $s_{k(\mu+1)}$. Hence, if the claim does not hold, it means that for every integer $0\leq m \leq \mu-1$, the cardinality of $\mathcal{X}_{s_{1+km}}$ is smaller than that of $\mathcal{X}_{s_{1+k(m+1)}}$. However, since the cardinality of $\mathcal{X}_{s_{1}}$ is at most $\mu$ (cf. the explanations given just above the statement of the claim), $\mathcal{X}_{s_{1+k\mu}}$ must be empty, which is impossible as $X\in \mathcal{X}_{s_{1+k\mu}}$. Consequently, the claim necessarily holds.
\end{proofclaim}

\begin{claim}
\label{claim34}
Let $k$ be a positive integer and let $S=s_1,s_2,\ldots,s_k$ be a $k$-perfect sequence of steps completed in round $r'_{s_k}$ by an agent $X$ in state {\tt explorer}. For every integer $1\leq i\leq k$, each node is visited by at least one agent of $\mathcal{X}_{s_i}$ in some round $r_{s_i}\leq t \leq r'_{s_i}$. Moreover, if all the steps of $S$ are identical, then the second elements of the triples returned by function ${\tt EST}^+$ in the steps of $S$ all have the same value $\eta$, with $\eta\geq \frac{n}{\mu}$.
\end{claim}

\begin{proofclaim}
From Lemma~\ref{lem:toutexplo} and the fact that $S$ is a $k$-perfect sequence of steps, we immediately get the guarantee that for every integer $1\leq i\leq k$, each node is visited by at least one agent of $\mathcal{X}_{s_i}$ during its execution of ${\tt EST}^+$ within the time interval $[r_{s_i}\ldotp\ldotp r'_{s_i}]$. Moreover, if all the steps of $S$ are identical, it is obvious that the second elements of the triples returned by function ${\tt EST}^+$ in the steps of $S$ all have the same value $\eta$. Therefore, to prove the claim, it is enough to show that the second element of the triple returned by function $EST^+$ in $s_1$, call it $\eta_1$, is at least $\frac{n}{\mu}$.

By Lemma~\ref{lem:toutexplo}, we know that each node is admitted by at least one agent of $\mathcal{X}_{s_1}$ during its execution of ${\tt EST}^+$ within the time interval $[r_{s_1}\ldotp\ldotp r'_{s_1}]$. Each time a node is admitted during an execution of ${\tt EST}^+$, a new node is added to the BFS tree under construction in this execution. Hence, the sum of the orders of the trees built by the agents of $\mathcal{X}_{s_1}$ during their execution of ${\tt EST}^+$ in the time interval $[r_{s_1}\ldotp\ldotp r'_{s_1}]$ is at least $n$. Note that these trees all have the same order $\eta_1$ as a step identical to $s_1$ is completed by each agent of $\mathcal{X}_{s_i}$ in round $r'_{s_1}$. Also recall that the cardinality of $\mathcal{X}_{s_1}$ is at most $\mu$ (cf. the explanations given just before Claim~\ref{claim24}). As a result,   $\eta_1\geq \frac{n}{\mu}$, which concludes the proof of this claim. 
\end{proofclaim}
~\\
In the sequel, we denote by $r^*$ the round in which agent $A$ (resp. ${\tt tok}(A)$) enters state {\tt explorer} (resp. {\tt token}) at ${\tt home}(A)$ and we denote by $\tau$ the number of rounds spent by agent $A$ in state {\tt cruiser} (we necessarily have $\tau\geq 1$). Since we established above that ${\tt tok}(A)$ is in state {\tt token} in round $r$, we know, in view of the description of state {\tt token} and Lemma~\ref{lem:onlyone}, that agent ${\tt tok}(A)$ is in state {\tt token} at ${\tt home}(A)$ from round $r^*$ to round $r$ included. 

Moreover, observe that $U^{25\beta}\geq (U+1)U^{23\beta} \geq (\mu +1)U^{23\beta}$ because $U\geq 2$ and $U\geq \mu$. Also recall that a sequence of $U^{25\beta}$ consecutive identical steps is completed in round $r$ by agent $A$. Hence, in view of Claim~\ref{claim24}, we get the guarantee that a $U^{23\beta}$-perfect sequence of identical steps $P=\rho_1,\rho_2,\ldots,\rho_{U^{23\beta}}$ is completed in round $r'_{\rho_{U^{23\beta}}}\leq r$ by agent $A$ (while in state {\tt explorer}).

By Claim~\ref{claim34}, we know that when $\rho_1$ is completed, all nodes of the graph have been visited at least once. Therefore, we have the following claim.

\begin{claim}
\label{claim35}
Every agent enters state {\tt cruiser} by round $r'_{\rho_{1}}$.
\end{claim}

With the claim below, we give a round from which we will no longer see an agent in state {\tt cruiser}.

\begin{claim}
\label{claim35bis}
From round $r'_{\rho_{2}}$ onwards, no agent will be in state {\tt cruiser}.
\end{claim}

\begin{proofclaim}
Assume by contradiction that the claim does not hold. By Lemma~\ref{lem:onlyone} and Claim~\ref{claim35}, this means there exists an agent $X$ that wakes up in some round $t\leq r'_{\rho_{1}}$ and that is in state {\tt cruiser} from round $t$ to at least round $r'_{\rho_{2}}$ included.
Note that by Lemma~\ref{lem:stepduration} and Claim~\ref{claim34}, $r'_{\rho_{2}}-r'_{\rho_{1}}$ is at least equal to $(\frac{n}{\mu}U\tau)^{11\beta}\geq (n\tau)^{11\beta}\geq 1$.

In view of Lemma~\ref{lem:diff} and the transition rules from state {\tt cruiser}, we know that when an agent in state {\tt cruiser} meets an agent in state {\tt cruiser} or {\tt token} in some round, it enters state {\tt token}, {\tt explorer} or {\tt shadow} in the next round. Thus, from round $t$ to round $r'_{\rho_{2}}-1$ included, agent $X$ never encounters an agent in state {\tt cruiser} or {\tt token}, otherwise we get a contradiction with the definition of $X$.

While in state {\tt cruiser}, agent $X$ works in phases $i= 1,2,\ldots$ where phase $i$ consists of an execution of procedure ${\tt EXPLO}(2^i)$ followed by an execution, for $2^i$ rounds, of procedure ${\tt TZ}(\ell_X)$. An entire execution of the $i$th phase lasts $2^{i\beta}+2^i$ rounds and, by the properties of procedure ${\tt EXPLO}$, allows to visit each node of the underlying graph at least once if $2^i\geq n$.

Clearly, when agent $A$ wakes up, agent $X$ cannot have entirely completed the first $\lceil \log n \rceil$ phases of state {\tt cruiser}, because otherwise, by round $r'_{\rho{1}}$, $X$ wakes up an agent that immediately enters state {\tt cruiser} (namely $A$), which is a contradiction. Hence, in view of the duration of a phase in state {\tt cruiser}, we know that agent $X$ cannot have started by round $r^*$ the $(\lceil \log n \rceil + \lceil \log \tau \rceil+1)$th phase of state {\tt cruiser}. However, $\lceil \log n \rceil + \lceil \log \tau \rceil+1\leq 4\lceil \log(n\tau) \rceil +1$, and an entire execution of the first $4\lceil \log(n\tau) \rceil +1$ phases of state {\tt cruiser} lasts at most $\sum_{i=1}^{4\lceil \log(n\tau) \rceil +1}(2^{i\beta}+2^i) < 2^{5\beta+2}(n\tau)^{4\beta}$ rounds, which is at most $\frac{2^{5\beta+2}}{(n\tau)^{7\beta}} (n\tau)^{11\beta}\leq (n\tau)^{11\beta}$ rounds because $n\tau\geq 2$ and $\beta\geq 2$. Since $(n\tau)^{11\beta}\leq r'_{\rho_{2}}-r'_{\rho_{1}}$, it follows that an entire execution of the $(4\lceil \log(n\tau) \rceil +1)$th phase of state {\tt cruiser} is started (resp. completed) by agent $A$ after round $r^*$ (resp. before round $r'_{\rho_{2}}$). This execution allows agent $X$ to visit each node of the graph and to meet an agent in state {\tt token} before round $r'_{\rho_{2}}$ (as ${\tt tok}(A)$ is in state {\tt token} at ${\tt home}(A)$ from round $r^*$ to round $r$ included). This is a contradiction, which concludes the proof of the claim.
\end{proofclaim}

The next claim gives a restriction on the set of agents that can be in state {\tt explorer} in round $r'_{\rho_{U^{22\beta}}}$ and after.

\begin{claim}
\label{claim36}
From round $r'_{\rho_{U^{22\beta}}}$ onwards, no agent outside of $\mathcal{X}_{\rho_{U^{22\beta}}}$ will be in state {\tt explorer}.
\end{claim}

\begin{proofclaim}
Assume by contradiction that the claim does not hold. This means there exists an agent $Y$ outside of $\mathcal{X}_{\rho_{U^{22\beta}}}$ that is in state {\tt explorer} in round $r'_{\rho_{U^{22\beta}}}$ or after. Since an agent can transit to state {\tt explorer} only from state {\tt cruiser}, Claim~\ref{claim35bis} implies that agent $Y$ is in state {\tt explorer} from round $r'_{\rho_{2}}$ to round $r'_{\rho_{U^{22\beta}}}$ included, and thus from round $r'_{\rho_{2}}$ to round $r_{\rho_{U^{22\beta}}}$.

We know, by Lemma~\ref{lem:stepduration} and Claim~\ref{claim34}, that $r_{\rho_{U^{22\beta}}}-r'_{\rho_{2}}\geq  (U^{22\beta}-3)(\frac{n}{\mu}U\tau)^{11\beta}\geq (U^{22\beta}-3)(n\tau)^{11\beta}$. Moreover, we have $(U^{22\beta}-3)(n\tau)^{11\beta}\geq U^{11\beta}(nU\tau)^{11\beta}-3(n\tau)^{11\beta}$ which is at least $U^9 2^{13}(nU\tau)^{11\beta}-3(n\tau)^{11\beta}\geq (\mu+1) 2^{13}(nU\tau)^{11\beta} +2^{13}(nU\tau)^{11\beta}$. Hence, Claim~\ref{claim24} and Lemma~\ref{lem:stepduration} guarantee that agent $Y$ executes a $k$-perfect sequence of steps $c_1,c',\ldots,c_k$ for some $k\geq1$, with $r_{c_1}\geq r'_{\rho_{2}}$ and  $r'_{c_k}\leq r_{\rho_{U^{22\beta}}}$. In view of the definition of such a sequence and Claim~\ref{claim34}, a step identical to $c_1$ is completed in round $r'_{c_1}$ by every agent of $\mathcal{X}_{c_1}$, ${\tt tok}(A)$ is visited by some agent of $\mathcal{X}_{c_1}$ in some round of $[r_{c_1}\ldotp\ldotp r'_{c_1}]$ and function ${\tt EST}^+$ returns a triple whose the first element is the Boolean value false in each step completed by an agent of $\mathcal{X}_{c_1}$ in round $r'_{c_1}$. Therefore, by the description of state {\tt explorer} and Lemma~\ref{lem:exp}, we have one of the following properties in round $r_{c_1}$: either $(1)$ the seniority of ${\tt tok}(Y)$ is greater than that of ${\tt tok}(A)$ or $(2)$ their seniorities are equal but the memory of ${\tt tok}(Y)$ is larger than or equal to the memory of ${\tt tok}(A)$.

Since agent $Y$ is in state {\tt explorer} from round $r'_{\rho_{2}}$ to round $r'_{\rho_{U^{22\beta}}}$ included, we know that agent ${\tt tok}(Y)$ is in state {\tt token} at ${\tt home}(Y)$ from round $r_{c_1}$ to round $r'_{\rho_{U^{22\beta}}}$ included, in view of Lemma~\ref{lem:transit} and the fact that no agent has declared termination before round $r$. Moreover, $r_{c_1}\leq r_{\rho_{U^{22\beta}}}$ because step $c_k$ is completed in round $r'_{c_k}$ and $r'_{c_k}\leq r_{\rho_{U^{22\beta}}}$. Thus, Claim~\ref{claim34} and the fact that $P=\rho_1,\rho_2,\ldots,\rho_{U^{23\beta}}$ is a $U^{23\beta}$-perfect sequence imply that ${\tt tok}(Y)$ is visited by some agent of $\mathcal{X}_{\rho_{U^{22\beta}}}$ in some round of $[r_{\rho_{U^{22\beta}}}\ldotp\ldotp r'_{\rho_{U^{22\beta}}}]$ and function ${\tt EST}^+$ returns a triple whose the first element is the Boolean value false in each step completed by an agent of $\mathcal{X}_{\rho_{U^{22\beta}}}$ in round $r'_{\rho_{U^{22\beta}}}$. Therefore, by the description of state {\tt explorer} and Lemma~\ref{lem:exp}, we have one of the following properties in round $r_{\rho_{U^{22\beta}}}$: either $(1)$ the seniority of ${\tt tok}(A)$ is greater than that of ${\tt tok}(Y)$ or $(2)$ their seniorities are equal but the memory of ${\tt tok}(A)$ is larger than or equal to the memory of ${\tt tok}(Y)$. By Lemma~\ref{lem:prec} and the fact that $r_{\rho_{U^{22\beta}}}\geq r_{c_1}$, it means that, in round $r_{c_1}$, either $(1)$ the seniority of ${\tt tok}(A)$ is greater than that of ${\tt tok}(Y)$ or $(2)$ their seniorities are equal but the memory of ${\tt tok}(A)$ is larger than or equal to the memory of ${\tt tok}(Y)$. According to what we established at the end of the previous paragraph, it necessarily follows that $M_{{\tt tok}(A)}(r_{c_1})=M_{{\tt tok}(Y)}(r_{c_1})$, and thus $M_{{\tt tok}(A)}(r_{\rho_3})=M_{{\tt tok}(Y)}(r_{\rho_3})$ as $r_{\rho_3}=r'_{\rho_2}\leq r_{c_1}$.

Recall that agent $Y$ is in state {\tt explorer} from round $r'_{\rho_{2}}$ to round $r'_{\rho_{U^{22\beta}}}$ included, which means it is in this state in round $r_{\rho_3}$. Consequently, Lemma~\ref{lem:toutexplo} and the fact that $M_{{\tt tok}(A)}(r_{\rho_3})=M_{{\tt tok}(Y)}(r_{\rho_3})$ imply that $Y$ starts an execution of ${\tt EST}^+$, and thus a step, in round $r_{\rho_3}$. Hence, $Y\in \mathcal{X}_{\rho_3}$. However, $\mathcal{X}_{\rho_3}=\mathcal{X}_{\rho_{U^{22\beta}}}$ as $P$ is a $U^{23\beta}$-perfect sequence. This contradicts the fact that $Y\notin \mathcal{X}_{\rho_{U^{22\beta}}}$, which proves the claim. 

\end{proofclaim}

Before concluding the proof of the lemma, we still need two more claims that are proven below.

\begin{claim}
\label{claim37}
In round $r_{\rho_{U^{23\beta}}}$, no agent is in state {\tt searcher}.
\end{claim}

\begin{proofclaim}
Assume by contradiction that there is an agent $X$ in state {\tt searcher} in round $r_{\rho_{U^{23\beta}}}$. Note that, in view of Algorithm~\ref{alg:algexplo}, an agent can transit to state {\tt searcher} only from state {\tt explorer} or {\tt token}. Moreover, by Lemma~\ref{lem:notwotok}, if an agent $Y$ enters state {\tt token} in some round then, in the same round, there exists an agent $Y'$ that enters state {\tt explorer} for which agent $Y$ becomes the token. Thus, by Lemma~\ref{lem:transit}, we know that an agent transits to state {\tt searcher} in a round $t$ iff an agent in state {\tt explorer} transits to state {\tt searcher} in round $t$.

By Lemma~\ref{lem:onlyone} and the fact that $P$ is a $U^{23\beta}$-perfect sequence of steps, we know that the agents of $\mathcal{X}_{\rho_{U^{22\beta}}}=\mathcal{X}_{\rho_{U^{23\beta}}}$ are in state {\tt explorer} from round $r_{\rho_{U^{22\beta}}}$ to round $r_{\rho_{U^{23\beta}}}$ included. Moreover, Claim~\ref{claim36} implies that every agent outside of $r_{\rho_{U^{22\beta}}}$ cannot be in state {\tt explorer} in any round of $[r_{\rho_{U^{22\beta}}}\ldotp\ldotp r_{\rho_{U^{23\beta}}}]$. Hence, in each round of $[r_{\rho_{U^{22\beta}}}\ldotp\ldotp r_{\rho_{U^{23\beta}}}-1]$, no agent in state {\tt explorer} transits to state {\tt searcher}. Since we stated above that an agent transits to state {\tt searcher} in a round $t$ iff an agent in state {\tt explorer} transits to state {\tt searcher} in round $t$, it follows that agent $X$ is in state {\tt searcher} from some round $t'\leq r_{\rho_{U^{22\beta}}}$ to round $r_{\rho_{U^{23\beta}}}$ included.

While in state {\tt searcher}, agent $X$ works in phases $i=1,2,\ldots$ where phase $i$ consists of an execution of procedure ${\tt EXPLO}(2^i)$. This is interrupted only when agent $X$ meets an agent in state {\tt token}, in which case it transits to state {\tt shadow} in the round of this meeting. For every positive integer $k$, an entire execution of the first $\lceil\log k\rceil$ phases lasts at most $\sum_{i=1}^{\lceil\log k\rceil}2^{i\beta}$, which is upper bounded by $(4k)^{\beta}$. Note that by Lemma~\ref{lem:stepduration} and Claim~\ref{claim34}, $r_{\rho_{U^{23\beta}}}-r_{\rho_{U^{22\beta}}}$ is at least equal to $(\frac{n}{\mu}U\tau)^{11\beta}\geq (n\tau)^{11\beta}$. Hence, since $t'\leq r_{\rho_{U^{22\beta}}}$ and $(n\tau)^{11\beta}> (4n)^{\beta}$, the first $\lceil \log n \rceil$ phases of state {\tt searcher} are started and completed by agent $X$ during a time interval included in $[t'\ldotp\ldotp r_{\rho_{U^{23\beta}}}-1]$. Given that an execution of procedure ${\tt EXPLO}(2^i)$ allows the executing agent to visit at least once every node of the graph if $2^i\geq n$, it follows that agent $X$ visits every node of the graph in the time interval $[t'\ldotp\ldotp r_{\rho_{U^{23\beta}}}-1]$.

Recall that $r_1$ is the first round in which there is at least one agent that enters state {\tt explorer}. By Lemma~\ref{lem:notwotok}, we know that $r_1$ is also the first round in which there is at least one agent that enters state {\tt token}. As mentioned above, agent $X$ can transit to state {\tt searcher} only from state {\tt explorer} or {\tt token}. Thus, by Lemmas~\ref{lem:transit} and~\ref{claim2}, we know that $t'\geq r_1$ and there is an agent that is in state {\tt token} from round $r_1$ to round $r_{\rho_{U^{23\beta}}}$ included. As a result, while in state {\tt searcher}, agent $X$ meets an agent in state {\tt token} in some round $t''$ of $[t'\ldotp\ldotp r_{\rho_{U^{23\beta}}}-1]$ and thus enters state {\tt shadow} in round $t''+1$. This contradicts the fact that agent $X$ is in state {\tt searcher} from some round $t'$ to round $r_{\rho_{U^{23\beta}}}$ included. Therefore, the claim holds.

\end{proofclaim}

\begin{claim}
\label{claim38}
In round $r_{\rho_{U^{23\beta}}}$, every agent in state {\tt token} is the token of an agent of $\mathcal{X}_{\rho_{U^{22\beta}}}$.
\end{claim}

\begin{proofclaim}
Consider any agent $X$ in state {\tt token} in round $r_{\rho_{U^{23\beta}}}$. By definition, we know that when $X$ enters state {\tt token} in some round then, in the same round, there exists an agent called ${\tt exp}(X)$ that enters state {\tt explorer} for which $X$ becomes the token. We also know that states {\tt token} and {\tt explorer} can be left only by declaring termination or by transiting to state {\tt searcher}. Moreover, we proved at the beginning of the proof that no agent declares termination before round $r$ and thus by round $r_{\rho_{U^{23\beta}}}$ as $r_{\rho_{U^{23\beta}}}<r$. Hence, by Lemma~\ref{lem:transit}, agent ${\tt exp}(X)$ is in state {\tt explorer} in round $r_{\rho_{U^{23\beta}}}$, which means it belongs to $\mathcal{X}_{\rho_{U^{22\beta}}}$ by Claim~\ref{claim36}. This concludes the proof of this claim.
\end{proofclaim}
~\\

We are now ready to conclude the proof of the lemma. Lemma~\ref{lem:guide} and Claims~\ref{claim35},~\ref{claim35bis} and~\ref{claim37} imply that, in round $r_{\rho_{U^{23\beta}}}$, every agent that is neither in state {\tt explorer} nor {\tt token} is in state {\tt shadow} and shares its current node with an agent in state {\tt token} that is its guide. Claims~\ref{claim36} and~\ref{claim38}, together with the fact that $P$ is a $U^{23\beta}$-perfect sequence of steps, imply that, in round $r_{\rho_{U^{23\beta}}}$, every agent in state {\tt explorer} (resp. {\tt token}) is an agent of $\mathcal{X}_{\rho_{U^{22\beta}}}=\mathcal{X}_{\rho_{U^{23\beta}}}$ (resp. is the token of an agent of $\mathcal{X}_{\rho_{U^{22\beta}}}=\mathcal{X}_{\rho_{U^{23\beta}}}$). Hence, in view of Lemma~\ref{lem:exp}, we know that in round $r_{\rho_{U^{23\beta}}}$, every agent is at a node occupied by the token of an agent of $\mathcal{X}_{\rho_{U^{23\beta}}}$. By the definition of $\mathcal{X}_{\rho_{U^{23\beta}}}$, the tokens of the agents of $\mathcal{X}_{\rho_{U^{23\beta}}}$ all have the same memory in round $r_{\rho_{U^{23\beta}}}$, which means that the symmetry index of the team is at least equal to the number of nodes occupied by these tokens in round $r_{\rho_{U^{23\beta}}}$. However, since the team is gatherable, Theorem~\ref{theo:charac} implies that the symmetry index of the team is $1$. Consequently all the agents occupies the same node in round $r_{\rho_{U^{23\beta}}}$ and, in view of Lemma~\ref{lem:notwotok}, only agent ${\tt tok}(A)$ is in state {\tt token} in this round. By Lemma~\ref{lem:transit}, the description of state {\tt token} and the fact that no agent declares termination before round $r$, we know that, in round $r_{\rho_{U^{23\beta}}}$, an agent $X$ cannot be in state {\tt explorer} if ${\tt tok}(X)$ is not in state {\tt token} in this round. This means that every agent different from $A$ is not in state {\tt explorer} in round  $r_{\rho_{U^{23\beta}}}$.

From the above explanations, it follows that, in round $r_{\rho_{U^{23\beta}}}$, every agent different from $A$ and ${\tt tok}(A)$ is in state {\tt shadow} and has ${\tt tok}(A)$ as its guide. Since ${\tt tok}(A)$ is in state {\tt token} from round $r_{\rho_{U^{23\beta}}}$ to round $r$ included, we know by the description of state {\tt shadow} and Lemma~\ref{lem:guide} that every agent different from $A$ and ${\tt tok}(A)$ is still in state {\tt shadow} in round $r$. This completes the proof of the lemma.
\end{proof}

With the last two lemmas below, we are ready to wrap up the proof. Specifically, the former states that the problem is solved when a sequence of $U^{25\beta}$ consecutive identical steps is completed by an agent in state {\tt explorer}, while the latter guarantees that such an event occurs after at most a polynomial time in $n$ and $|\lambda|$ from $r_0$.

\begin{lemma}
\label{lem:gatheroccurs}
When a sequence of $U^{25\beta}$ consecutive identical steps is completed in a round $r$ by an agent in state {\tt explorer}, then, in this round, all agents are together and declare termination.
\end{lemma}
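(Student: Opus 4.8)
The plan is to reduce the general statement to the case already settled by Lemma~\ref{lem:important}, and then to propagate the termination decision of the distinguished explorer to its token and to all shadows in the same round. Let $r$ be a round in which some agent $A$ in state {\tt explorer} completes a sequence of $U^{25\beta}$ consecutive identical steps, and let $r^*\leq r$ be the \emph{first} round in which any explorer completes such a sequence (this round exists, since $r$ itself witnesses it). I would apply Lemma~\ref{lem:important} to $r^*$: writing $A^*$ for the explorer that completes the sequence at $r^*$, it gives that no agent has declared termination before $r^*$, that ${\tt tok}(A^*)$ is in state {\tt token} in round $r^*$, and that every agent different from $A^*$ and ${\tt tok}(A^*)$ is in state {\tt shadow} in round $r^*$. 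In particular this partitions all agents into one {\tt explorer} ($A^*$), one {\tt token} (${\tt tok}(A^*)$) and {\tt shadow}s, so $A^*$ is the unique explorer at $r^*$ and no agent is a {\tt searcher} in that round.

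Next I would verify that $A^*$ declares termination in round $r^*$ and that all agents sit at ${\tt home}(A^*)$. Since the $U^{25\beta}$ steps are identical, they share the same Boolean component $b$ of the triple returned by ${\tt EST}^+$; were $b=\true$, the {\tt repeat} loop of Algorithm~\ref{alg:algexplo} would exit after the very first of them, contradicting the presence of $U^{25\beta}\geq 2$ consecutive such steps. Hence $b=\false$, the loop exits at line~\ref{l:repeat} because $counter=U^{25\beta}$, and line~\ref{l:afterrepeat3} makes $A^*$ declare termination in round $r^*$. For the positions: by the third property of Lemma~\ref{lem:exp}, when the final ${\tt EST}^+$ of the last step completes, both $A^*$ and ${\tt tok}(A^*)$ are at ${\tt home}(A^*)$; $A^*$ then only waits there, and ${\tt tok}(A^*)$ stays idle in state {\tt token}, so both are at ${\tt home}(A^*)$ in round $r^*$. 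Because no {\tt searcher} exists in round $r^*$, Lemma~\ref{lem:guide} forces every shadow's guide to be in state {\tt token}, and by Lemma~\ref{lem:notwotok} there is at most one token per node; as ${\tt tok}(A^*)$ is the only agent in state {\tt token}, every shadow has ${\tt tok}(A^*)$ as its guide and, again by Lemma~\ref{lem:guide}, is co-located with it at ${\tt home}(A^*)$. Thus all agents are together at ${\tt home}(A^*)$ in round $r^*$.

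It then remains to cascade the termination declaration. Since the explorer $A^*$ at ${\tt home}(A^*)$ declares termination in round $r^*$, the second transition rule in the description of state {\tt token} makes ${\tt tok}(A^*)$ declare termination in the same round $r^*$; and each remaining agent, being a shadow whose guide is ${\tt tok}(A^*)$, copies this decision by the description of state {\tt shadow} and declares termination in round $r^*$ as well. Hence in round $r^*$ all agents are at ${\tt home}(A^*)$ and simultaneously declare termination. Finally, since every agent declares termination in round $r^*$ it definitively stops, so no agent can be in state {\tt explorer} in any later round; consequently $r^*$ is the \emph{only} round in which an explorer can complete such a sequence, whence $r=r^*$ and the lemma follows.

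The substantive work is entirely carried by Lemma~\ref{lem:important}, so the only real care needed here is bookkeeping. The points I expect to require the most attention are: confirming that the triggering explorer is unique and that $b=\false$ in the terminal step (so that line~\ref{l:afterrepeat3}, not line~\ref{l:afterrepeat2}, is executed); ruling out {\tt searcher} guides so that every non-token, non-explorer agent is a shadow of ${\tt tok}(A^*)$ and hence co-located at ${\tt home}(A^*)$; and checking the round indexing, namely that the step completion, the explorer's declaration, the token's induced declaration, and the shadows' mirrored declaration all fall in the \emph{same} round $r^*$, with no off-by-one between the end of the last ${\tt EST}^+$, the subsequent waiting period, and the exit of the {\tt repeat} loop.
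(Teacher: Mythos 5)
Your proposal is correct and follows essentially the same route as the paper's proof: reduce to the first such round, invoke Lemma~\ref{lem:important} for the state partition, use Lemma~\ref{lem:exp} to place $A$ and ${\tt tok}(A)$ at ${\tt home}(A)$, and Lemma~\ref{lem:guide} to co-locate the shadows before cascading the termination declaration. The only difference is that you spell out why the reduction to the first round is without loss of generality (all agents stop there, so no later witness can exist), which the paper leaves implicit behind its ``w.l.o.g.''
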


\begin{proof}
Without loss of generality, we assume that $r$ is the first round when a sequence of $U^{25\beta}$ consecutive identical steps is completed by an agent $A$ in state {\tt explorer}. We know by lines~\ref{l:repeat}-\ref{l:afterrepeat2} of Algorithm~\ref{alg:algexplo}, that in each of these steps, function ${\tt EST}^+$ returns a triple whose the first element is the Boolean value false. Given that a step is made of an execution of function ${\tt EST}^+$ followed by a possible waiting period, we also know by the third property of Lemma~\ref{lem:exp} that agent $A$ is at ${\tt home}(A)$ in round $r$. Thus, by lines~\ref{l:repeat}-\ref{l:afterrepeat3} of Algorithm~\ref{alg:algexplo}, agent $A$ declares termination at ${\tt home}(A)$ in round $r$.

In view of Lemma~\ref{lem:important} and the fact that while in state {\tt token} an agent remains idle, ${\tt tok}(A)$ is in state {\tt token} at ${\tt home}(A)$ in round $r$. Furthermore, for the agents, other than $A$ and ${\tt tok}(A)$, we know by Lemma~\ref{lem:important} that they are in state {\tt shadow} in round $r$, which means that agent $A$ is the only agent in state {\tt explorer} in round $r$. Thus, by the description of state {\tt token} and the fact that $A$ declares termination at ${\tt home}(A)$ in round $r$, ${\tt tok}(A)$ also declares termination in this round.

We mentioned just above that all the agents, except $A$ and ${\tt tok}(A)$, are in state {\tt shadow} in round $r$. By Lemma~\ref{lem:guide} and the fact that agent ${\tt tok}(A)$ is the only agent in state {\tt token} in round $r$, we also know that these agents occupy in round $r$ the same node as their guide i.e., ${\tt tok}(A)$. In view of the description of state {\tt shadow}, it follows that, in round $r$, these agents in state {\tt shadow} are all at ${\tt home}(A)$ and declare termination, as their guide does so in the same round.

Consequently, we proved that all agents, including $A$ and ${\tt tok}(A)$, are at ${\tt home}(A)$ in round $r$ and declare termination in this round, which ends the proof of the lemma.
\end{proof}

\begin{lemma}
\label{lem:gathertime}
A sequence of $U^{25\beta}$ consecutive identical steps is completed by an agent in state {\tt explorer} within at most a polynomial number of rounds in $n$ and $|\lambda|$ after round $r_0$. 
\end{lemma}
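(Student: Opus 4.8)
The plan is to single out the one explorer and one token that survive forever, show that the rest of the team collapses onto that token within polynomial time, and then observe that the surviving explorer, once undisturbed, mechanically accumulates the required $U^{25\beta}$ identical steps. First I would fix the survivors. By Lemma~\ref{claim2} there is an agent $E\in EXP$ that never switches from {\tt explorer} to {\tt searcher}; among such agents I take one whose token has the largest memory in round $r_1$, so that $T:={\tt tok}(E)$ has maximal seniority (it enters in $r_1$) and, among round-$r_1$ tokens, maximal memory. By Lemma~\ref{lem:transit} and Lemma~\ref{lem:onlyone}, $T$ never becomes a {\tt searcher} and both $E$ and $T$ remain in their states from $r_1$ onward (until a possible termination), so $T$ is a stationary token sitting at ${\tt home}(E)$. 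Two polynomial bounds are recorded here: $r_1-r_0$ is polynomial in $n$ and $|\lambda|$ by Lemmas~\ref{lem:tokappears} and~\ref{lem:notwotok}, hence the cruiser time $\tau$ of $E$ satisfies $\tau\le r_1-r_0=\mathrm{poly}(n,|\lambda|)$; and since the multiplicity index obeys $\mu\le k\le n$, we have $\mu\le U\le(\mu+1)^2\le(n+1)^2$, so $U$ and $U^{25\beta}$ are polynomial in $n$. Because $E$ never becomes a searcher, every completed step of $E$ returns $b=\false$, so Lemma~\ref{lem:stepduration} applies and each such step lasts at most $2^{13}(nU\tau)^{11\beta}$ rounds, which is $\mathrm{poly}(n,|\lambda|)$.

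The heart of the proof is a stabilization claim: there is a round $r_2=r_0+\mathrm{poly}(n,|\lambda|)$ from which $E$ is the unique {\tt explorer}, $T$ the unique {\tt token}, and every other agent is a {\tt shadow} transitively guided by $T$. I would prove this through three polynomially bounded elimination sub-arguments. \emph{(i) No cruisers survive long:} every agent wakes up and enters {\tt cruiser} within polynomial time (from the exploration guarantees, using that an early agent's phases run ${\tt EXPLO}(2^{\lceil\log n\rceil})$ unless interrupted, and interruptions create explorers whose classes collectively explore the whole graph by Lemma~\ref{lem:toutexplo}); and with $T$ sitting as a stationary token, the {\tt EXPLO}/{\tt TZ} phases force any remaining cruiser to meet a token or another cruiser within polynomial time, so no new {\tt explorer}/{\tt token} is created after polynomially many rounds. \emph{(ii) Dominated explorers become searchers:} any explorer $A$ with ${\tt tok}(A)\ne T$ has a token dominated by $T$ (lower seniority, or equal seniority but smaller memory, the latter preserved by Lemma~\ref{lem:prec}); by Lemma~\ref{lem:toutexplo} the class of explorers sharing ${\tt tok}(A)$'s memory collectively visits every node, in particular ${\tt home}(E)$, so some such explorer encounters the dominant token $T$ and obtains $b=\true$, switching to {\tt searcher}. \emph{(iii) Searchers become shadows:} a {\tt searcher} runs ${\tt EXPLO}(2^i)$ for growing $i$ and, once $2^i\ge n$, visits the node of $T$ within $\mathrm{poly}(n)$ rounds and transits to {\tt shadow}. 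Chaining (i)--(iii), and noting that the number of distinct token-memory classes that can dominate-then-fall is at most $\mu\le n$, yields a cascade of bounded depth, each layer costing polynomially many steps of duration at most $2^{13}(nU\tau)^{11\beta}$; summing gives $r_2-r_0=\mathrm{poly}(n,|\lambda|)$.

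Once the configuration is frozen at $r_2$, the steps of $E$ become identical: $T$ is the only token, so $E$ never confuses it, every ${\tt EST}^+(M_T)$ traverses the same path, admits all $n$ nodes (so $\eta=n$ by Lemma~\ref{lem:exp}), returns to ${\tt home}(E)$, has $b=\false$, and is followed by the same deterministic waiting period of Algorithm~\ref{alg:algexplo}. Hence the counter increases by one at each step, reaching $U^{25\beta}$ after at most $U^{25\beta}$ further steps. Since each such step lasts at most $2^{13}(nU\tau)^{11\beta}$ rounds, a sequence of $U^{25\beta}$ consecutive identical steps is completed by round $r_2+U^{25\beta}\cdot 2^{13}(nU\tau)^{11\beta}=r_0+\mathrm{poly}(n,|\lambda|)$, proving the lemma.

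The main obstacle is sub-argument (ii) combined with the cascade bound. It is not enough that a dominated class collectively visits $T$; one must argue that this reliably forces its members into state {\tt searcher} within a controlled number of steps, which requires the synchronization delivered by the tuned waiting periods and the \emph{perfect-sequence} machinery already used in Lemma~\ref{lem:important} (notably the potential-decrease argument of Claim~\ref{claim24}). Certifying that the cascade terminates after polynomially many rounds, while keeping the growing waiting periods under the uniform bound $2^{13}(nU\tau)^{11\beta}$, is the delicate and technically heaviest part of the argument.
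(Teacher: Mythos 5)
Your endpoints (selecting the surviving explorer $A^*$ via Lemma~\ref{claim2}, bounding each step by Lemma~\ref{lem:stepduration} with $\tau\leq r_1-r_0+1$ and $U\leq (n+1)^2$) match the paper, but the core of your argument diverges from the paper's proof and contains a genuine gap. You route everything through a global stabilization claim: that within $\mathrm{poly}(n,|\lambda|)$ rounds the configuration freezes with $E$ as the unique explorer, $T$ as the unique token, and everyone else a shadow, after which the identical steps accumulate mechanically. This claim is stronger than anything the paper proves before termination (Lemma~\ref{lem:important} only characterizes the configuration \emph{at} the round where the identical-step sequence completes, not earlier), and your sketch of the cascade does not establish it. Concretely: in your sub-argument (ii), Lemma~\ref{lem:toutexplo} only guarantees that \emph{at least one} member of a dominated class visits ${\tt home}(E)$ per collective execution, so only one explorer per class per step is forced into state {\tt searcher}; classes can also split as token memories diverge and new explorer/token pairs can be created by late cruiser meetings, so the number and evolution of classes is not controlled by the bound ``at most $\mu$ classes''; and the steps of distinct classes are unsynchronized and of different durations, so summing ``layers'' of the cascade is not justified. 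You flag this yourself as ``the delicate and technically heaviest part,'' but it is precisely the part that carries the lemma, so the proof is incomplete as written.

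The paper avoids the stabilization claim entirely with a counting argument you could adopt. Since $A^*$ never switches to {\tt searcher}, every one of its steps returns $b=\false$ and is followed by a waiting period determined by the trace, so a block of $U^{25\beta}$ consecutive steps is all-identical if and only if the traces returned by ${\tt EST}^+$ within it are all identical. A trace change between two executions can only be caused by some agent $B$ whose memory relationship to $M_{{\tt tok}(A^*)}(\cdot)$ changes between the corresponding start rounds; by Lemma~\ref{lem:prec}, two memories that differ can never become equal, so the only possible event is a divergence, and each of the at most $n$ agents can diverge at most once. Hence at most $n$ pairwise disjoint blocks of $U^{25\beta}$ consecutive steps can fail to be all-identical, and since each block lasts at most $U^{25\beta}\cdot 2^{13}\bigl(nU\,P(n,|\lambda|)\bigr)^{11\beta}=\mathrm{poly}(n,|\lambda|)$ rounds, an all-identical block is completed by round $r_1+(n+1)Q(n,|\lambda|)$. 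This sidesteps every difficulty in your cascade.
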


\begin{proof}
Recall that $r_1$ denotes the first round in which at least one agent enters state {\tt explorer}. By Lemmas~\ref{lem:tokappears} and~\ref{lem:notwotok}, the difference $r_1-r_0$ is at most some polynomial $P(n,|\lambda|)$. 

Let $EXP$ be the non-empty set of agents that enters state {\tt explorer} in round $r_1$. By Lemma~\ref{claim2}, we know there exists an agent $A^*$ of $EXP$ that never switches from state {\tt explorer} to state {\tt searcher}.

In view of lines~\ref{l:4}-\ref{l:afterrepeat2} of Algorithm~\ref{alg:algexplo}, this implies that each execution of ${\tt EST}^+$ by agent $A^*$ returns a triple whose the first element is the Boolean value false, and is immediately followed by an execution of line~\ref{l:if2} of Algorithm~\ref{alg:algexplo}. Furthermore, given two triples returned by any two executions of ${\tt EST}^+$ by agent $A^*$, if their third elements (which correspond to the traces of the execution of ${\tt EST}^+$) are identical, then their second elements (which correspond to the orders of the BFS trees constructed during the executions of ${\tt EST}^+$) are also identical and, in both cases, the waiting period at line~\ref{l:if2} of Algorithm~\ref{alg:algexplo} that immediately follows the execution of ${\tt EST}^+$ lasts the same amount of time. From this, it follows that a sequence of $U^{25\beta}$ consecutive steps executed by agent $A^*$ are identical if, and only if, the third elements of the triples returned by the executions of ${\tt EST}^+$ within these steps are all identical. In view of lines~\ref{l:if}-\ref{l:repeat} of Algorithm~\ref{alg:algexplo}, it also follows that agent $A^*$ stops executing the repeat loop of this algorithm when, and only when, it has completed a sequence of $U^{25\beta}$ consecutive identical steps.

Consider any sequence of $U^{25\beta}$ consecutive steps that is started (resp. completed) by agent $A^*$ in some round $r$ (resp. $r'$). If the $U^{25\beta}$ steps of this sequence are not all identical then, according to the explanations given above, there must exist two triples, among those returned during the execution of the sequence, whose the third elements differ. In view of function ${\tt EST}^+$, this can happen only if there is a round $r\leq r'' \leq r'$ such that at least one of the following conditions is satisfied:
\begin{enumerate}
\item $M_{r''}({\tt tok}(A))=M_{r''}(B)$ and $M_{r''-1}({\tt tok}(A))\ne M_{r''-1}(B)$ for some agent $B$
\item $M_{r''}({\tt tok}(A))\ne M_{r''}(B)$ and $M_{r''-1}({\tt tok}(A))=M_{r''-1}(B)$ for some agent $B$
\end{enumerate}

However, by Lemma~\ref{lem:prec}, we know that the first condition cannot actually be satisfied. By the same lemma and the fact that there are at most $n$ agents, we also know that the number of rounds that can fulfilled the second condition is at most $n$. Moreover, $U=2^{(2^{\lceil\log\log (\mu +1)\rceil})}\leq (n+1)^2$ and, by Lemma~\ref{lem:stepduration}, each execution of a step by agent $A^*$ lasts at most $2^{13}(nUP(n,|\lambda|))^{11\beta}$ rounds (we can indeed apply Lemma~\ref{lem:stepduration} because, as mentioned above, each execution of ${\tt EST}^+$ by $A^*$ returns $(\false,*,*)$). This implies that a sequence of $U^{25\beta}$ consecutive steps executed by agent $A^*$ lasts a number of rounds that is at most some polynomial $Q(n,|\lambda|)$ and agent $A^*$ can execute at most $n$ pairwise disjoint such sequences, in which all the $U^{25\beta}$ steps are not identical. Hence, since agent $A^*$ stops executing the repeat loop of Algorithm~\ref{alg:algexplo} when, and only when, it has completed a sequence of $U^{25\beta}$ consecutive identical steps, it must have completed such a sequence by round $r_1+(n+1)Q(n,|\lambda|)\leq P(n,|\lambda|)+(n+1)Q(n,|\lambda|)$, which proves the lemma.


\end{proof}

Lemmas~\ref{lem:gatheroccurs} and~\ref{lem:gathertime} directly imply the next theorem.

\begin{theorem}
\label{theo:pos}
Let $\mathcal{O}$ be the oracle that, for each team $L$ with multiplicity index $\mu$, associates the binary representation of $\lceil\log\log (\mu+1)\rceil$ to each initial setting $IS(L)$ (the size of $\mathcal{K}$ is therefore in ${O}(\log \log \log \mu)$). Algorithm ${\mathcal{HG}}$ is an $\mathcal{O}$-universal gathering algorithm and its time complexity with $\mathcal{O}$ is polynomial in $n$ and $|\lambda|$.
\end{theorem}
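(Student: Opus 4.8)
The plan is to obtain Theorem~\ref{theo:pos} as an immediate corollary of the two preceding lemmas, which have already isolated the two things the theorem asserts: that gathering with simultaneous termination eventually occurs, and that it occurs quickly. The first step is to check that the hypotheses under which Lemmas~\ref{lem:gatheroccurs} and~\ref{lem:gathertime} were established coincide exactly with the setting of the theorem. Those lemmas are stated under three standing assumptions\textemdash that the team is gatherable, that each agent runs procedure $\mathcal{HG}$ upon waking up, and that the advice $\mathcal{K}$ is the binary representation of $\lceil\log\log(\mu+1)\rceil$. Since the oracle $\mathcal{O}$ in the theorem assigns precisely this string to every initial setting of every team, and since $\mathcal{O}$-universality only requires solving gathering on gatherable teams, both lemmas apply verbatim to any instance we must handle.

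For the correctness part, I would fix an arbitrary gatherable team $L$, an arbitrary initial setting $IS(L)$, and $\mathcal{K}=\mathcal{O}(IS(L))$. By Lemma~\ref{lem:gathertime}, there is a round $r$ in which some agent in state {\tt explorer} completes a sequence of $U^{25\beta}$ consecutive identical steps. By Lemma~\ref{lem:gatheroccurs}, in that same round $r$ all agents occupy a common node and simultaneously declare termination. This is exactly the definition of gathering, so $\mathcal{HG}$ solves gathering on $IS(L)$; as $L$ and $IS(L)$ were arbitrary, $\mathcal{HG}$ is $\mathcal{O}$-universal. The key point making the two lemmas compose is that they refer to the very same event\textemdash the completion of $U^{25\beta}$ consecutive identical steps by an explorer\textemdash so the round $r$ furnished by Lemma~\ref{lem:gathertime} is precisely the round to which Lemma~\ref{lem:gatheroccurs} applies.

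For the complexity part, I would recall that the time complexity with $\mathcal{O}$ is the worst-case number of rounds from $r_0$, the wake-up of the earliest agent, until every agent declares termination. By the argument above, all agents declare termination exactly in round $r$, and Lemma~\ref{lem:gathertime} bounds $r-r_0$ by a fixed polynomial in $n$ and $|\lambda|$, uniformly over all initial settings. Hence the time complexity is polynomial in $n$ and $|\lambda|$, as claimed.

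Since the substantive difficulty has been fully absorbed into the preceding lemmas, I do not expect a genuine obstacle at the level of the theorem itself. The only residual care needed is to ensure that the explorer completing $U^{25\beta}$ identical steps really triggers termination for everyone, rather than being preempted by some earlier spurious termination; but this is already guaranteed inside Lemma~\ref{lem:gatheroccurs} (via Lemma~\ref{lem:important}, which shows no agent declares termination before round $r$). Thus the proof is a short, clean composition of Lemmas~\ref{lem:gatheroccurs} and~\ref{lem:gathertime}.
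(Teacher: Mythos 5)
Your proposal is correct and matches the paper exactly: the paper derives Theorem~\ref{theo:pos} as a direct consequence of Lemmas~\ref{lem:gatheroccurs} and~\ref{lem:gathertime}, which is precisely the composition you describe. Your write-up merely makes explicit the bookkeeping (standing assumptions, the shared triggering event, and the definition of time complexity) that the paper leaves implicit.
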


Note that any team with multiplicity index $1$ also has symmetry index $1$, and is thus gatherable by Theorem~\ref{theo:charac}.
As a result, from Algorithm ${\mathcal{HG}}$, we can derive another algorithm, call it ${\mathcal{HG}}^+$, that allows to gather any team of multiplicity index $1$ without requiring any initial common knowledge. Specifically, this can be achieved by making the following two modifications to Algorithm~\ref{alg:algexplo}:
\begin{enumerate}
\item Assign value $2$ instead of $2^{(2^x)}$ to variable $U$ at line~\ref{l:3} ($2^{(2^{\lceil\log\log (\mu+1)\rceil})}=2$ when $\mu=1$).
\item Remove line~\ref{l:2}.
\end{enumerate}

By definition, a team for which the multiplicity index is $1$ is a team in which all agents' labels are pairwise distinct. Hence, in view of Theorem~\ref{theo:pos}, we obtain the corollary below.

\begin{corollary}
\label{cor:pos}
Without using any common knowledge $\mathcal{K}$, Algorithm ${\mathcal{HG}}^+$ allows to gather, in polynomial time in $n$ and $|\lambda|$, any team in which all agents' labels are pairwise distinct.
\end{corollary}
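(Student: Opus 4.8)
The plan is to reduce Corollary~\ref{cor:pos} to Theorem~\ref{theo:pos} by showing that, on any team whose multiplicity index $\mu$ equals $1$, the modified algorithm $\mathcal{HG}^+$ produces exactly the same execution as $\mathcal{HG}$ would produce when fed the correct oracle advice. First I would record the feasibility side: if every label has multiplicity $1$, then the greatest common divisor of all the multiplicities is $1$, so the symmetry index of the team is $1$; hence by Theorem~\ref{theo:charac} the team is gatherable, and it is meaningful to ask whether $\mathcal{HG}^+$ gathers it.

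The central observation is that the advice $\mathcal{K}$ is consulted at exactly one place in the whole description of $\mathcal{HG}$, namely line~\ref{l:2} of Algorithm~\ref{alg:algexplo}, where the integer $x$ is decoded and then used only to set $U := 2^{(2^x)}$ at line~\ref{l:3}. Consequently, the sole influence of $\mathcal{K}$ on the behavior of $\mathcal{HG}$ is through the value of $U$. I would then compute this value in the regime $\mu=1$: since the oracle $\mathcal{O}$ returns the binary representation of $\lceil\log\log(\mu+1)\rceil$, and $\lceil\log\log 2\rceil = \lceil\log 1\rceil = 0$, we get $x=0$ and therefore $U=2^{(2^0)}=2$. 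Thus, over the entire class of teams with $\mu=1$, the value $U$ is the constant $2$, independent of the particular team. This is precisely what licenses the two modifications defining $\mathcal{HG}^+$: hardcoding $U:=2$ at line~\ref{l:3} and deleting the now-unused line~\ref{l:2}.

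With this established, the two executions coincide round by round: for a team of multiplicity index $1$, running $\mathcal{HG}^+$ (which needs no advice) yields the same sequence of states, moves, and termination decisions as running $\mathcal{HG}$ with $\mathcal{K}=\mathcal{O}(IS(L))$, because every agent computes the same value $U=2$ and all other instructions are identical. Applying Theorem~\ref{theo:pos}, which guarantees that $\mathcal{HG}$ with $\mathcal{O}$ solves gathering on every gatherable team in time polynomial in $n$ and $|\lambda|$, we conclude that $\mathcal{HG}^+$ solves gathering on every team of multiplicity index $1$ within the same polynomial time bound, and does so without any common knowledge. Since a team of multiplicity index $1$ is exactly a team in which all labels are pairwise distinct, this is the claimed statement.

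The step I expect to require the most care is the localization argument: one must verify that no part of $\mathcal{HG}$ other than line~\ref{l:2} reads $\mathcal{K}$, so that substituting the constant $U=2$ really reproduces the oracle-driven execution faithfully. Once that is confirmed, everything else is a direct instantiation of the already-proved Theorem~\ref{theo:pos}, with the constant $U=2$ trivially satisfying the bound $U\leq (n+1)^2$ used in its supporting analysis.
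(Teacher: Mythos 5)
Your proposal is correct and follows essentially the same route as the paper: observe that multiplicity index $1$ implies symmetry index $1$ (hence gatherability by Theorem~\ref{theo:charac}), note that the advice is used only to set $U$ and that $2^{(2^{\lceil\log\log(\mu+1)\rceil})}=2$ when $\mu=1$, so hardcoding $U:=2$ reproduces the oracle-driven execution, and then invoke Theorem~\ref{theo:pos}. No gaps.
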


\section{Time Complexity vs Size of $\mathcal{K}$: A Negative Result}
\label{sec:neg}

In this section, we establish a negative result on the size of
$\mathcal{K}=\mathcal{O}(IS(L))$ required for an algorithm to gather
any gatherable team $L$ in a time at most polynomial in $n$ and
$|\lambda|$, regardless of the adversary's choices. Essentially, it
states that to gather any gatherable team in polynomial time in $n$
and $|\lambda|$, the size of advice used by Algorithm ${\mathcal{HG}}$
presented in the previous section is almost optimal. Its proof follows
an approach similar to~\cite{CGM12j}.

\begin{theorem}
\label{theo:neg}
%
Let $\mathcal{O}$ be an oracle such that the size of
$\mathcal{K}=\mathcal{O}(IS(L))$, for any gatherable team $L$ and any
initial setting $IS(L)$, is in $o(\log\log\log \mu)$, where $\mu$ is
the multiplicity index of $L$. There exists no $\mathcal{O}$-universal
gathering algorithm whose time complexity with $\mathcal{O}$ is
polynomial in $n$ and $|\lambda|$.
\end{theorem}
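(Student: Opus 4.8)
The plan is to argue by contradiction, combining an \emph{indistinguishability} argument with a counting bound on the number of available advice strings, in the spirit of~\cite{CGM12j}. Suppose there were an oracle $\mathcal{O}$ whose advice has size $o(\log\log\log\mu)$ together with an $\mathcal{O}$-universal gathering algorithm $\mathcal{A}$ running within a fixed polynomial bound $T(n,|\lambda|)$ on every gatherable team. I would fix the smallest label to $1$, so that $|\lambda|=1$ and $T(n,|\lambda|)$ becomes a polynomial $T(n)$ in the graph order alone. For every integer $R$ I would then build a family of $R$ gatherable initial settings $I_1,\dots,I_R$, with strictly increasing multiplicity indices $\mu_1<\cdots<\mu_R$, and show that $\mathcal{O}$ is forced to assign pairwise \emph{distinct} advice strings to them. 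Since the advice for $I_t$ has length at most $\Phi(\mu_t)$, where $\Phi(M):=\max_{\mu\le M}\phi(\mu)$ is the running maximum of the advice-size bound $\phi=o(\log\log\log\cdot)$ and is itself $o(\log\log\log M)$, the $R$ distinct strings all have length at most $\Phi(\mu_R)$, giving $R\le 2^{\Phi(\mu_R)+1}$.

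The heart of the matter is to show that two instances of the family cannot share the same advice. I would realize each $I_t$ on a large ring (so that port $0$ is clockwise, port $1$ counter-clockwise, and every bounded-radius neighbourhood of a node is a plain arc), with all agents woken in round~$1$. The base instance $I_1$ is a small ``block'' of agents occupying one arc, with coprime label multiplicities (hence symmetry index $1$, thus gatherable by Theorem~\ref{theo:charac}), surrounded by empty arcs long enough that each agent's radius-$T(n_1)$ neighbourhood contains only the block. Given $I_t$, I construct $I_{t+1}$ by placing $q_t\ge 2$ well-separated copies of the $I_t$-configuration on a much larger ring and adding one agent carrying a \emph{fresh} label of multiplicity $1$ far from every copy; this single fresh label forces the gcd of all multiplicities down to $1$, so $I_{t+1}$ stays gatherable, while the copies' multiplicities are multiplied by $q_t$, giving $\mu_{t+1}=q_t\mu_t$. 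Choosing the inter-copy spacing at level $t+1$ to exceed $2T(n_t)$ guarantees, by induction over the levels, that within the first $T(n_s)$ rounds the radius-$T(n_s)$ neighbourhood of any embedded copy of $I_s$ is \emph{identical} (graph, ports, labels, wake-up times) to standalone $I_s$, and that no external agent (including the fresh distinguishing agents) can reach it in that time.

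With this in hand, suppose $I_s$ and $I_t$ (with $s<t$) received the same advice $a$. On standalone $I_s$, correctness of $\mathcal{A}$ forces all agents to meet at a single node and declare termination at some round $t_s\le T(n_s)$. Running $\mathcal{A}$ on $I_t$ with the same advice $a$, every embedded copy of $I_s$ sees, for the first $t_s\le T(n_s)$ rounds, exactly the local view of standalone $I_s$; since $\mathcal{A}$ is deterministic, its agents replicate that execution verbatim and declare termination at round $t_s$ \emph{inside their own copy}. As $I_t$ contains at least two such copies at distinct locations, agents stop at two different nodes; because declaring termination is permanent, the team can never be simultaneously co-located and declaring termination, so gathering is never correctly achieved on the gatherable team $I_t$ — contradicting universality. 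Hence the advice strings are pairwise distinct, as claimed.

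Finally, taking $q_t=2^{2^t}$ yields $\mu_R\ge 2^{2^R-2}$, so $\log\log\log\mu_R=\Theta(\log R)$ and $\Phi(\mu_R)=o(\log R)$; substituting into $R\le 2^{\Phi(\mu_R)+1}$ gives $\log_2 R\le o(\log R)+1$, which is impossible for large $R$. This contradiction proves the theorem. The step deserving the most care — and the main obstacle — is the inductive local-indistinguishability claim: one must verify that the recursive blow-up preserves, level after level, the exact radius-$T(n_s)$ view of every embedded copy of every smaller instance, while simultaneously keeping the global symmetry index equal to $1$, so that each $I_t$ is genuinely gatherable yet locally mimics many disjoint terminating executions.
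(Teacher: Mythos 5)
Your overall architecture (counting the number of available advice strings against a recursively blown-up family of gatherable instances, forcing two instances to share an advice, then deriving a contradiction via indistinguishability; keeping $\lambda=1$; adding one fresh unique label to keep the symmetry index equal to $1$) is the same as the paper's, and your counting step is sound. However, there is a genuine gap in the key indistinguishability claim, precisely the step you flag as delicate. You place $q_t$ copies of $I_t$ on a larger ring separated by empty arcs of length exceeding $2T(n_t)$ and claim that, for $T(n_s)$ rounds, the radius-$T(n_s)$ view of an embedded copy of $I_s$ is \emph{identical} to that of standalone $I_s$. This cannot hold: the time bound satisfies $T(n_s)\geq n_s$ (any polynomial bound of the relevant form is at least linear in the graph order), so in the standalone instance the radius-$T(n_s)$ neighbourhood of any agent wraps all the way around the ring of circumference $n_s$ and re-encounters the block of agents, whereas in the embedded copy the same walk runs into a long empty separating arc. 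The two views therefore diverge after about $n_s$ rounds, well within the time budget, and a deterministic algorithm can legitimately exploit this (e.g., by exploring and effectively measuring the ring) to terminate correctly on $I_s$ while behaving differently inside $I_t$. No choice of ``large enough'' separation fixes this, since the required empty arc length would have to exceed $2T(n_s)\geq 2n_s$, which already exceeds the whole circumference of $I_s$.

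The paper's construction avoids this by making the copies \emph{contiguous and exactly periodic}: $W_{i+1}=(W_i)^{P(|W_i|)}\cdot\ell$ with one agent on every node, so the interior of the tiled region is locally isomorphic to the unrolled small ring (walking $n_s$ steps in either direction re-encounters the same periodic pattern, exactly as wrapping around standalone $I_s$ would). The indistinguishability is then proved not by a static radius-$T$ ball argument but by a window that shrinks by one position from each end of the tiled region per round (property $H(r)$ for $r-1\leq j\leq |W|-r$), and the number of copies $P(|W_s|)=4b|W_s|^{d-1}$ is calibrated so that the window still contains at least two positions congruent modulo $|W_s|$ at the termination round $b\,n_s^d=|W|/4$. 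To repair your proof you would need to replace the ``well-separated copies'' construction and the radius-$T$ neighbourhood argument with this periodic tiling and shrinking-window argument (or something equivalent); as written, the inductive local-indistinguishability claim is false and the contradiction does not go through.
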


\begin{proof}
  Let $\mathcal{O}$ be an oracle that gives an advice
  of size $f(\mu)$ in $o(\log\log\log \mu)$
  for any initial setting $IS(L)$.
  Since $f(\mu) = o(\log\log\log \mu)$, for every $\varepsilon>0$,
  there exists a constant $\mu_0$ such that for every $\mu \geq \mu_0
  ,f(\mu)\leq \varepsilon. \log\log\log \mu$. Hence, we may assume
  that for any initial setting $IS(L)$ with a sufficientally large
  $\mu$, we can fix a constant $c>0$ to be defined later such that the
  size of the advice is less than $\left\lceil c.\log\log\log
  (\mu)\right\rceil$.  Let assume for the sake of contradiction that
  there exists an $\mathcal{O}$-universal gathering algorithm
  $\mathcal{A}$ whose time complexity with oracle $\mathcal{O}$ is
  less than $b.n^d|\lambda|^d$ for some integer constants $b, d\geq
  1$.

We define a set of initial settings $\mathcal{I}=\{IS(L_i)\mid i\in\mathbb{N}^*\}$.
Each initial setting $IS(L_i)\in \mathcal{I}$ corresponds to some ring $R_i$ of size $n_i\geq 3$, with nodes $x_0, x_1, \ldots, x_{n_i-1}$ arranged in clockwise order with exactly one agent on each node. Each node $x_j$ has two ports: port 0 leads to $x_{j+1}$, and port 1 leads to $x_{j-1}$ (indices are taken modulo $n_i$). Each initial setting $IS(L_i)$ is represented by a word $W_i$ of length $n_i$ on the alphabet of labels such that the $j$-th letter of $W_i$ corresponds to the label of the agent placed in node $x_{j-1}$ of $R_i$. We define words $W_i$ recursively. We set $W_1=112$ and so $IS(L_1)$ corresponds to the ring of size $3$ with the first two nodes each containing an agent with label $1$ and the third node containing an agent with label $2$. Let $P(n)=4.b.n^{d-1}$. For $i\in\mathbb{N}^*$, we set $W_{i+1}=(W_i)^{P(|W_i|)}.\ell$ with $\ell$ the label $i+2$. That is, $W_{i+1}$ is constructed by taking $P(|W_i|)$ copies of $W_i$ and adding the label $i+2$. First, we show some properties on each $L_i$ and $W_i$.

\begin{claim}
  \label{claim_L_i}
The following properties are true for all $i\in \mathbb{N}^*$ :

\begin{enumerate}
  \item $L_i$ is gatherable,
  \item $|W_{i+1}| = 4b|W_i|^{d}+1$ and
  \item the length of the binary representation of the smallest label in $L_i$ is 1.
\end{enumerate}
\end{claim}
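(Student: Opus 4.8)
The plan is to prove the three properties by induction on $i$, handling each property with a short argument. Property~2 is the simplest: it follows directly from the recursive definition $W_{i+1}=(W_i)^{P(|W_i|)}\cdot\ell$, since appending a word of length $|W_i|$ repeated $P(|W_i|)$ times and then one extra symbol gives $|W_{i+1}|=P(|W_i|)\cdot|W_i|+1$, and substituting $P(|W_i|)=4b|W_i|^{d-1}$ yields $|W_{i+1}|=4b|W_i|^{d}+1$. No induction is even needed here beyond unwinding the definition.

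For property~3, the key observation is that the smallest label in every $L_i$ is $1$: the base case $W_1=112$ contains the label $1$, and since $W_{i+1}$ is built from copies of $W_i$ (which already contains a $1$) plus a larger label, the smallest label remains $1$ throughout. The binary representation of $1$ is the single bit ``$1$'', whose length is $1$, so $|\lambda|=1$ for all $i$. I would state this explicitly so that later in the main proof one can treat $|\lambda|$ as a constant (which is what makes the polynomial time bound collapse to a pure polynomial in $n$).

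Property~1 is the substantive one and the place where I expect the main obstacle. By Theorem~\ref{theo:charac}, $L_i$ is gatherable if and only if its symmetry index (the gcd of all label multiplicities) equals $1$. So the task reduces to showing that $\gcd$ of the multiplicities in $L_i$ is $1$. The natural approach is to track, by induction, the multiplicity of the largest label. In $W_1=112$ the label $2$ has multiplicity $1$, so the gcd is already $1$. In the inductive step, $W_{i+1}=(W_i)^{P(|W_i|)}\cdot\ell$ introduces the brand-new label $\ell=i+2$ exactly once, so $\ell$ has multiplicity $1$ in $L_{i+1}$. Since one of the multiplicities is $1$, the gcd of all multiplicities is forced to be $1$, giving symmetry index $1$ and hence gatherability by Theorem~\ref{theo:charac}. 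The subtlety to be careful about is that $\ell=i+2$ is genuinely new—i.e. it does not already appear in $W_i$—which I would verify by a parallel induction showing that $W_i$ uses exactly the labels $\{1,2,\dots,i+1\}$, so that appending label $i+2$ always creates a fresh label of multiplicity one. Once this ``fresh label of multiplicity $1$'' invariant is established, gatherability is immediate.

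In summary, the whole claim is a routine induction whose only real content is the symmetry-index computation for property~1, and the only point requiring genuine care is verifying that each recursive step introduces a label that is both new and of multiplicity exactly one, thereby pinning the gcd of multiplicities at $1$. The other two properties are direct consequences of the recursive length formula and the persistence of the label $1$.
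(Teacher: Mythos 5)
Your proposal is correct and matches the paper's proof in substance: property~2 by unwinding the definition of $W_{i+1}$ and $P$, property~3 from the persistence of label $1$, and property~1 by observing that the largest label of $L_i$ has multiplicity exactly one, which forces the symmetry index to be $1$ and yields gatherability via Theorem~\ref{theo:charac}. The only difference is that you make explicit the (easy) check that the appended label $i+2$ is fresh, which the paper leaves implicit.
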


\begin{proofclaim}
\begin{enumerate}
\item For all $i\in\mathbb{N}^*$, $L_i$ has a symmetry index equal to $1$ since it contains exactly one instance of label $i+1$. It follows by Theorem~\ref{theo:charac} that $L_i$ is gatherable.
\item Observe that $|W_{i+1}|=P(|W_i|)|W_{i}|+1= 4b.|W_i|^{d-1}|W_i|+1= 4b.|W_i|^{d}+1$.
\item For all $i\in \mathbb{N}^*$, the smallest label of $L_i$ is 1 and
  so is the length of its binary representation.
\end{enumerate}
\end{proofclaim}

Let $i\in \mathbb{N}^*$.
By Claim~\ref{claim_L_i}, we have $|W_{i+1}|=4b|W_i|^{d}+1$. Hence, we have $|W_{i+1}|\leq |W_i|^{d+b+3}$.
Now, since $|W_1|=3$, we have $|W_{i}|\leq 3^{(d+b+3)^i}$ and so $\mu_i\leq 3^{(d+b+3)^i}$, where $\mu_i$ is the multiplicity index of $L_i$.
Hence, it follows that there exists a positive constant $c'$ such that we have $\log\log\mu_i \leq c'.i$ for all $i\geq 1$.
We fix the constant $c$ such that the number of binary words of length at most $\left\lceil c\log \left(m\right) \right\rceil $ is less than $\left\lfloor \frac{m}{c'}\right\rfloor$ for a sufficiently large positive integer $m$.
Let $\mathcal{I}_m=\left\{IS(L_i)\mid 1\leq i\leq \left\lfloor \frac{m}{c'}\right\rfloor\right\}$. Observe that for all $IS(L_i)\in\mathcal{I}_m$, $\log\log\mu_i \leq c'.i\leq c'\left\lfloor \frac{m}{c'}\right\rfloor\leq m$.
Hence, for $IS(L_i)\in \mathcal{I}_m$, the size of the advice $\mathcal{K}$ given by $\mathcal{O}$ is at most $\left\lceil c.\log\log\log (\mu_i)\right\rceil\leq \left\lceil c.\log m\right\rceil$.
By the choice of $c$, the number of different advices given by $\mathcal{O}$ to all $IS(L_i)\in\mathcal{I}_m$ is less than $\left\lfloor \frac{m}{c'}\right\rfloor$. Since $|\mathcal{I}_m| = \left\lfloor \frac{m}{c'}\right\rfloor$, it follows that there are two initial settings $IS(L_s)$ and $IS(L_q)$ in $\mathcal{I}_m$ that both receive the same advice. The remainder of the proof then consists in showing that the fact that the two initial settings $IS(L_s)$ and $IS(L_q)$ receive the same advice makes Algorithm $\mathcal{A}$ fails.
To that goal, we can assume w.l.o.g. that $s<q$. Observe that, by construction, $W_q$ contains occurrences of word $(W_s)^{P(|W_s|)}$; let $W$ be the first of these occurrences.  
Let $z_{i}$ be the node of $IS(L_s)$ corresponding to the $(i+1)$-th letter of the word $W_s$ and let $y_{j}$ be the node of $IS(L_q)$ corresponding to the $(j+1)$-th letter of the word $W$. Let $Z_i^r$ (resp. $Y_j^r$) denote the set of agents located at node $z_i$ (resp. $y_j$) in round $r$. In order to show a contradiction, we show the following claim.

\begin{claim}
  \label{claim_4_1}
The following property is true for all round $r$ such that $1\leq r\leq |W|/4$.

$H(r)$: For all $0 \leq i \leq |W_s|-1$ and all $r-1\leq j\leq |W|-r$ such that $i \equiv j \mod |W_s|$, there exists a bijection $f_{i,j}^r : Z_i^r \to Y_j^r$ such that for every agent $Z \in Z_i^r$, the memory state of $Z$ in round $r$ equals that of $f_{i,j}^r(Z)$.
\end{claim}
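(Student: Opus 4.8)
The plan is to prove $H(r)$ by induction on $r$, following exactly the blueprint of the indistinguishability argument used in the proof of Lemma~\ref{lem:charac1}, except that here the memory-preserving bijections are transferred \emph{between} the two distinct configurations $IS(L_s)$ and $IS(L_q)$ rather than within a single ring. The underlying intuition is a light-cone (causality) bound: since $W=(W_s)^{P(|W_s|)}$ is periodic with period $|W_s|$, the interior of the block $W$ in $R_q$ is locally identical to the whole ring $R_s$, and an agent can propagate information by at most one hop per round (even when two agents meet in round $r$, the met agent was at distance at most one in round $r-1$). Consequently, the window $r-1\le j\le |W|-r$ is precisely the set of positions whose $r$-round causal past is still confined to the periodic interior, so such agents cannot yet distinguish the block $W$ from an infinite periodic continuation, i.e. from the ring $R_s$.

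For the base case $H(1)$, I would use that in round $1$ every agent has memory $(\ell,\bot,\bot)$, where $\ell$ is its label, and that each $Z_i^1$ and $Y_j^1$ is a singleton. For $0\le i\le |W_s|-1$ and $0\le j\le |W|-1$ with $i\equiv j\pmod{|W_s|}$, periodicity of $W$ gives that the $(j+1)$-th letter of $W$ equals the $(i+1)$-th letter of $W_s$; hence the unique agents at $z_i$ and $y_j$ carry the same label and the same initial memory, and the trivial bijection witnesses $H(1)$. For the inductive step, assume $H(r)$ with $r+1\le |W|/4$, fix a position $j$ in the round-$(r+1)$ window $r\le j\le |W|-(r+1)$, and take $i\equiv j\pmod{|W_s|}$. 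The decisive bookkeeping step is to verify that the three positions $j-1,j,j+1$ (and correspondingly $i-1,i,i+1$ taken modulo $|W_s|$) all lie in the round-$r$ window $r-1\le\,\cdot\,\le |W|-r$; this is immediate from the inequalities defining the round-$(r+1)$ window, and it is exactly where the shrinking of the window by one unit per round is consumed.

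I would then partition $Y_j^{r+1}$ into the agents that were at $y_j$ and waited, those that entered from $y_{j-1}$ via port $0$, and those that entered from $y_{j+1}$ via port $1$ (a ring node has degree $2$, so these cases are exhaustive and disjoint), and partition $Z_i^{r+1}$ analogously relative to $z_i,z_{i-1},z_{i+1}$. Since the algorithm is deterministic, the bijections $f^r_{i,j}$, $f^r_{i-1,j-1}$, $f^r_{i+1,j+1}$ supplied by $H(r)$ restrict to bijections between the corresponding parts (agents with equal round-$r$ memories take equal actions, a declaration of termination simply removing matched pairs on both sides). Gluing the three restrictions yields a bijection $f^{r+1}_{i,j}:Z_i^{r+1}\to Y_j^{r+1}$ that preserves both round-$r$ memories and round-$r$ actions. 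It then remains to check that it preserves round-$(r+1)$ memories, and this I would carry out verbatim as in Lemma~\ref{lem:charac1}: writing $M_X(r+1)=(M_X(r),A_X(r+1),B_X(r+1))$, the first component is preserved by construction, $A_X(r+1)$ depends only on the common degree (equal to $2$ in both rings) and the common action, and $B_X(r+1)=\{(A_Y(r+1),M_Y(r)):Y\in Y_j^{r+1}\setminus\{X\}\}$ is preserved because $f^{r+1}_{i,j}$ respects round-$r$ memories and round-$r$ actions.

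I expect the only genuinely delicate point to be the window/causality bookkeeping rather than the memory-preservation computation, which is routine. In particular, one must make sure that although the small ring $R_s$ genuinely wraps around—so an agent there can circle the entire ring while its counterpart in $R_q$ keeps moving straight through the block—this never creates a discrepancy: for every interior position the whole causal past stays inside the periodic block, and the global inductive hypothesis $H(r)$ supplies matching neighbors on both sides, so wrapping is absorbed by periodicity. Confirming that the neighbor indices $j-1,j,j+1$ remain in range for every $j$ in the round-$(r+1)$ window, together with the hypothesis $r\le |W|/4$ (which keeps the window nonempty throughout), is the crux of the argument.
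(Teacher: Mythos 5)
Your proof is correct and follows essentially the same route as the paper's: induction on $r$ with the window $[r-1,\,|W|-r]$ shrinking by one index on each side per round so that the causal past of every interior position stays inside the periodic block, combined with the action-partition and memory-preservation computation lifted verbatim from the proof of Lemma~\ref{lem:charac1}. If anything, your write-up makes explicit the three-way partition (wait / enter via port $0$ / enter via port $1$) and the index check $j-1,j,j+1\in[r-1,|W|-r]$ that the paper's inductive step only sketches.
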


\begin{proofclaim}
We will show $H(r)$ by induction on $r$. First, we show the base
case $H(1)$ of the induction.  Let $i\in \{0, \dots, |W_s|-1\}$ and
$j\in \{0, \dots, |W|-1\}$ such that $i \equiv j \mod |W_s|$.  The
nodes $z_{i}$ and $y_{j}$ both starts with exactly one agent having
the $(i+1)$-th letter of $W_s$ as  label.  Since at round $1$ the
memory of an agent only contains its label, the singleton sets $Z_i^1$ and $Y_j^1$ contain agents with
identical memory. The bijection $f_{i,j}^1$ is trivial in this case,
so $H(1)$ holds.

Assume $H(r)$ holds for some round $r$ such that $1\leq r\leq
|W|/4-1$. We show that $H(r+1)$ also holds.
By the inductive hypothesis $H(r)$, for all $i\in \{0, \dots,
|W_s|-1\}$ and all $j\in \{r-1, \dots, |W|-r\}$ such that $i \equiv j
\mod |W_s|$, there exists a bijection $f_{i,j}^r : Z_i^r \to Y_j^r$
such that for every agent $Z \in Z_i^r$, the memory state of $Z$ in
round $r$ equals that of $f_{i,j}^r(Z)$.
Now, $\{r-1, \dots, |W|-r\} = \{r-1\} \cup \{(r+1)-1, \dots, |W|-(r+1)\} \cup \{|W|-r\}$. So, for all
%
$k \in \{(r+1)-1, \dots, |W|-(r+1)\}$,
%
$\forall Y \in Y_k^{r+1}$, $Y \in \cup_{j \in \{r-1, \dots, |W|-r\}} Y_j^{r}$.
Since the algorithm is deterministic, agents with the same advice and the same memory take the same action.
Thus, from the bijections $f_{i,j}^r$ (with $i\in \{0, \dots,
|W_s|-1\}$, $j\in \{r-1, \dots, |W|-r\}$, and $i \equiv j \mod
|W_s|$), we can deduce that for all $i\in \{0, \dots, |W_s|-1\}$ and
all $k \in \{(r+1)-1, \dots, |W|-(r+1)\}$ such that $i \equiv k \mod
|W_s|$, there exists a bijection $f_{i,k}^{r+1} : Z_i^{r+1} \to
Y_j^{r+1}$ such that for every agent $Z \in Z_i^{r+1}$, the memory
state of $Z$ in round $r+1$ equals that of $f_{i,k}^{r+1}(Z)$, and we
are done.
\end{proofclaim}

Let $\lambda_s$ be the smallest label of $L_s$.  There exists $i \in
\{0, \dots, |W_s|-1\}$ such that agents in $IS(L_s)$ must declare
termination at node $z_i$ at some round $r\leq|W|/4$.  Indeed,
algorithm $\mathcal{A}$ applied on $IS(L_s)$ has time complexity less
than $b.(n_s)^d.(|\lambda_s|)^d=b.(n_s)^d=n_s.P(n_s)/4= |W|/4$ as
$|\lambda_s|=1$ by Claim~\ref{claim_L_i}. Since $|W|-2r+1 \geq
\frac{|W|}{2}\geq \frac{P(|W_s|)|W_s|}{2}\geq 2|W_s|$, there are two
distinct integers $j$ and $j'$ such that $r-1\leq j,j'\leq |W|-r$, $i
\equiv j \mod |W_s|$ and $i \equiv j' \mod |W_s|$.  By
Claim~\ref{claim_4_1}, there exist two bijections $f_{i,j}^r$ and
$f_{i,j'}^r$ such that for every $Z \in Z_i^r$, the memory state of
$Z$ at round $r$ equals that of $f_{i,j}^r(Z)$ and $f_{i,j'}^r(Z)$. It
follows that $f_{i,j}^r(Z)$ and $f_{i,j'}^r(Z)$ must also declare
termination at round $r$ when algorithm $\mathcal{A}$ applied on
$IS(L_q)$. Since $f_{i,j}^r(Z)$ and $f_{i,j'}^r(Z)$ are on distinct
nodes $y_j$ and $y_{j'}$, we obtain a contradiction with the fact that
$\mathcal{A}$ is an $\mathcal{O}$-universal gathering algorithm.
\end{proof}

\section{Conclusion}

Until now, the problem of gathering had been studied at two extremes: either when agents are equipped with pairwise distinct labels, or when they are entirely anonymous. Our paper bridged this gap by considering the more general context of labeled agents with possible homonyms. In this context, we fully characterized the teams that are gatherable and we analyzed the question of whether a single algorithm can gather all of them in time polynomial  in $n$ and $|\lambda|$. This question was natural given the well-known poly$(n,|\lambda|)$-time lower bound of \cite{DessmarkFKP06} for gathering just two agents with distinct labels without initial common knowledge. On the negative side, we showed that no algorithm can gather every gatherable team in poly$(n,|\lambda|)$ time, even if they initially share $o(\log \log \log \mu)$ bits of common knowledge, where $\mu$ is the multiplicity index of the team. On the positive side, we designed an algorithm that gathers all of them in poly$(n,|\lambda|)$ time using only $O(\log \log \log \mu)$ bits of common knowledge.

As a general open question, it is legitimate to ask what happens to complexity, or even simply feasibility, when we vary the amount of shared knowledge across the full spectrum that ranges from no knowledge to complete knowledge of the initial setting. Actually, using the same arguments as in the proof of Theorem~\ref{theo:neg}, we can show that no algorithm can gather all gatherable teams if the agents initially share no knowledge about their initial setting.\footnote{For any algorithm $\mathcal{A}$ supposedly achieving this, one can construct two initial settings $IS(L)$ and $IS(L')$, around two gatherable teams $L$ and $L'$, that would necessarily require two initial different pieces of advice from an oracle to be solved by $\mathcal{A}$. Precisely, this can be done using a design principle similar to the one employed to construct the set of initial settings $\mathcal{I}$ in the proof of Theorem~\ref{theo:neg}. This yields a direct contradiction with the fact that $\mathcal{A}$ can work without any common knowledge.} But what about the rest of the spectrum? Are there sharp threshold effects or only smooth trade-offs? This is an interesting but challenging avenue for future work.

In the classical scenario where all agent labels are pairwise distinct, we obtained the first poly$(n,|\lambda|)$-time algorithm that requires no common knowledge to gather teams of arbitrary size. This result closed a fundamental open problem when agents traverse edges in synchronous rounds. Intriguingly, a similar problem remains open when the assumption of synchrony is removed, and agents instead traverse edges at speeds that an adversary may arbitrarily vary over time (as in \cite{CzyzowiczPL12,DieudonnePV15,MarcoGKKPV06}). It is then no longer a matter of determining complexity in terms of time (since time is entirely controlled by an adversary), but rather in terms of cost, that is, the total number of edge traversals performed by all agents in order to gather. Specifically, with no initial common knowledge, the cost of asynchronous gathering is lower-bounded by a polynomial in $n$ and $|\lambda|$ (cf. \cite{MarcoGKKPV06}), but no algorithm guaranteeing such a polynomial cost complexity under the same knowledge constraint is known, except in the special case of two agents \cite{DieudonnePV15}. As in the synchronous model, the challenge of generalization is closely tied to termination detection. However, the tools we used to tackle this rely on synchronicity, especially waiting periods, which lose their meaning when agents move asynchronously. Thus, developing new techniques to address this problem in this hasher environment constitutes another interesting direction for future research.

\newpage
\bibliographystyle{plain}
\bibliography{biblio}

\end{document}